\newcommand\doubleplus
\newcommand{\lm}[1]{\textbf{\color{red}[LM: #1]}}
\definecolor{darkblue}{RGB}{0,0,125}
\newcommand{\sm}[1]{\textbf{\color{darkblue}[SM: #1]}}
\newcommand{\ig}[1]{\textbf{\color{blue}[IG: #1]}}
\newcommand{\tg}[1]{\textbf{\color{orange}[TG: #1]}}
\newcommand{\kt}[1]{\textbf{\color{purple}[KT: #1]}} 
\newcommand{\so}[1]{\textbf{\color{olive}[SO: #1]}}
\definecolor{darkgreen}{RGB}{0,125,0}
\newcommand{\ml}[1]{\textbf{\color{darkgreen}[ML: #1]}}
\renewcommand{\lm}[1]{}
\renewcommand{\sm}[1]{}
\renewcommand{\ig}[1]{}
\renewcommand{\tg}[1]{}
\renewcommand{\kt}[1]{} 
\renewcommand{\so}[1]{}
\renewcommand{\ml}[1]{}
\theoremstyle{plain}
\newtheorem{theorem}{Theorem}[section]
\newtheorem{lemma}[theorem]{Lemma}
\theoremstyle{definition}
\theoremstyle{remark}
\newtheorem{remark}[theorem]{Remark}
\icmltitlerunning{Game Theoretic Rating in N-player general-sum games with Equilibria}
\begin{document}

\twocolumn[
\icmltitle{Game Theoretic Rating in N-player general-sum games with Equilibria}

% It is OKAY to include author information, even for blind
% submissions: the style file will automatically remove it for you
% unless you've provided the [accepted] option to the icml2022
% package.

% List of affiliations: The first argument should be a (short)
% identifier you will use later to specify author affiliations
% Academic affiliations should list Department, University, City, Region, Country
% Industry affiliations should list Company, City, Region, Country

% You can specify symbols, otherwise they are numbered in order.
% Ideally, you should not use this facility. Affiliations will be numbered
% in order of appearance and this is the preferred way.
\icmlsetsymbol{equal}{*}

\begin{icmlauthorlist}
\icmlauthor{Luke Marris}{deepmind,ucl}
\icmlauthor{Marc Lanctot}{deepmind}
\icmlauthor{Ian Gemp}{deepmind}
\icmlauthor{Shayegan Omidshafiei}{google}
\icmlauthor{Stephen McAleer}{cmu}
\icmlauthor{Jerome Connor}{deepmind}
\icmlauthor{Karl Tuyls}{deepmind}
\icmlauthor{Thore Graepel}{ucl}
\end{icmlauthorlist}

\icmlaffiliation{deepmind}{DeepMind}
\icmlaffiliation{google}{Google}
\icmlaffiliation{ucl}{University College London}
\icmlaffiliation{cmu}{Carnegie Mellon University}

\icmlcorrespondingauthor{Luke Marris}{marris@deepmind.com}

% You may provide any keywords that you
% find helpful for describing your paper; these are used to populate
% the "keywords" metadata in the PDF but will not be shown in the document
\icmlkeywords{Machine Learning, ICML}

\vskip 0.3in
]

% this must go after the closing bracket ] following \twocolumn[ ...

% This command actually creates the footnote in the first column
% listing the affiliations and the copyright notice.
% The command takes one argument, which is text to display at the start of the footnote.
% The \icmlEqualContribution command is standard text for equal contribution.
% Remove it (just {}) if you do not need this facility.

%\printAffiliationsAndNotice{}  % leave blank if no need to mention equal contribution
\printAffiliationsAndNotice{\icmlEqualContribution} % otherwise use the standard text.

\begin{abstract}
Rating strategies in a game is an important area of research in game theory and artificial intelligence, and can be applied to any real-world competitive or cooperative setting. Traditionally, only transitive dependencies between strategies have been used to rate strategies (e.g. Elo), however recent work has expanded ratings to utilize game theoretic solutions to better rate strategies in non-transitive games. This work generalizes these ideas and proposes novel algorithms suitable for N-player, general-sum rating of strategies in normal-form games according to the payoff rating system. This enables well-established solution concepts, such as equilibria, to be leveraged to efficiently rate strategies in games with complex strategic interactions, which arise in multiagent training and real-world interactions between many agents. We empirically validate our methods on real world normal-form data (Premier League) and multiagent reinforcement learning agent evaluation.
\end{abstract}

\section{Introduction}

Traditionally, rating systems assume transitive dependencies of strategies in a game (such as Elo \citep{elo1978_rating} and TrueSkill \citep{herbrich2007_trueskill}). That is, there exists an unambiguous ordering of all strategies according to their relative strengths. This ignores all other interesting interactions between strategies, including cycles where strategy S beats P beats R beats S in the classic game of Rock, Paper, Scissors (Table~\ref{tab:dwayne_pen_sword_rock_paper_scissors}). Many interesting games have this so-called ``strategic'' dimension \citep{czarnecki2020_spinning}, or ``gamescapes''~\citep{balduzzi2018_nashaverage}, that cannot be captured by pairwise transitivity constraints.

Game theoretic rating of strategies is an emerging area of study which seeks to overcome some of these drawbacks. These methods can be employed in normal-form games, or in empirical games, constructed normal-form games where strategies are policies competing in a multiagent interaction (e.g. a simulation or a game) and the payoffs are approximate expected returns of the players employing these policies \citep{wellman2006_egta, walsh2002_egta, tuyls2020_bounds_dynamics_egta}.

The Nash Average (NA) \citep{balduzzi2018_nashaverage} algorithm proposed a way of rating strategies in two-player, zero-sum, normal-form games.  This approach is known as maximal lottery \citep{kreweras1965_maximal_lottery,fishburn1984_maximal_lottery} in social choice theory, where it first arose, and is so fundamental it has been rediscovered across many fields \citep{brandt2017_maximal_lottery}. In particular, NA proposed two applications of rating: agent-vs-agent interactions and agent-vs-task interactions. NA possesses a number of interesting properties: its ratings are invariant to strategy duplication, and it captures interesting non-transitive interactions between strategies. However, the technique is difficult to apply outside of two-player, zero-sum domains due to computational tractability and equilibrium selection difficulties. More recent work, $\alpha$-Rank \citep{omidshafiei2019_alpharank}, sought to remedy this by introducing a novel computationally feasible solution concept based on the stationary distribution a discrete-time evolutionary process. Its main advantages concerns its uniqueness and efficient computation in N-player and general-sum games.

This work expands game theoretic rating techniques to established equilibrium concepts correlated equilibrium (CE) \citep{aumann1974_ce}, and coarse-correlated equilibrium (CCE) \citep{moulin1978_cce}. In Section~\ref{sec:prelim} we provide background to rating algorithms, game theory, and equilibrium based solution concepts. In particular, we describe CEs and CCEs that are suitable in the N-player, general-sum setting. In Section~\ref{sec:generalized_rating} we define a novel general rating definition: payoff rating, which is equivalent to NA if the game is two-player zero-sum. Payoff rating is the expected payoff under a joint distribution, conditioned on taking a certain strategy. The choice of joint distribution is what provides payoff ratings with its interesting properties. In Section~\ref{sec:algorithms} we suggest joint distributions to parameterize game theoretic rating algorithms. In Section~\ref{sec:experiments} we test these algorithms on instances of N-player, general-sum games using real-world data. Finally, Section~\ref{sec:discussion} is a discussion of the connections of this work to other areas of machine learning and the relevance of the work to machine learning.

\section{Preliminaries}
\label{sec:prelim}

First we cover background on game theory (Section~\ref{subsec:egta}), rating and ranking (Section~\ref{subsec:rating_and_ranking}), equilibria (Section~\ref{subsec:equilibria}), and game theoretic rating (Section~\ref{subsec:game_theoretic_rating}).
\subsection{Empirical Game Theory}
\label{subsec:egta}

A normal-form game is a single step game where all players take an action simultaneously, and each receive a payoff. With n players, we denote a joint strategy as the tuple $a=(a_1, ..., a_n) \in \mathcal{A}$, where $a_p \in \mathcal{A}_p$ is the strategy space available to player $p$. The payoff given to player $p$ from a joint strategy $a$ is $G_p(a)$. A player's objective is to maximize their payoff in the presence of other players, who are maximizing their own payoffs.

It is possible to construct normal-form game representations from observations of much larger systems. This process is known as empirical game theoretic analysis (EGTA) \citep{wellman2006_egta, walsh2002_egta}. The most common example in artificial intelligence is when studying a set of policies\footnote{A mapping from information states to actions.} in large extensive form games (e.g., Chess or Go). Often the set of policies is too large to enumerate entirely so we retain a subset of them and track their performance against one another, therefore constructing a normal-form game from the policies' expected returns within the environment \citep{lanctot2017_psro}. For example, a given season of the Premier League can be modeled as a normal-form game involving a set of 20 team policies, out of many possible football teams\footnote{The full permutation of all players and coaches in the world.}. EGTA has proved essential to multiagent reinforcement learning (MARL) recently in scaling to human-level StarCraft \citep{vinyals2019_starcraft}.

\subsection{Rating and Ranking}
\label{subsec:rating_and_ranking}

Ranking is the problem of assigning a partial ordering to a set. Rating is the more general problem of assigning a scalar value to each element of a set, which then could be used to describe a ranking. The simplest rating procedure is to take the mean performance of a strategy against all other strategies. Viewed through a game theoretic lens, this is equivalent to assuming each opponent strategy is equally likely to be encountered: the opponent player is playing a uniform distribution. The key drawback of this approach is that it is heavily influenced by the strategies available, and that an opponent player, in practice, will favour their strongest strategies. This argument is made thoroughly in \citet{balduzzi2018_nashaverage}.

Perhaps the best known rating algorithm is Elo \citep{elo1978_rating}. It is used to rate players in many two-player, zero-sum games, most famously in Chess. It is also commonly used for evaluating artificial agents in two-player settings \citep{silver2016_mastering,schrittwieser2020_muzero_arxiv}. The skill (Elo) of each competitor, $a_1$, is described using a single variable $r(a_1)$ which maps to a win probability compared to other competitors, $a_2$, $G_1(a_1, a_2) = (1- 10^\frac{r(a_1) - r(a_2)}{400})^{-1}$. This therefore defines a two-player, symmetric, constant-sum game where competitors are strategies, with payoff defined as $G_2 = 1 - G_1 = G_1^T$. It is only suitable for describing highly transitive games.  Multi-dimensional Elo \citep{balduzzi2018_nashaverage} is a method that attempts to overcome the limitations of Elo by introducing additional variables for each competitor which describe cyclic structure in the game. This gives a more accurate approximation of the payoff, however it does not provide a way of rating strategies on its own\footnote{It assigns vectors, not scalars, to strategies.}. Decomposed Elo \citep{jaderberg2019_ctf} is a method that works for $m$ vs $m$ constant-sum, team games. It is capable of assigning a rating to each competitor in the team as well as a rating for the team. However, it is also only suitable for transitive games, both where team compositions and between-team interactions are transitive.

\subsection{Equilibria}
\label{subsec:equilibria}

The joint strategy distribution over the set of all joint strategies is denoted $\sigma(a) = \sigma(a_1, ..., a_n)$, where $\sum_{a \in \mathcal{A}} \sigma(a) = 1$ and $\sigma(a) \geq 0 \,\, \forall a \in \mathcal{A}$. Furthermore we define the marginal strategy distribution as $\sigma(a_p) = \sum_{q \in -p} \sum_{a_q \in \mathcal{A}_q} \sigma(a_1, ..., a_n)$, where ${-p} = \{1, ..., p-1, p+1, ..., n\}$. Sometimes it is possible to factorize the joint strategy distribution into its marginals $\sigma(a)=\otimes_p \sigma(a_p)$. Finally, the conditional distribution $\sigma(a_{-p}|a_p) = \frac{\sigma(a)}{\sigma(a_p)}$ is defined if $\sigma(a_p) > 0$. Sometimes we may denote the joint in terms of one players strategies versus all other players: $\sigma(a) \equiv \sigma(a_1, ..., a_n) \equiv \sigma(a_p, a_{-p})$.

A popular class of solution concepts are equilibrium based: joint distributions, $\sigma(a)$, where under certain definitions, no player has incentive to deviate. The most well known is Nash equilibrium \citep{nash1951_neq} (NE), which is tractable, interchangeable and unexploitable in two-player, zero-sum games~\cite{shohambrown2009_book}. NEs are always factorizable joint distributions. A related solution concept is correlated equilibrium \citep{aumann1974_ce} (CE) which is more suitable for N-player, general-sum settings where players are allowed to coordinate strategies with each other if it is mutually beneficial. Furthermore, CEs are more compatible with the Bayesian perspective, and arise as a result of learning rules~\cite{foster1997_calibrated_ce,bianchi2006_prediction_learning_and_games_book}. The mechanism of implementing a CE is via a correlation device which samples a joint strategy from a known public distribution and recommends the sampled strategy secretly to each player. A distribution is in correlated equilibrium if no player is incentivised to unilaterally deviate from the recommendation after receiving it. CEs that are factorizable are also NEs. An additional solution concept, the coarse correlated equilibrium \citep{moulin1978_cce} (CCE), requires players to commit to the recommendation before it has been sampled. It is less computationally expensive and permits even higher equilibrium payoffs. These sets are related to each other $\text{NE} \subseteq \text{CE} \subseteq \text{CCE}$.
%\ml{I might add some reason to care about CCEs: they arise from decentralized no-regret learning, and since their set is larger the maximum social welfare payoffs can be bigger. Ah! I see you have that below.. I would move that up here.} \ml{I would move this paragraph above to where you first talk about CCEs. Also I can fill in these missing citations.} \lm{Please fill citations where you can.}
% No citations needed for this but I can add some if we really want.
The empirical average policy of no-regret learning algorithms in self-play are known to converge to CCEs~\cite{foster1997_calibrated_ce,hart2000_ce}.
% therefore it may be possible to find ratings without requiring to maintain full payoffs and distributions. \lm{TODO?}.

All these equilibria have approximate forms which are parameterized by the approximation parameter $\epsilon$ which describes the maximum allowed incentive to deviate to a best response (across all players). There are two common methods of defining an approximate equilibrium: the standard approximate equilibrium~\cite{shohambrown2009_book} describes the bound on incentive to deviate under the joint, and the well-supported (WS) approximate equilibrium \citep{papadimitriouandgoldberg2006_well_supported_epsilon_eq} describes the bound on incentive to deviate under the conditionals. When $\epsilon = 0$, these definitions become equivalent. The standard method has the property that any $\epsilon > \epsilon^{\min}$ will permit a full-support equilibrium (Section~\ref{app:equilibria}), where $\epsilon^{\min} \leq 0$ is the minimum $\epsilon$ that permits a feasible solution in a game. Each player may have individual tolerances to deviation, $\epsilon_p$. For the rest of this work we only consider the standard definition, similar derivations can be adopted for the well-supported definition.

CEs can be defined in terms of linear inequality constraints, defined in terms of the deviation gain of a player: $A^\text{CE}_p(a'_p, a_p, a_{-p}) = G_p(a'_p, a_{-p}) - G_p(a_p, a_{-p})$, $a'_p \neq a_p \in \mathcal{A}_p, a_p \in \mathcal{A}_p, \forall p$, where each constraint corresponds to a pair of strategies: $a_p$ deviating to $a'_p$.
\begin{align}
    \text{$\epsilon$-CE:}& & \smashoperator{\sum_{a_{-p} \in \mathcal{A}_{-p}}} \sigma(a_{p}, a_{-p}) A^\text{CE}_p(a'_p, a_p, a_{-p}) &\leq \epsilon_p \label{eq:ce_con}
\end{align}

CCEs can be derived from the CE definition by summing over strategies available to a player: $\sum_{a_p} \sigma(a_p) (\cdot)$, therefore there is only a constraint for each possible deviation $a'_p \in \mathcal{A}_p, \forall p$ with a deviation gain of $A^\text{CCE}_p(a'_p, a) = G_p(a'_p, a_{-p}) - G_p(a)$.
\begin{align}
    \text{$\epsilon$-CCE:}& & \smashoperator{\sum_{a \in \mathcal{A}}} \sigma(a_p, a_{-p}) A^\text{CCE}_p(a'_p, a) &\leq \epsilon_p \label{eq:cce_con}
\end{align}

NEs have similar definitions to CEs but have an extra constraint: the joint distribution factorizes $\otimes_p \sigma(a_p) = \sigma(a)$, resulting in nonlinear constraints\footnote{This is why NEs are harder to compute than (C)CEs.}.
\begin{align}
    \text{$\epsilon$-NE:}& & \smashoperator{\sum_{a_{-p} \in \mathcal{A}_{-p}}} \otimes_{q} \sigma(a_q) A^\text{CE}_p(a'_p, a_p, a_{-p}) &\leq \epsilon_p \label{eq:ne_con}
\end{align}

When a distribution is in equilibrium, no player has incentive to \emph{unilaterally} deviate from it to achieve a better payoff. There can however be many equilibria in a game, choosing amongst these is known as the \emph{equilibrium selection problem} \citep{harsanyi1988_eq_selection}.

For NEs it has been suggested to use a maximum entropy criterion (MENE) \citep{balduzzi2018_nashaverage}, which always exists and is unique in two-player, constant-sum settings. Another strategy is to regularize the NE of the game with Shannon entropy resulting in the quantal response equilibrium (QRE) \citep{mckelvey1995_qre_lle}. There exists a continuum of QREs starting at the uniform distribution, finishing at the limiting logit equilibrium (LLE), which is unique for almost all games. Solvers \citep{gemp2021_adidas_arxiv} can find LLEs, even in scenarios with stochastic payoffs.

(C)CEs permit a convex polytope of valid solutions which are defined by their linear inequality constraints. Convex functions can be used to uniquely select from this set. Multiple objectives have been proposed to select from the set of valid solutions including maximum entropy, $-\sum_a \sigma(a) \ln(\sigma(a))$ (ME(C)CE) \citep{ortix2007_mece}, maximum Gini, $\sum_a 1 - \sigma(a)^2$ (MG(C)CE) \citep{marris2021_jpsro_arxiv}, and maximum welfare\footnote{Not convex and hence not always unique.}, $\sum_p  \sum_a \sigma(a) G_p(a)$ (MW(C)CE).

\subsection{Game Theoretic Rating}
\label{subsec:game_theoretic_rating}

It is natural to formulate rating problems as rating strategies in a normal-form game. For example, a football league is a symmetric, two-player game where strategies are teams and payoffs are win probabilities or points. Teams can therefore be ranked by analysing this empirical game. Single player multi-task reinforcement learning can be formulated as a game between an agent player and an environment player, with strategies describing different policies and different tasks respectively \citep{balduzzi2018_nashaverage}. We call the problem of assigning a rating to each player's strategies the \emph{game theoretic rating problem}. While any joint distribution can be used to calculate a rating (Section~\ref{sec:generalized_rating}), game theoretic distributions, such as ones that are in equilibrium have a number of advantages.

Nash Average \citep{balduzzi2018_nashaverage} leverages the properties of NE to rate strategies in two-player, zero-sum games according to the payoff rating definition (Section~\ref{subsec:payoff_rating}). This approach can also be used to rate relative strengths between populations of strategies in sub-games: relative population performance (RPP) \citep{balduzzi2019_open, czarnecki2020_spinning}. It also has an interesting property that it is invariant to strategy repeats.

Using NEs to rate general-sum games has not been explored, however LLEs \citep{mckelvey1995_qre_lle} could be leveraged to select an equilibrium that is unique in \emph{almost} all games. However it is difficult to compute and sophisticated solvers such as gambit-logit \citep{turocy2005_dynamic,mckelvey2014_gambit} do not scale well to large N-player games \citep{gemp2021_adidas_arxiv}. In contrast, (C)CEs which have not been considered as rating algorithms until now, a) have a convex optimization formulation, b) have unique principled equilibrium selection, c) can capture coordination in games, d) are established and understood, and e) have a number of interesting rating properties.

\section{Game Theoretic Rating}
\label{sec:generalized_rating}

We introduce a novel generalized rating for N-player, general-sum: the \emph{payoff rating}. The definition functions for arbitrary joint strategy distributions, however we propose using equilibrium distributions to ensure the rating is game theoretic.

\subsection{Payoff Rating}
\label{subsec:payoff_rating}

The rating is defined terms of the payoff, $G_p$, and the joint distribution players are assumed to be playing under, $\sigma$. We define the payoff rating:
\begin{align}
    r^\sigma_p(a_p) &= \frac{\partial}{\partial \sigma(a_p)} ~~ \smashoperator{\sum_{a \in \mathcal{A}}} G_p(a) \sigma(a) \\
     &= \frac{\partial}{\partial \sigma(a_p)} ~~ \smashoperator{\sum_{a_{-p} \in \mathcal{A}_{-p}}} G_p(a_p, a_{-p}) \sigma(a_{-p}|a_p) \sigma(a_p) \nonumber \\
    &= \smashoperator{\sum_{a_{-p} \in \mathcal{A}_{-p}}} G_p(a_p, a_{-p}) \sigma(a_{-p}|a_p) \label{eq:payoff_rating}
\end{align}

\begin{theorem}(Nash Average Equivalence)
When using an MENE for the joint strategy distribution in two-player, zero-sum games, payoff rating is equivalent to Nash Average (NA).
\end{theorem}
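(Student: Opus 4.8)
The plan is to reduce the claim to the single structural fact that, in two-player zero-sum games, every Nash equilibrium is a product distribution, $\sigma(a_1, a_2) = \sigma(a_1) \otimes \sigma(a_2)$. This is immediate from the factorization constraint built into the $\epsilon$-NE definition \eqref{eq:ne_con} (taken at $\epsilon = 0$), and is the same property underlying the interchangeability of Nash equilibria in zero-sum games noted in \Cref{subsec:equilibria} (``NEs are always factorizable joint distributions''). Recall that Nash Average rates a strategy $a_p$ of player $p$ by the expected payoff it earns when the opposing player plays their component of the maximum-entropy Nash equilibrium, i.e.\ $\sum_{a_{-p}} G_p(a_p, a_{-p}) \sigma(a_{-p})$ where $\sigma(a_{-p})$ is the opponent's Nash marginal; the maximum-entropy selection is what makes this rating unique in the two-player constant-sum case.

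First I would substitute the product form into the payoff rating. For any $a_p$ with $\sigma(a_p) > 0$ the conditional collapses, $\sigma(a_{-p} \mid a_p) = \sigma(a)/\sigma(a_p) = \sigma(a_{-p})$, so \eqref{eq:payoff_rating} gives $r^\sigma_p(a_p) = \sum_{a_{-p}} G_p(a_p, a_{-p}) \sigma(a_{-p})$, which is exactly the Nash Average rating. For strategies outside the support of a player's marginal I would fall back on the defining partial-derivative form: under factorization $\sum_a G_p(a)\sigma(a) = \sum_{a_p} \sigma(a_p) \bigl( \sum_{a_{-p}} G_p(a_p,a_{-p}) \sigma(a_{-p}) \bigr)$, and differentiating with respect to $\sigma(a_p)$ still yields $\sum_{a_{-p}} G_p(a_p,a_{-p})\sigma(a_{-p})$, so the identity holds for every strategy.

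It then remains to check that the distribution entering the payoff rating is the one Nash Average uses, namely the maximum-entropy Nash equilibrium, which is exactly the hypothesis of the theorem. I would additionally observe that the selection is consistent at the level of marginals: the Nash set of a two-player zero-sum game is a Cartesian product of the two players' convex optimal-strategy polytopes, and the joint entropy $H(\sigma(a_1)\otimes\sigma(a_2)) = H(\sigma(a_1)) + H(\sigma(a_2))$ separates, so maximizing it picks the maximum-entropy marginal for each player independently --- precisely the strategies Nash Average plays against. Combining this with the previous paragraph gives $r^\sigma_p(a_p) = \text{NA}_p(a_p)$ for all $p$ and all $a_p$.

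The main obstacle is not the algebra but the degenerate case: when the maximum-entropy Nash equilibrium does not have full support, the conditional $\sigma(a_{-p}\mid a_p)$ in \eqref{eq:payoff_rating} is formally undefined on the unsupported strategies, and one must either appeal to the partial-derivative definition (as above) or to a continuity/limiting argument to assign those strategies a rating. Some care is also needed to match Nash Average's exact normalization convention (any additive shift, and the symmetric constant-sum specialization) so that the conclusion is an exact equality rather than an equality up to a constant.
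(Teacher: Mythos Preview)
Your proposal is correct and follows the same core idea as the paper: use the factorization of any Nash equilibrium to reduce $\sigma(a_{-p}\mid a_p)$ to $\sigma(a_{-p})$, at which point the payoff rating \eqref{eq:payoff_rating} becomes the Nash Average formula. The paper's own proof is essentially just that two-line observation, so your additional care about zero-support strategies, the separability of entropy over the product Nash set, and normalization conventions goes beyond what the paper actually argues; these are welcome refinements rather than a different route.
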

\begin{proof}
For NE, a player's strategies are independent from the other player's strategies, $\sigma(a_2 | a_1) = \sigma(a_2)$. Therefore $r^\sigma_1(a_1) = \sum_{a_2 \in \mathcal{A}_2} G_1(a_1, a_2) \sigma(a_2)$ and $r^\sigma_2(a_2) = \sum_{a_1 \in \mathcal{A}_1} G_2(a_1, a_2) \sigma(a_1)$, which is the definition of NA.
\end{proof}

This definition has two interpretations: a) the change in the player's payoff under a joint strategy distribution, $\sum_{a \in \mathcal{A}} G_p(a)\sigma(a)$, with respect to the probability of selecting that strategy, $\sigma(a_p)$ b) the expected strategy payoff under a joint strategy distribution conditioned on that strategy. When defined, the payoff rating is bounded between the minimum and maximum values of a strategy's payoff, $\min_a G_p(a_p, a_{-p}) \leq r^\sigma_p(a_p) \leq \max_a G_p(a_p, a_{-p})$.

Note the mathematical edge case that strategies with zero marginal probability, $\sigma(a_p)=0$, have undefined conditional probability, $\sigma(a_{-p}|a_p)$, and therefore have undefined payoff rating. Consider a symmetric two-player zero-sum transitive game where strategy $S$ dominates $A$, and $A$ dominates $W$. Many game theoretic distributions (including NE, CE and CCE) will place all probability mass on $(S, S)$, leaving strategies $A$ and $W$ with undefined rating. This may be unsatisfying for two reasons; firstly that there could be a further ordering between $A$ and $W$ such that $S>A>W$ reflected in the ranking, and secondly, that all strategies should receive a rating value. It could argued that if a strategy dominates all others then an ordering over the rest is redundant. However there are ways to achieve ordering, a) with approximate equilibria (Section~\ref{sec:algorithms}), certain joint strategies (such as $\epsilon^{\min+}$-MECCE) are guaranteed to place at least some mass on all strategies (Section~\ref{app:equilibria}), b) assign $r^\sigma_p(a) = \min_a G_p(a)$ for undefined values, and c) rate using a sub-game with dominating strategies pruned.

\subsection{Joint Strategy Distributions}
\label{subsec:joint_strategy_dists}

We now turn our discussion to the joint distributions we measure such a rating under. The most ubiquitous approach is the uniform distribution which is equivalent to calculating the mean payoff across all opponent strategies. As discussed previously, this approach does not consider any interesting dynamics of the game. It is, however, the distribution with maximum entropy and therefore makes the fewest assumptions \citep{jaynes1957_maxent} .

In order to be more game theoretic, using distributions that are in certain types of \emph{equilibrium} is beneficial. Firstly, consider the definitions of several equilibria (Equations \eqref{eq:ce_con}-\eqref{eq:ne_con}). These equations are linear\footnote{In joint distribution space.} inequality constraints between strategies, so already closely resemble a partial ordering. Rankings are nothing more than partial orderings between elements. Secondly, values of the payoff ratings depend entirely on the payoffs under distributions that all players are not incentivized to deviate from. Therefore this set of joint distributions are representative of ones which rational agents may employ in practice. In contrast, the uniform distribution is rarely within an equilibrium set. Therefore, we argue, equilibrium distributions are a much more natural approach. Further mathematical justification is given in Section~\ref{app:justification}.

It is possible to mix the opinionated properties of an equilibrium with the zero-assumption properties of the uniform: there exists a principled continuum between the uniform distribution and an equilibrium distribution \citep{marris2021_jpsro_arxiv} to achieve this balance. The uniform distribution is recovered when using a large enough approximation parameter $\epsilon_p \geq \epsilon_p^\text{uni}$. The value of $\epsilon_p^\text{uni}$ depends on the solution concept, and can be determined directly from a payoff (Section~\ref{app:equilibria}).

\subsection{Properties of Equilibria Ratings}

Naively, one may want a rating strategy to differentiate the strategies it is rating. Game theoretic rating does the opposite: it groups strategies into similar ratings that \emph{should not} be differentiated, such as strategies that are in a cycle with one another (Tables \ref{subtab:biased_rock_paper_scissors_rating}, and \ref{subtab:dominated_biased_rock_paper_scissors_rating}). We call this phenomenon the \emph{grouping property}. Other properties, such as strategic dominance resulting in dominated ratings, and consistent ratings over repeated strategies or between players in a symmetric game can also be achieved when using the maximum entropy criterion (Section~\ref{app:justification}).

\section{Rating Algorithms}
\label{sec:algorithms}

A generalized payoff rating algorithm (Algorithm~\ref{alg:payoff_rating}) is therefore parameterized over an equilibrium concept, and an equilibrium selection criterion. This section makes some recommendations on suitable parameterizations.

\begin{algorithm}[H]
\caption{Generalized Payoff Rating}
\label{alg:payoff_rating}
\begin{small}
    \centering
    \begin{algorithmic}[1]
    \State $\sigma(a) \gets \textsc{ConceptAndSelection}(G(a), \epsilon)$
    \For{$p \gets 1...n$}
        \State $r^\sigma_p(a_p) \gets \smashoperator{\sum_{a_{-p} \in \mathcal{A}_{-p}}} G_p(a_p, a_{-p}) \sigma(a_{-p}|a_p)$
    \EndFor
    \State \Return $(r^\sigma_1(a_1), ..., r^\sigma_n(a_n))$
    \end{algorithmic}
\end{small}
\end{algorithm}

\subsection{$\epsilon^{\min +}$-MECCE Payoff Rating}

We recommend using Coarse Correlated Equilibrium (CCE) as the joint strategy distribution, maximum entropy (ME) for the equilibrium selection function. We consider the solution when $\epsilon \to \epsilon^{\min}$ (or equivalently with a sufficiently small $\epsilon=\epsilon^{\min +}$), where $\epsilon^{\min} \leq 0$ is the minimum approximation parameter that permits a feasible solution (Section~\ref{app:equilibria}) \citep{marris2021_jpsro_icml}. We call the resulting rating $\epsilon^{\min +}$-MECCE Payoff Rating.

Using CCEs as the solution concept has a number of advantages: a) full joint distributions allow cooperative as well as competitive games to be rated; factorizable distributions such as NE struggle with cooperative components of a game  b) CCEs are more tractable to compute than CEs and NEs, c) full-support CCEs only require a single variable per strategy to define\footnote{With the payoff tensor.}, d) they are amenable to equilibrium selection because it permits a convex polytope of solutions, e) under a CCE, no player has incentive to deviate from the joint (possibly correlated) distribution to any of their own strategies unilaterally since it would not result in higher payoff, and f) the empirical joint strategy of no-regret algorithms in self-play converge to a CCE.

In combination with CCEs, ME with any $\epsilon > \epsilon^{\min}$ (Section~\ref{app:equilibria}) spreads at least some mass over all joint strategies (``full support'' \citep{ortix2007_mece}) meaning that the conditional distribution, and hence the payoff rating, is always well defined. This equilibrium selection method is also invariant under affine transforms \citep{marris2021_jpsro_arxiv} of the payoff, scales well to large numbers of players and strategies, and is principled in that it makes minimal assumptions about the distribution \citep{jaynes1957_maxent}. Empirically, it groups strategies within strategic cycles with each other. Using a solution near $\epsilon^{\min}$ allows for a strong, high value equilibrium to be selected which is particularly important for coordination games.

\subsection{$\frac{\epsilon}{\epsilon^\text{uni}}$-MECCE Payoff Rating}

A drawback of using $\epsilon=\epsilon^{\min +}$ is that sometimes (usually when strategies are strictly dominated by others) the distribution needs to be computed to a very high precision, otherwise numerical issues will complicate the calculation of the conditional distributions.

In order to mitigate this problem let us use an approximate equilibrium distribution which will spread more mass. It is advantageous to normalize the approximation parameter\citep{marris2021_jpsro_arxiv}, $\frac{\epsilon}{\epsilon^\text{uni}}$, where $\epsilon^\text{uni}$ is the minimum $\epsilon$ that permits the uniform distribution in the feasible set. When $\frac{\epsilon}{\epsilon^\text{uni}}=1$ the uniform distribution is selected by ME, when $\frac{\epsilon}{\epsilon^\text{uni}}=0$ the MECCE solution is recovered. For some games, it is possible to set $\frac{\epsilon}{\epsilon^\text{uni}} \leq  0$ to produce ratings with very robust distributions. This is similar in idea to the continuum of QREs \citep{mckelvey1995_qre_lle}. Figure~\ref{fig:premier_2p_epsilon} shows how ratings change with $\frac{\epsilon}{\epsilon^\text{uni}}$ for a two-player, zero-sum game.

\section{Experiments}
\label{sec:experiments}

In order to build intuition and demonstrate the flexibility of the rating algorithms presented, this section shows ratings for a number of standard and real world data games. We compare against uniform and $\alpha$-Rank rating methods. Section~\ref{app:experiments} contains further experiments.

\subsection{Standard Normal Form Games}

First let us consider the payoff, equilibrium and payoff ratings of some two-player normal-form games (Table~\ref{tab:standard_ratings}). RPS has three strategies in a cycle, and therefore equilibrium ratings dictate that we cannot differentiate between their strengths. This is true even if the cycles are biased (Table~\ref{subtab:biased_rock_paper_scissors_rating}) for the MECCE rating. In this case the probability mass is spread unevenly but the resulting payoff rating is equal for all the strategies in the cycle, grouping them together. $\alpha$-Rank does not produce equal payoff ratings for BRPS, but \emph{does} spread mass equally (not shown in the table). The uniform payoff incorrectly ungroups these strategies. It is also possible to construct a general-sum game with two sets of cycles where one cycle ``dominates'' the other (Table~\ref{subtab:dominated_biased_rock_paper_scissors_rating}). MECCE successfully groups the ratings of each of the cycles.

In prisoner's Dilemma (Table~\ref{subtab:prisoners_dilemma_rating}) the dominant joint strategy receives all the mass (using slightly above $\epsilon^{\min}$ means $(C,D)$ also gets some mass). Uniform rating results in the correct ordering in transitive games, but with less intuitive values. In Bach or Stravinsky (Table~\ref{subtab:bach_or_stravinsky_rating}) and the coordination game (Table~\ref{subtab:coordination_rating}) MECCE is able to perfectly correlate actions to give better mutual payoffs. LLE is unsatisfactory on coordination games because factorizable distributions cannot exploit coordination opportunities. Interestingly, in the chicken game (Table~\ref{subtab:chicken_rating}) $C$ has better payoff rating than $S$ because it is the strategy that gives the highest payoff when the other player swerves. This is an example of when the the uniform rating gives a different ordering to the payoff rating.

\begin{table}[t!]
\centering
\caption{Player 1's payoff ratings for standard games. ER: $\epsilon^{\min+}$-MECCE. LR: LLE. $\alpha$R: $\alpha$-Rank. UR: uniform.}
\label{tab:standard_ratings}
\vspace{-0.4em}
% \begin{subfigure}[b]{1.0\linewidth}
%     \centering
%     \subcaption{\centering Rock, Paper, Scissors ($G_2 = 1-G_1 = G_1^T$).}\vspace{-0.4em}
%     \scriptsize
%     \addtolength{\tabcolsep}{-3pt}
%     \begin{tabular}{r|ccc}
%     $G_1$ & R & P & S \\\hline
%     R & $\frac{1}{2}$  & $0$  & $1$  \\
%     P & $1$  & $\frac{1}{2}$  & $0$  \\
%     S & $0$  & $1$  & $\frac{1}{2}$  \\
%     \end{tabular}\hspace{1em}
%     \begin{tabular}{r|ccc}
%     $\sigma^\text{ER}$ & R & P & S \\\hline
%     R & $\frac{1}{3}$  & $\frac{1}{3}$  & $\frac{1}{3}$  \\
%     P & $\frac{1}{3}$  & $\frac{1}{3}$  & $\frac{1}{3}$  \\
%     S & $\frac{1}{3}$  & $\frac{1}{3}$  & $\frac{1}{3}$  \\
%     \end{tabular}\hspace{1em}
%     \begin{tabular}{r|cccc}
%     $r_1^\sigma$ & ER & LR & $\alpha$R & UR \\\hline
%     R & $\frac{1}{2}$ & $\frac{1}{2}$ & $\frac{1}{2}$ & $\frac{1}{2}$ \\
%     P & $\frac{1}{2}$ & $\frac{1}{2}$ & $\frac{1}{2}$ & $\frac{1}{2}$ \\
%     S & $\frac{1}{2}$ & $\frac{1}{2}$ & $\frac{1}{2}$ & $\frac{1}{2}$ \\
%     \end{tabular}
%     \addtolength{\tabcolsep}{1pt} 
%     \label{subtab:rock_paper_scissors_rating}
% \end{subfigure}
% 
\begin{subfigure}[b]{1.0\linewidth}
    \centering 
    \vspace{0.3em}\subcaption{\centering Biased Rock, Paper, Scissors ($G_2 = 1-G_1 = G_1^T$).}\vspace{-0.2em}
    \scriptsize
    \addtolength{\tabcolsep}{-3pt}
    \begin{tabular}{r|ccc}
    $G_1$ & R & P & S \\\hline
    R & $\frac{1}{2}$  & $\frac{2}{10}$  & $1$  \\
    P & $\frac{8}{10}$  & $\frac{1}{2}$  & $\frac{3}{10}$  \\
    S & $0$  & $\frac{7}{10}$  & $\frac{1}{2}$ \\
    \end{tabular}\hspace{0.1em}
    \begin{tabular}{r|ccc}
    $\sigma^\text{ER}$ & R & P & S \\\hline
    R & $.04$  & $.10$  & $.06$  \\
    P & $.10$  & $.25$  & $.15$  \\
    S & $.06$  & $.15$  & $.09$  \\
    \end{tabular}\hspace{0.1em}
    \begin{tabular}{r|cccc}
    $r_1^\sigma$ & ER & LR & $\alpha$R${^*}$ & UR \\\hline
    R & $.5$ & $.5$ & $.567$ & $.567$ \\
    P & $.5$ & $.5$ & $.533$ & $.533$ \\
    S & $.5$ & $.5$ & $.400$ & $.400$ \\
    \end{tabular}
    \addtolength{\tabcolsep}{1pt} 
    \label{subtab:biased_rock_paper_scissors_rating}
\end{subfigure}

\begin{subfigure}[b]{1.0\linewidth}
    \centering
    \vspace{0.3em}\subcaption{\centering Dominated Biased Rock, Paper, Scissors ($G_2 = G_1^T$)}\vspace{-0.2em}
    \scriptsize
    \addtolength{\tabcolsep}{-3pt}
    \begin{tabular}{r|cc}
    $G_1$ & BRPS & $\frac{1}{2}$BRPS \\\hline
    BRPS & $G_1^\text{BRPS}$  & $0$  \\
    $\frac{1}{2}$BRPS & $0$  & $\frac{1}{2} G_1^\text{BRPS}$  \\
    \end{tabular}\hspace{1em}
    \begin{tabular}{r|ccc}
    $\sigma^\text{ER}$ & BRPS & $\frac{1}{2}$BRPS \\\hline
    BRPS & $\frac{1}{3}\sigma^\text{BRPS}$  & $0$  \\
    $\frac{1}{2}$BRPS & $0$  & $\frac{2}{3}\sigma^\text{BRPS}$  \\
    \end{tabular}
    
    \begin{tabular}{r|cccc}
    $r_1^\sigma$ & ER & LR & $\alpha$R & UR \\\hline
    R & $0.5$ & $0.5$ & $0.479$ & $0.283$ \\
    P & $0.5$ & $0.5$ & $0.511$ & $0.267$ \\
    S & $0.5$ & $0.5$ & $0.444$ & $0.200$ \\
    \end{tabular}\hspace{1em}
    \begin{tabular}{r|cccc}
    $r_1^\sigma$ & ER & LR & $\alpha$R & UR \\\hline
    $\frac{1}{2}$R & $0.25$ & $0.0$ & $0.239$ & $0.142$ \\
    $\frac{1}{2}$P & $0.25$ & $0.0$ & $0.256$ & $0.133$ \\
    $\frac{1}{2}$S & $0.25$ & $0.0$ & $0.222$ & $0.100$ \\
    \end{tabular}
    \addtolength{\tabcolsep}{1pt} 
    \label{subtab:dominated_biased_rock_paper_scissors_rating}
\end{subfigure}

\begin{subfigure}[b]{1.0\linewidth}
    \centering
    \vspace{0.3em}\subcaption{\centering Prisoner's Dilemma.}\vspace{-0.2em}
    \scriptsize
    \addtolength{\tabcolsep}{-3pt} 
    \begin{tabular}{r|cc}
    $G_p$  & C & D  \\\hline
    C & $-1$, $-1$  & $-3$, $0$  \\
    D & $0$, $-3$  & $-2$, $-2$  \\
    \end{tabular}\hspace{1em}
    \begin{tabular}{r|cc}
    $\sigma^\text{ER}$ & C & D  \\\hline
    C & $0^+$  & $0^{\doubleplus}$  \\
    D & $0^{\doubleplus}$  & $1$  \\
    \end{tabular}\hspace{1em}
    \begin{tabular}{r|cccc}
    $r_1^\sigma$ & ER & LR & $\alpha$R & UR \\\hline
    C & $-3$ & $-3$ & -3 & $-2$ \\
    D & $-2$ & $-2$ & -2 & $-1$ \\
    \end{tabular}
    \addtolength{\tabcolsep}{1pt} 
    \label{subtab:prisoners_dilemma_rating}
\end{subfigure}

\begin{subfigure}[b]{1.0\linewidth}
    \centering
    \vspace{0.3em}\subcaption{\centering Bach or Stravinsky.}\vspace{-0.2em}
    \scriptsize
    \addtolength{\tabcolsep}{-3pt} 
    \begin{tabular}{r|cc}
    $G_p$  & B & S  \\\hline
    B & $3$, $2$  & $0$, $0$  \\
    S & $0$, $0$  & $2$, $3$  \\
    \end{tabular}\hspace{1em}
    \begin{tabular}{r|cc}
    $\sigma^\text{ER}$ & B & S  \\\hline
    B & $\frac{1}{2}$  & $0$  \\
    S & $0$  & $\frac{1}{2}$  \\
    \end{tabular}\hspace{1em}
    \begin{tabular}{r|cccc}
    $r_1^\sigma$ & ER & LR & $\alpha$R & UR \\\hline
    B & $3$ & $1.2$ & $3$ & $\frac{3}{2}$ \\
    S & $2$ & $1.2$ & $2$ & $1$ \\
    \end{tabular}
    \addtolength{\tabcolsep}{1pt} 
    \label{subtab:bach_or_stravinsky_rating}
\end{subfigure}

\begin{subfigure}[b]{1.0\linewidth}
    \centering
    \vspace{0.3em}\subcaption{\centering Preferential Pure Coordination  ($G_2 = G_1$).}\vspace{-0.2em}
    \scriptsize
    \addtolength{\tabcolsep}{-3pt} 
    \begin{tabular}{r|cc}
    $G_p$  & P & L  \\\hline
    P & $1$  & $0$  \\
    L & $0$  & $\frac{1}{2}$  \\
    \end{tabular}\hspace{1em}
    \begin{tabular}{r|cc}
    $\sigma^\text{ER}$ & P & L  \\\hline
    P & $\frac{1}{3}$  & $0$  \\
    L & $0$  & $\frac{2}{3}$  \\
    \end{tabular}\hspace{1em}
    \begin{tabular}{r|cccc}
    $r_1^\sigma$ & ER & LR & $\alpha$R & UR \\\hline
    P & $1$ & $1$ & $1$ & $\frac{1}{2}$ \\
    L & $\frac{1}{2}$ & $0$ & $\frac{1}{2}$ & $\frac{1}{4}$ \\
    \end{tabular}
    \addtolength{\tabcolsep}{1pt} 
    \label{subtab:coordination_rating}
\end{subfigure}

\begin{subfigure}[b]{1.0\linewidth}
    \centering
    \vspace{0.3em}\subcaption{\centering Chicken.}\vspace{-0.2em}
    \scriptsize
    \addtolength{\tabcolsep}{-3pt}
    \begin{tabular}{r|cc}
    $G_p$ & C & S  \\\hline
    C & $-10$  & $1$, $-1$  \\
    S & $-1$, $1$  & $0$  \\
    \end{tabular}\hspace{1em}
    \begin{tabular}{r|cc}
    $\sigma^\text{ER}$ & C & S  \\\hline
    C & $0$  & $\frac{1}{2}$  \\
    S & $\frac{1}{2}$  & $0$  \\
    \end{tabular}\hspace{1em}
    \begin{tabular}{r|cccc}
    $r_1^\sigma$ & ER & LR & $\alpha$R & UR \\\hline
    C & $1$ & $-0.1$ & $1$ & $-\frac{9}{2}$ \\
    S & $-1$ & $-0.1$ & $-1$ & $-\frac{1}{2}$ \\
    \end{tabular}
    \addtolength{\tabcolsep}{1pt} 
    \label{subtab:chicken_rating}
\end{subfigure}
\vspace{-2em}
\end{table}

\begin{figure*}[t!]
\centering
\begin{subfigure}[t]{0.245\linewidth}
    \centering
    \includegraphics[width=1.0\linewidth]{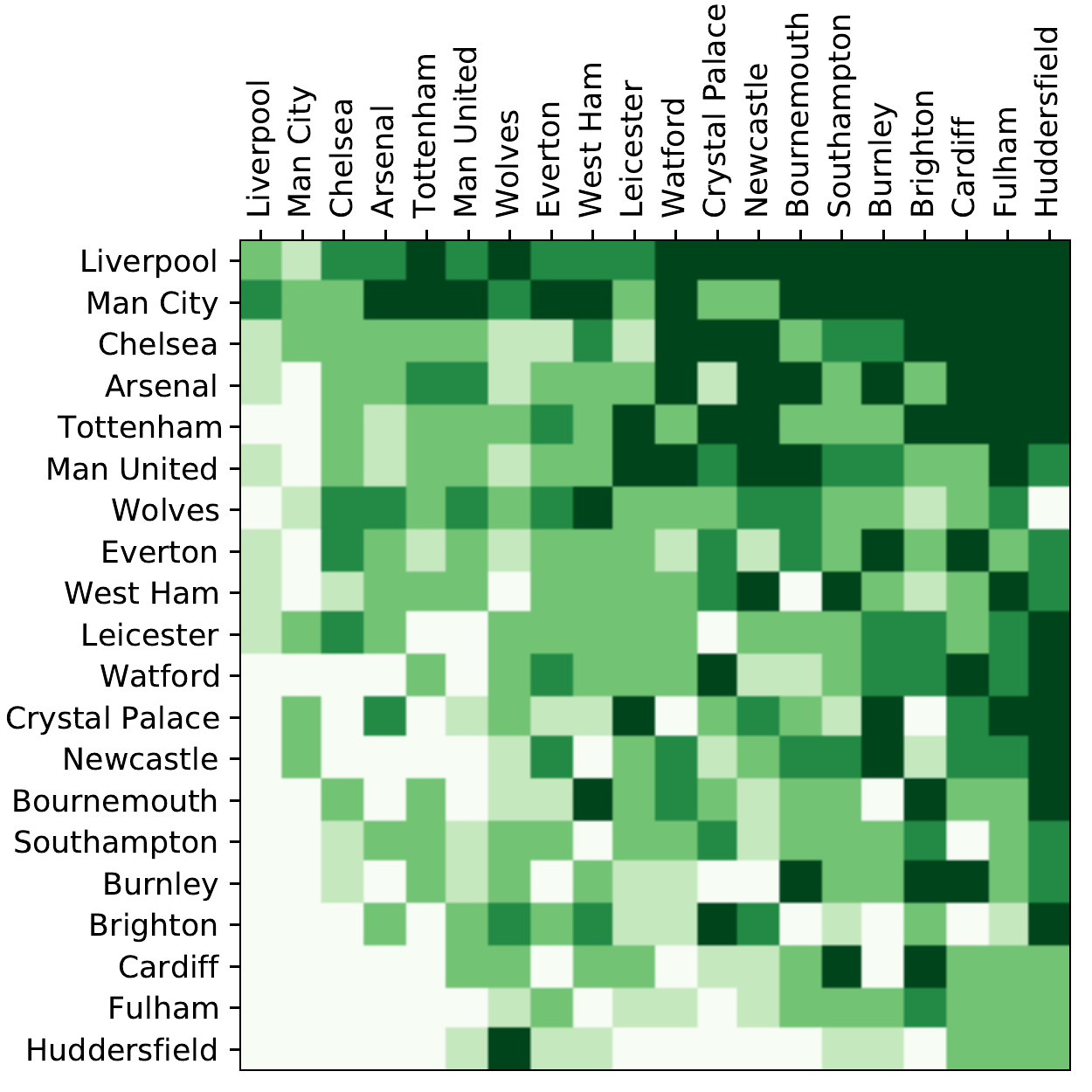}
    \vspace{-0.6cm}
    \caption{\centering P1 payoff, $G_1(a_1, a_2)$}
    \label{fig:premier_2p_payoff}
\end{subfigure}
\begin{subfigure}[t]{0.245\linewidth}
    \centering
    \includegraphics[width=1.0\linewidth]{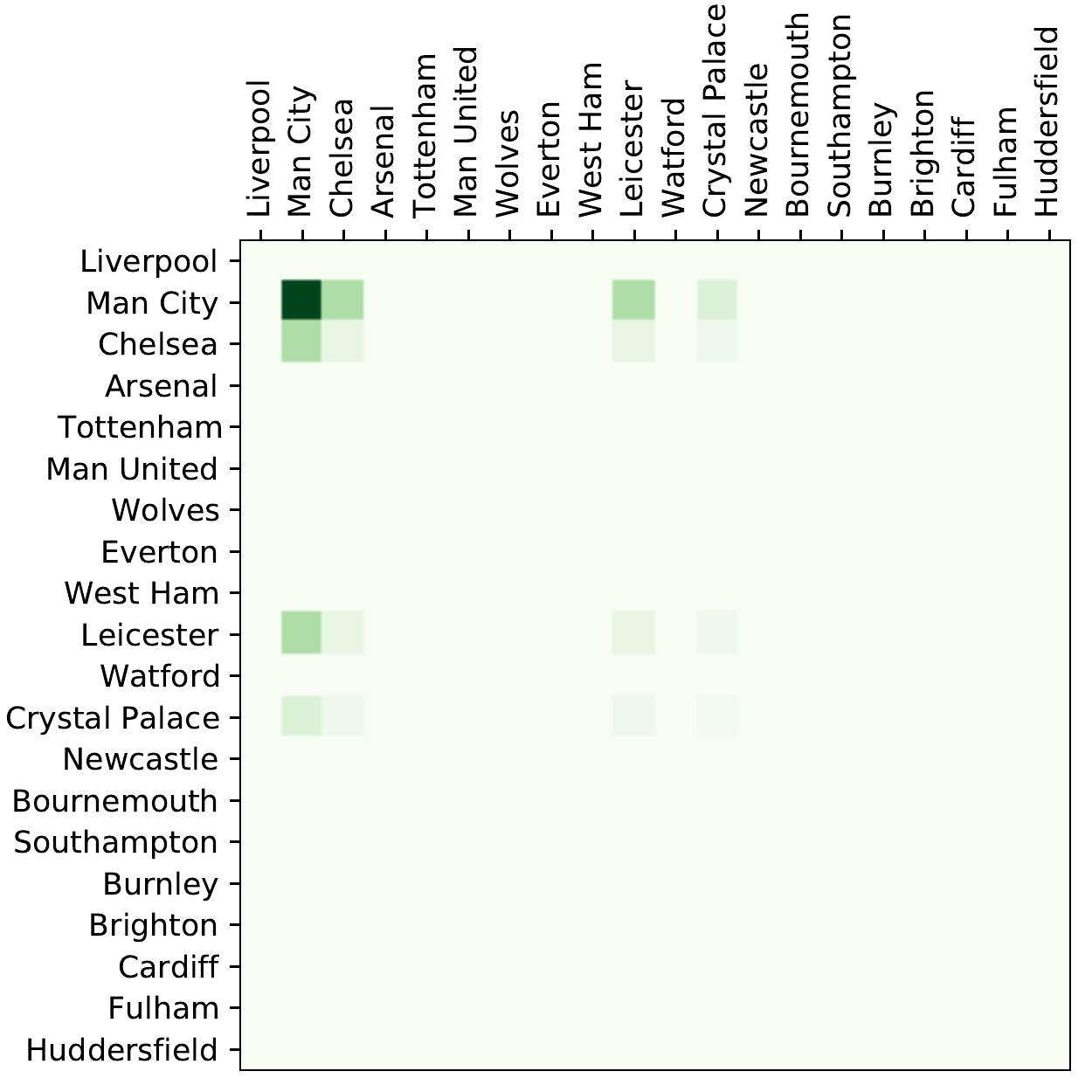}
    \vspace{-0.6cm}
    \caption{\centering $0^+$-ME(C)CE $\sigma(a_1, a_2)$}
    \label{fig:premier_2p_dist}
\end{subfigure}
\begin{subfigure}[t]{0.245\linewidth}
    \centering
    \includegraphics[width=1.0\linewidth]{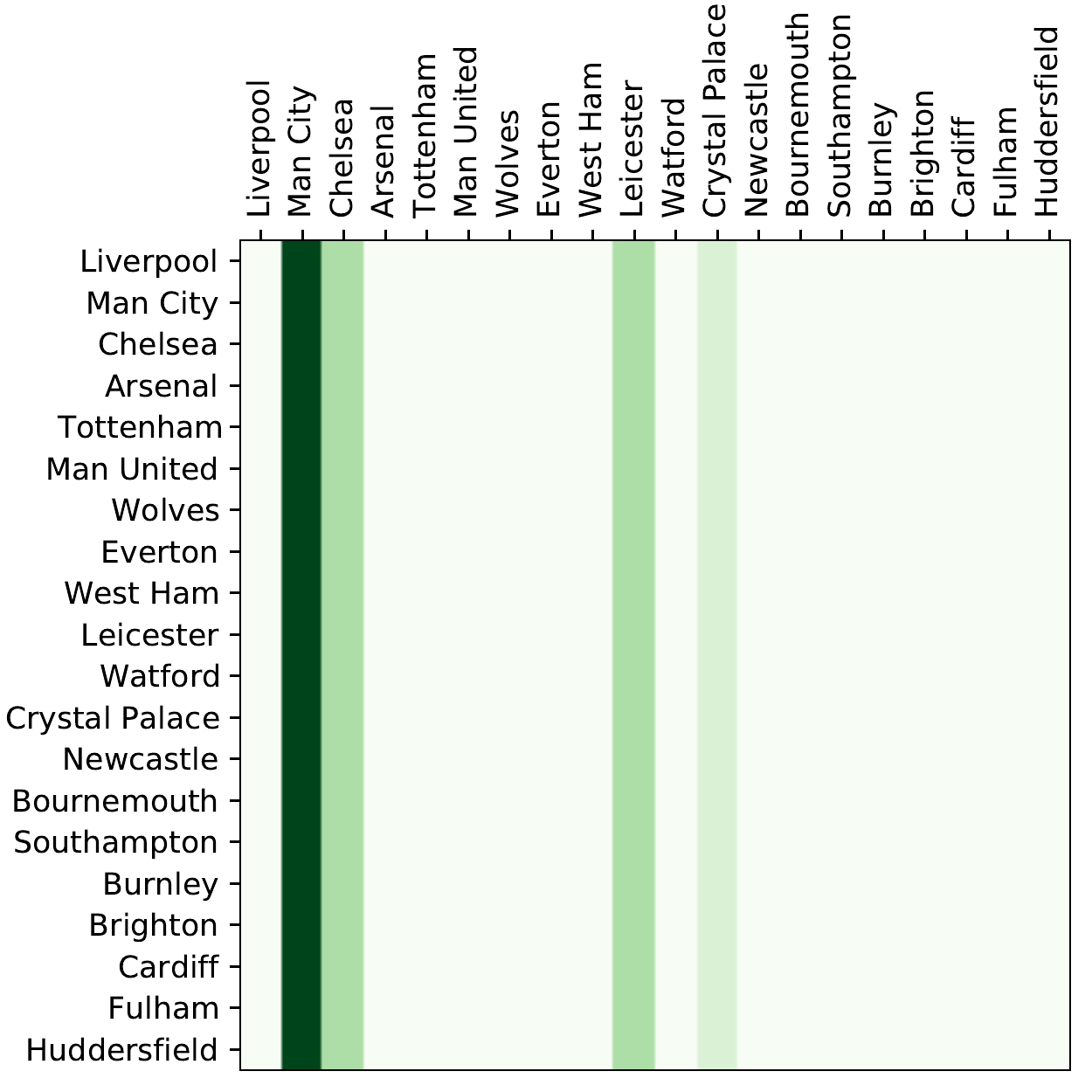}
    \vspace{-0.6cm}
    \caption{\centering $0^+$-ME(C)CE $\sigma(a_2|a_1)$}
    \label{fig:premier_2p_cond}
\end{subfigure}
\begin{subfigure}[t]{0.245\linewidth}
    \centering
    \includegraphics[width=1.0\linewidth]{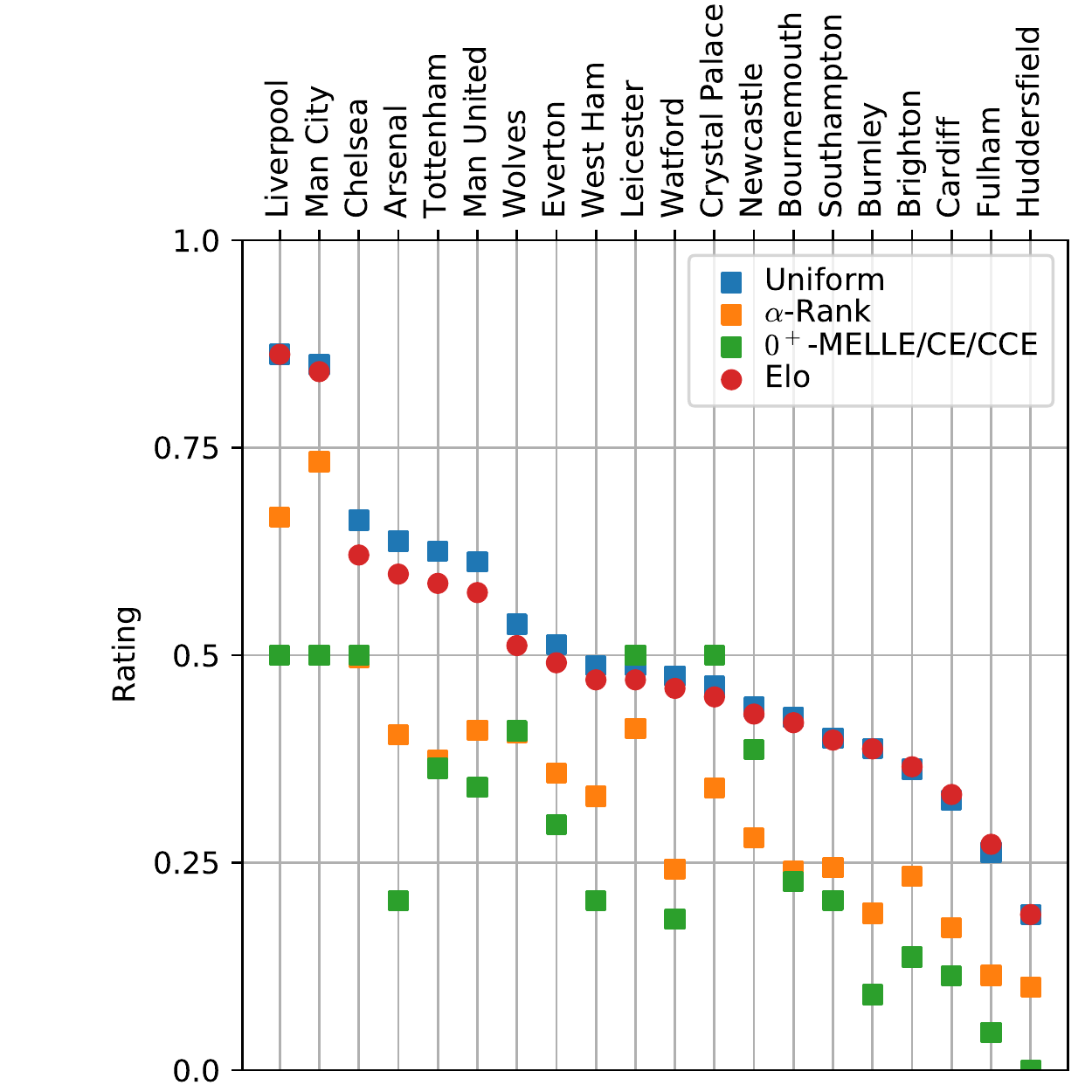}
    \vspace{-0.6cm}
    \caption{\centering Ratings, $r_1^\sigma(a_1)$}
    \label{fig:premier_2p_ratings}
\end{subfigure}
\caption{Symmetric, two-player, zero-sum Premier League game where players pick between clubs as strategies. The clubs are ordered according to their average win probability. The conditional distribution is recovered from very small mass present in the joint distribution (MENE/CE/CCE shown). Log scales of the joint distribution can be found in Figure~\ref{fig:premier_2p_epsilon_mass_ratings}.}
\label{fig:premier_2p}
\end{figure*}

\begin{figure*}[t!]
\centering
\begin{subfigure}[t]{0.245\linewidth}
    \centering
    \includegraphics[width=1.0\linewidth]{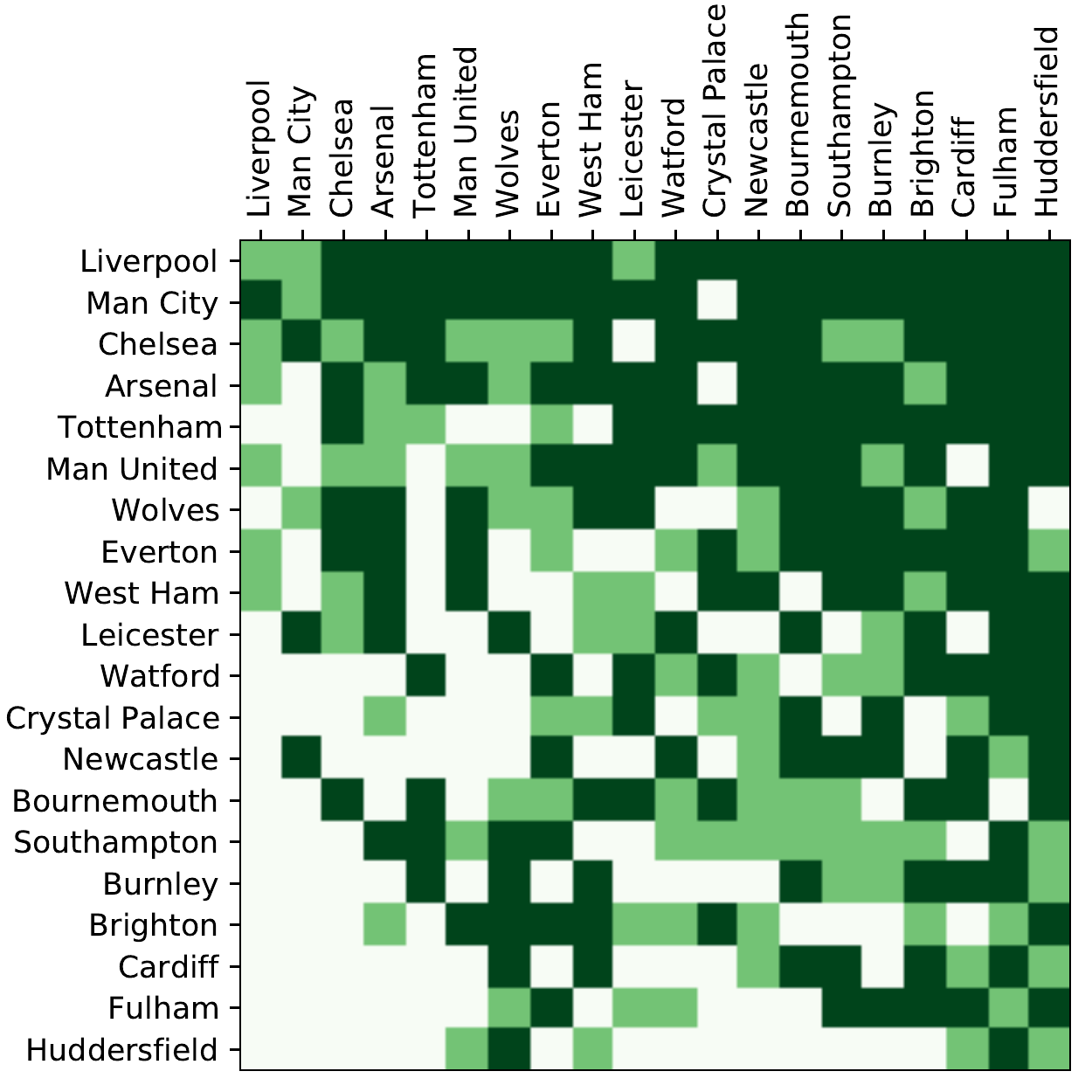}
    \caption{\centering $G_2(H,a_2,a_3)$}
    \label{fig:premier_3p_payoff}
\end{subfigure}
\begin{subfigure}[t]{0.245\linewidth}
    \centering
    \includegraphics[width=1.0\linewidth]{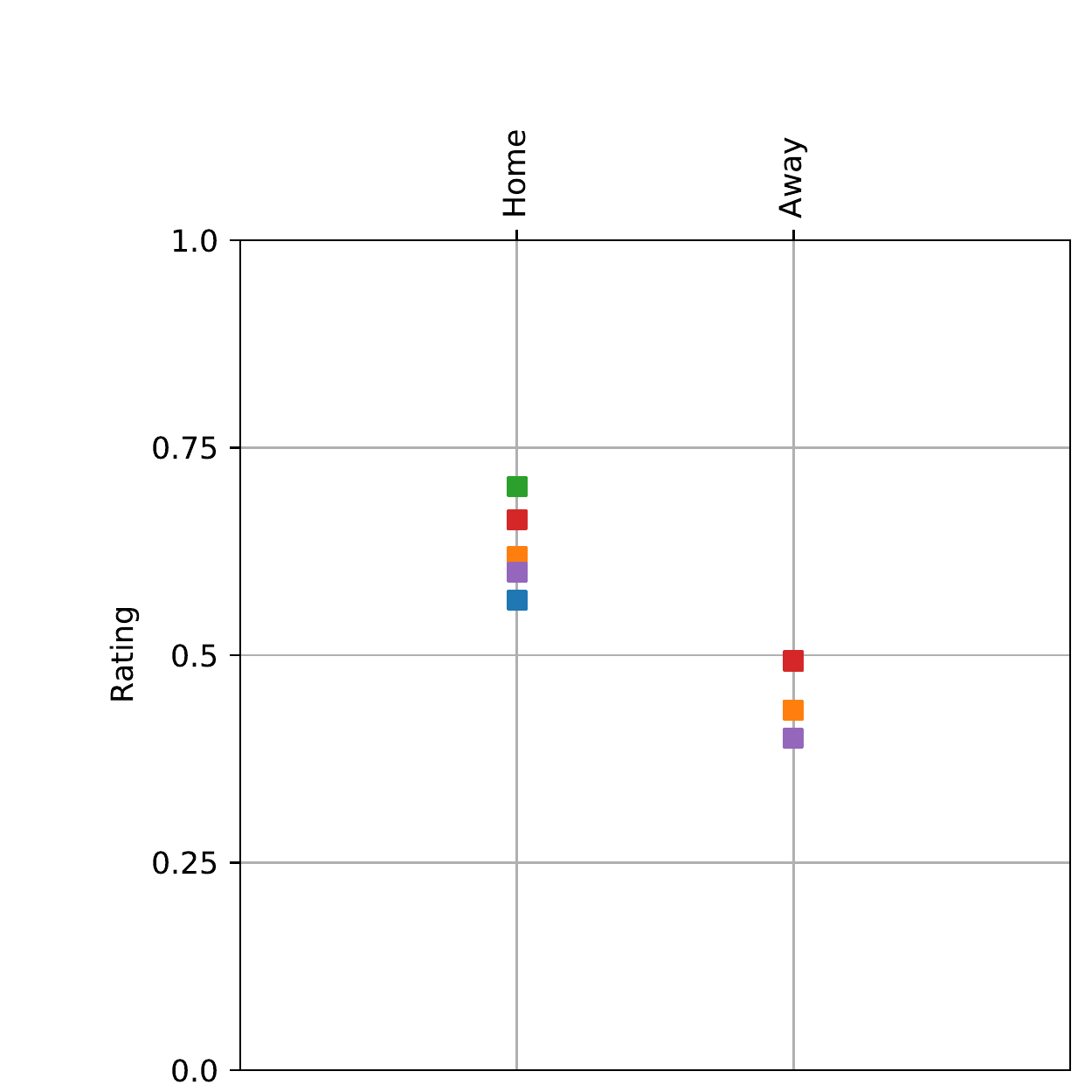}
    \caption{\centering Location ratings, $r_1^\sigma(a_1)$}
    \label{fig:premier_3p_location_ratings}
\end{subfigure}
\begin{subfigure}[t]{0.245\linewidth}
    \centering
    \includegraphics[width=1.0\linewidth]{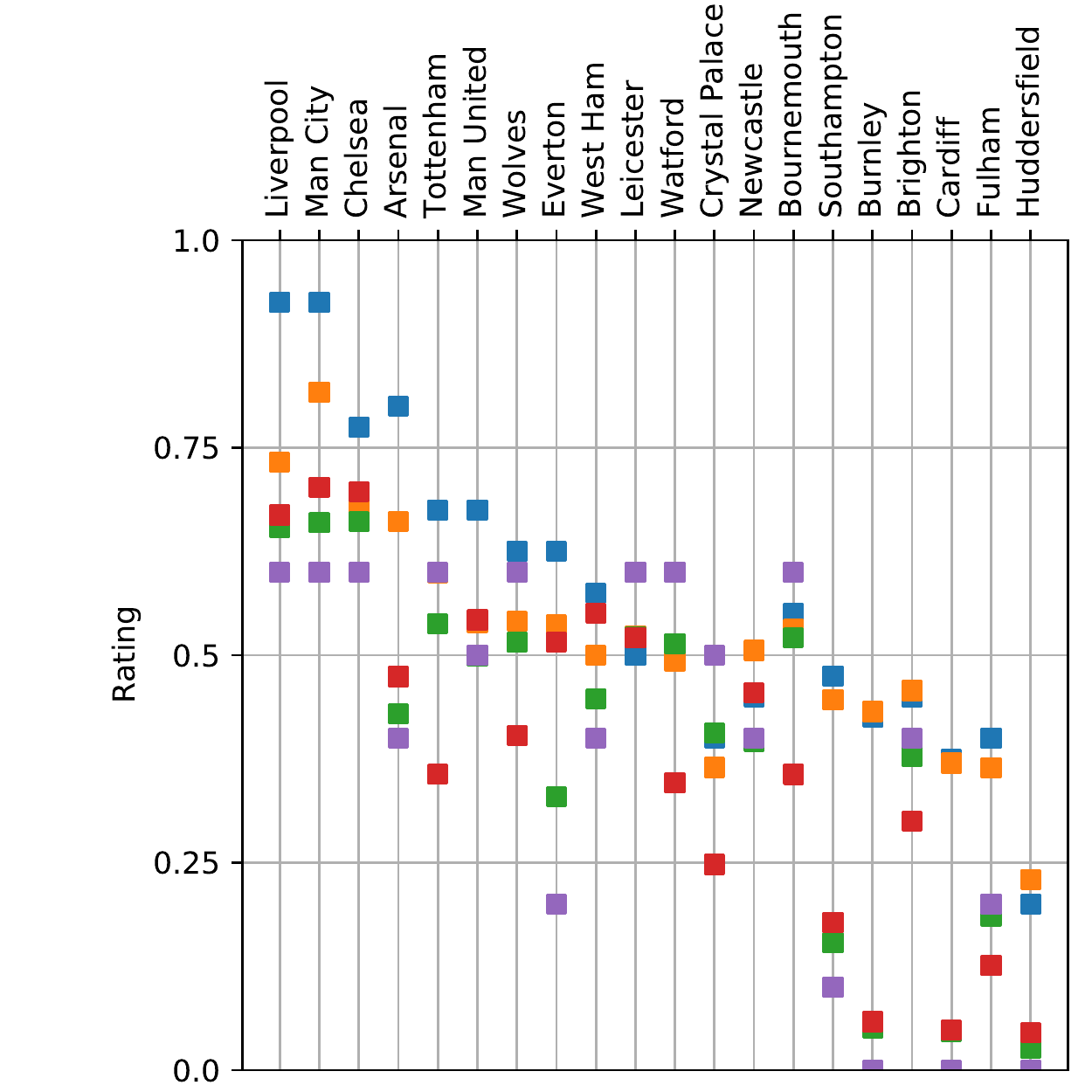}
    \caption{\centering Home ratings, $r_2^\sigma(a_2)$}
    \label{fig:premier_3p_home_ratings}
\end{subfigure}
\begin{subfigure}[t]{0.245\linewidth}
    \centering
    \includegraphics[width=1.0\linewidth]{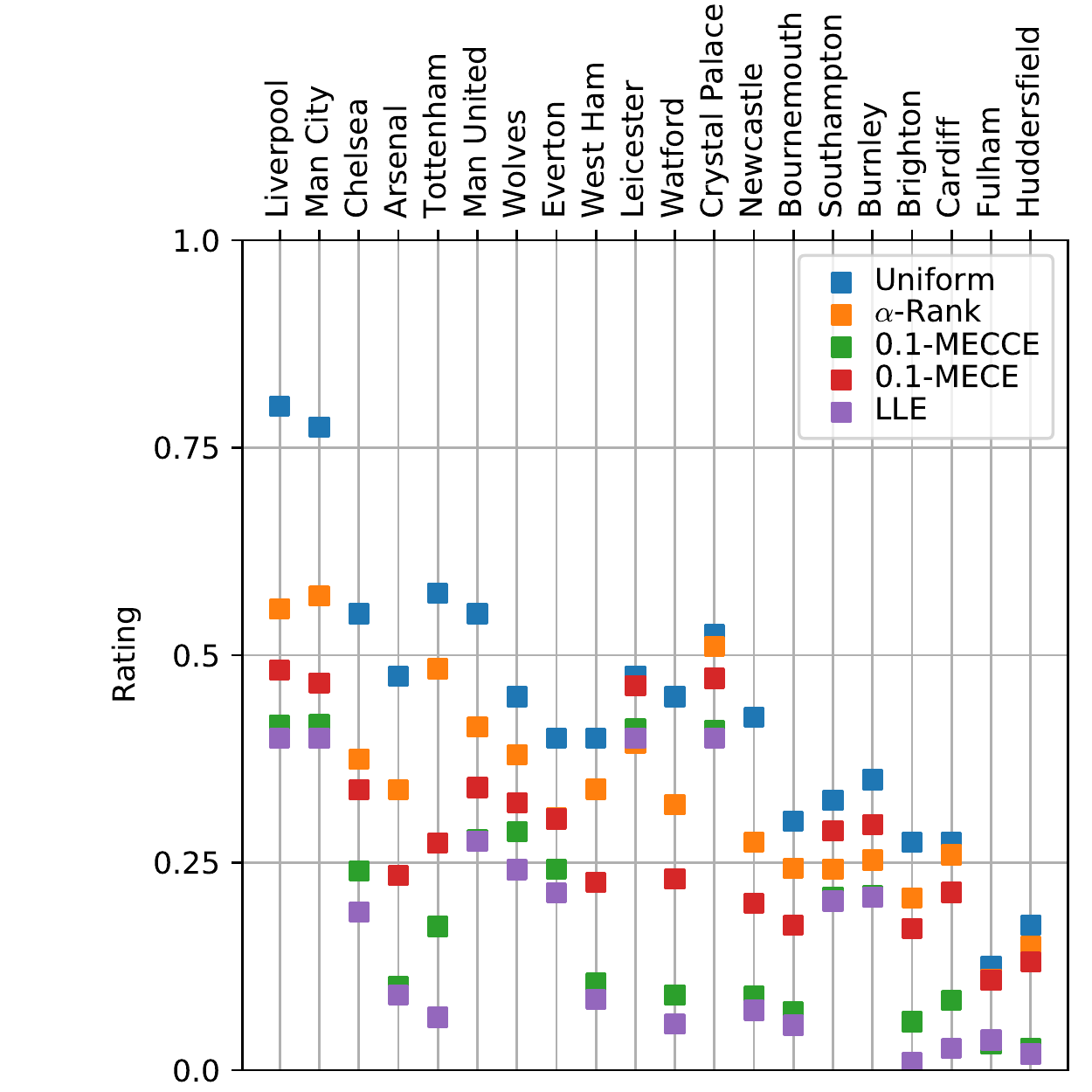}
    \caption{\centering Away ratings, $r_3^\sigma(a_3)$}
    \label{fig:premier_3p_away_ratings}
\end{subfigure}

\begin{subfigure}[t]{0.245\linewidth}
    \centering
    \includegraphics[width=1.0\linewidth]{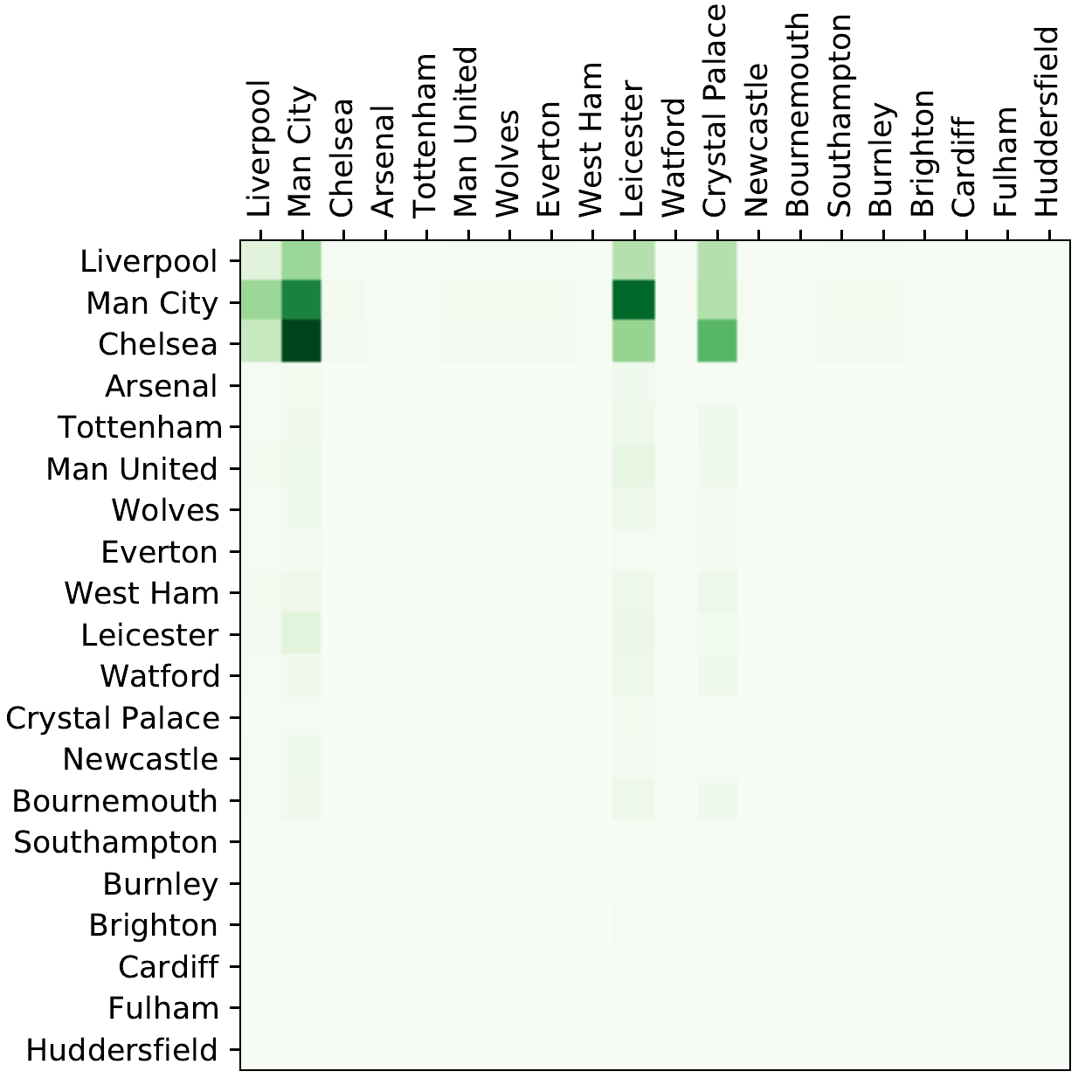}
    \caption{\centering $0.1$-MECCE Home joint slice, $\sigma(H,a_2,a_3)$}
    \label{fig:premier_3p_dist_home}
\end{subfigure}
\begin{subfigure}[t]{0.245\linewidth}
    \centering
    \includegraphics[width=1.0\linewidth]{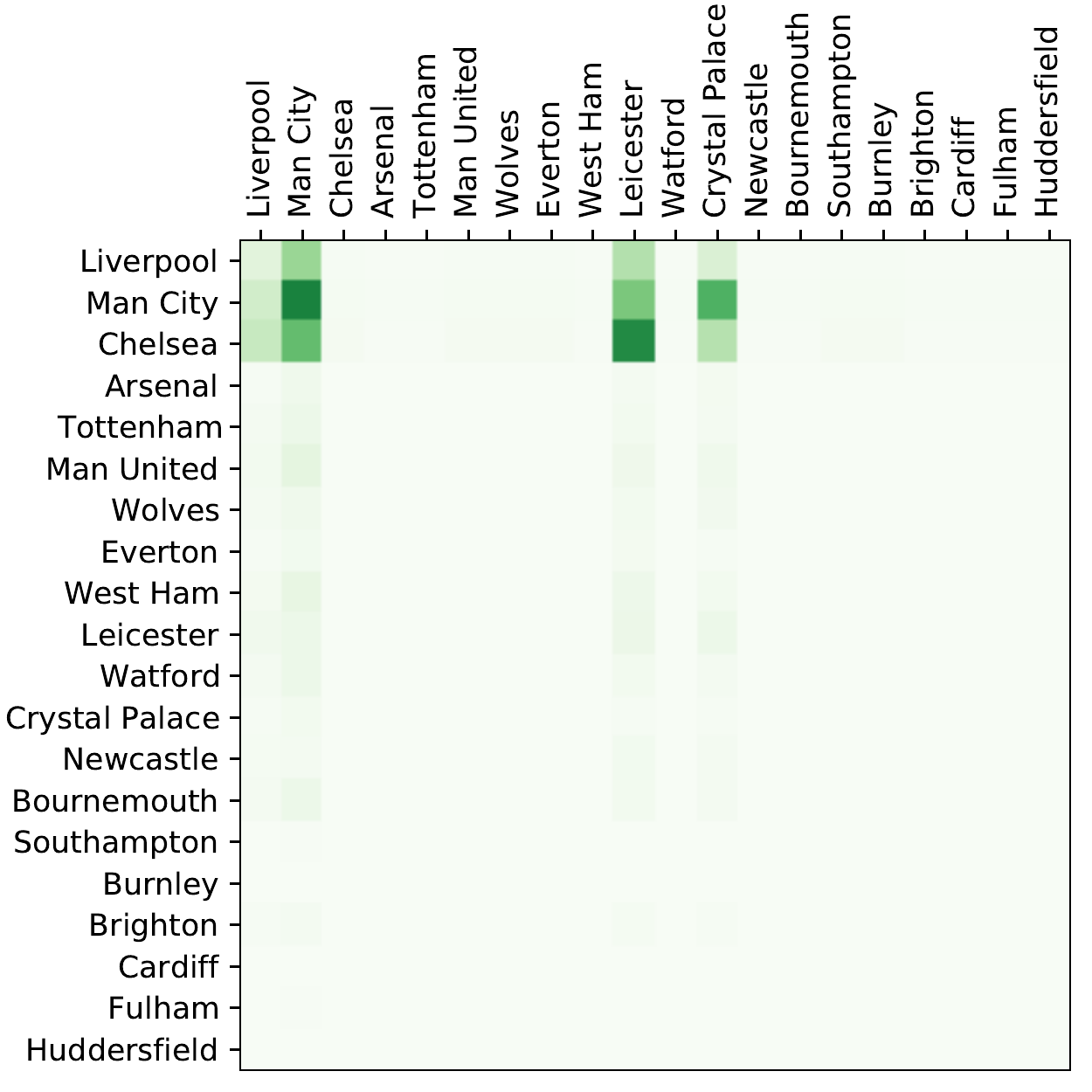}
    \caption{\centering $0.1$-MECCE Away joint slice,  $\sigma(A,a_2,a_3)$}
    \label{fig:premier_3p_dist_away}
\end{subfigure}
\begin{subfigure}[t]{0.245\linewidth}
    \centering
    \includegraphics[width=1.0\linewidth]{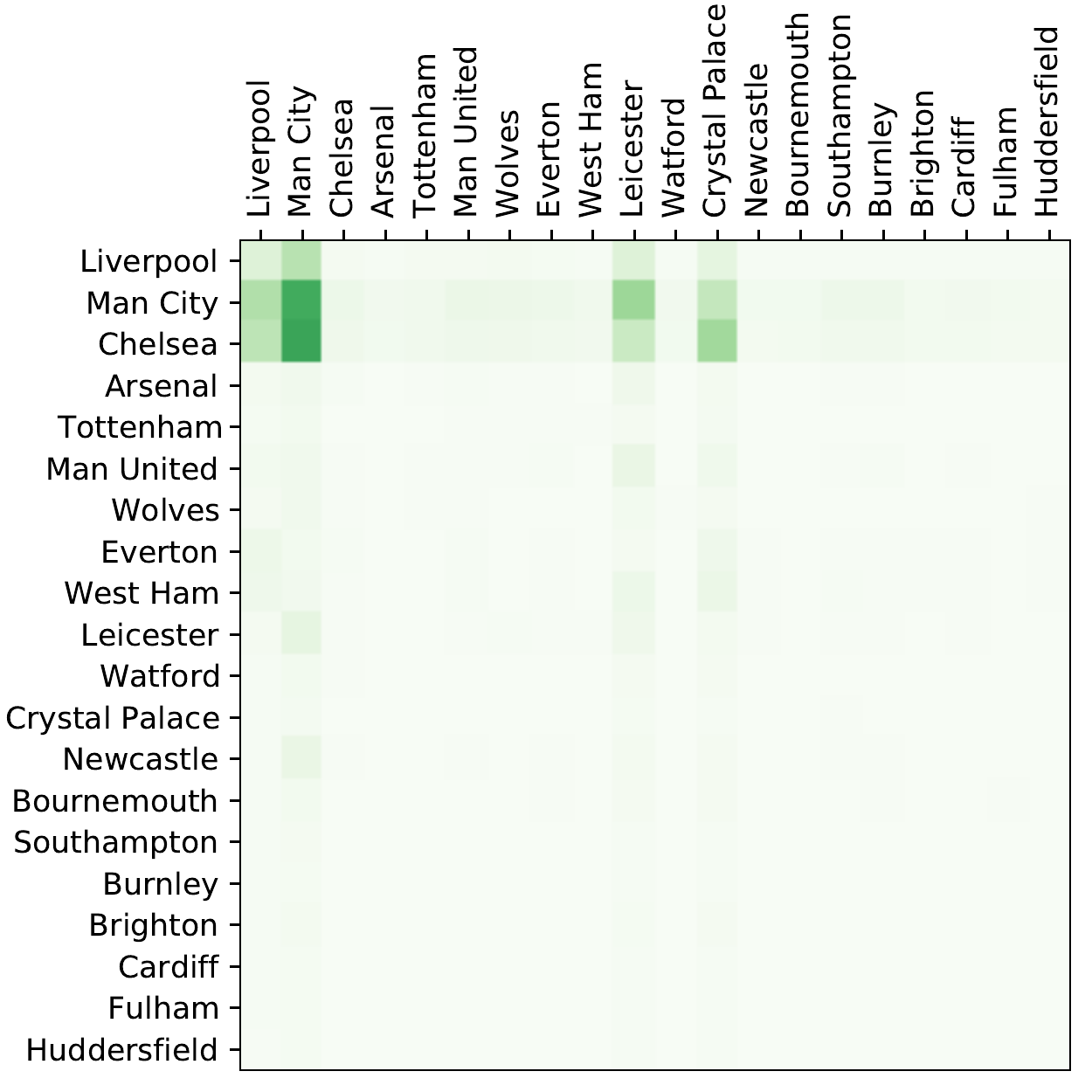}
    \caption{\centering  $0.1$-MECE Home joint slice, $\sigma(H,a_2,a_3)$}
    \label{fig:premier_3p_ce_dist_home}
\end{subfigure}
\begin{subfigure}[t]{0.245\linewidth}
    \centering
    \includegraphics[width=1.0\linewidth]{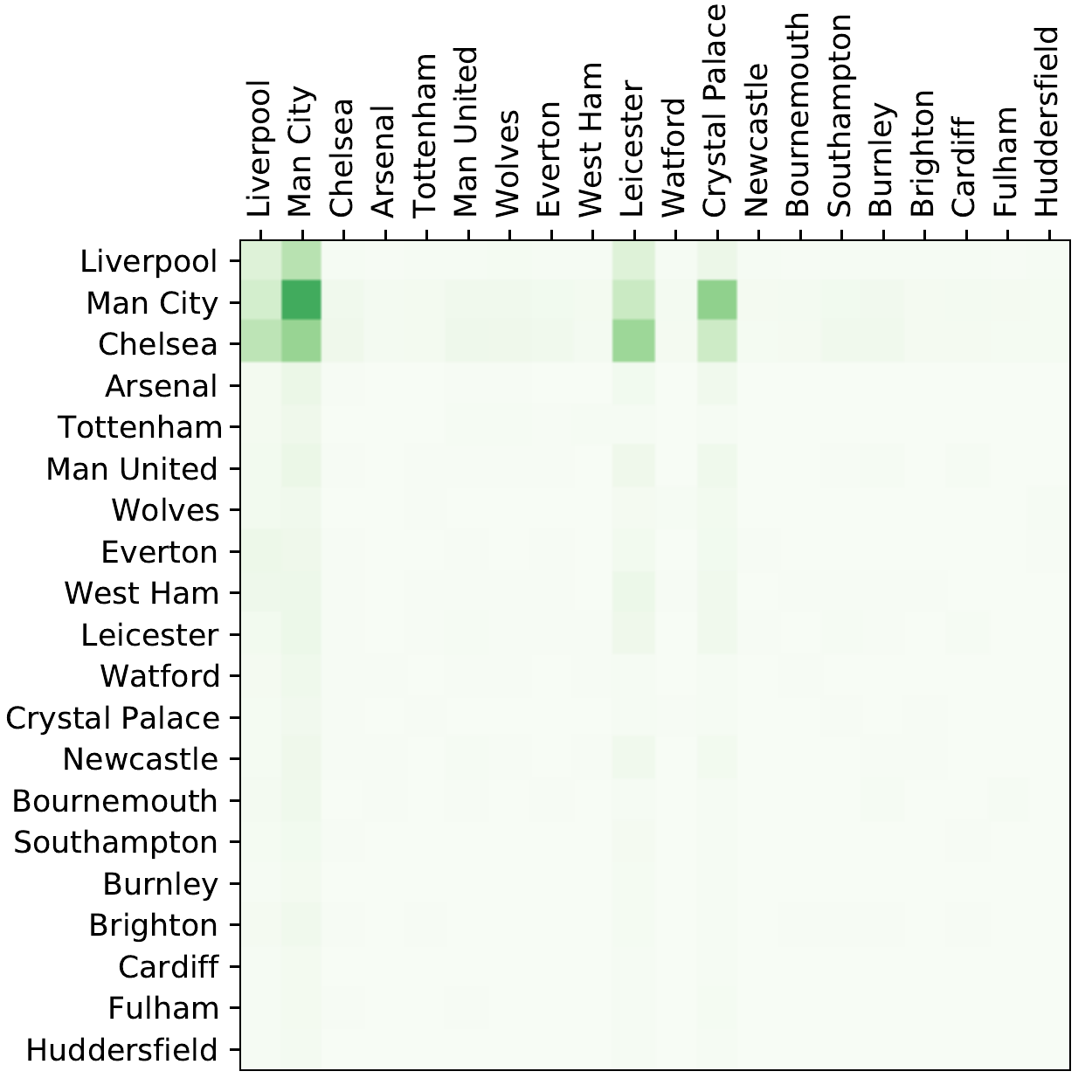}
    \caption{\centering  $0.1$-MECE Away joint slice,  $\sigma(A,a_2,a_3)$}
    \label{fig:premier_3p_ce_dist_away}
\end{subfigure}
\caption{Three-Player Premier League game where the players are location vs home club vs away club. The resulting payoff tensor is of shape $3 \times 2 \times 20 \times 20$ and joint strategy distribution is shaped $2 \times 20 \times 20$. All slices of the payoff are either arithmetic inverse or transpose of the data shown in Figure~\ref{fig:premier_3p_payoff}. The clubs are ordered the same as in Figure~\ref{fig:premier_2p}.}
\label{tab:premier_3p}
\end{figure*}

\subsection{Zero-Sum Premier League Ratings}

We consider a two-player, zero-sum, symmetric win probability\footnote{In practice, the Premier League is general-sum with 3 points for a win, 1 point each for a draw, and 0 points for a loss.} representation of clubs playing against each other in the 2018/2019 season of the premier league. Figure~\ref{fig:premier_2p_payoff} shows win rates between clubs. For example, we can see that Liverpool is very strong and beats every club apart from Man City. Although Man City can beat Liverpool, Man City draws against four other clubs: Chelsea, Leicester, Crystal Palace and Newcastle, the latter three being middle of the table. Furthermore, Chelsea beats Crystal Palace, Crystal Palace beats Leicester, and Leicester beats Chelsea. Newcastle at best draws against Leicester. Therefore there is a weak cycle including Man City, Chelsea, Leicester and Crystal Palace, where Man City can threaten Liverpool. Because of this cycle, all these clubs have strategic relevance, and therefore should be rated equally highly by a game theoretic rating technique.

The $0^+$-ME(C)CE rating spreads the majority of the mass (Figure~\ref{fig:premier_2p_dist}) over clubs within this cycle. Note that for two-player, zero-sum games, exact CCE, CE and NE distributions are identical, and are therefore factorizable. Although it cannot be observed in the figure, there is nonzero support for every joint strategy and the conditional distribution (Figure~\ref{fig:premier_2p_cond}) reflects this. The payoff rating (Figure~\ref{fig:premier_2p_ratings}) is identical for all strategies with nonzero NE support (see Section~\ref{subapp:ne_justification} for an explanation).

Furthermore, we study the $\frac{\epsilon}{\epsilon^\text{uni}}$-MECCE mass (Figure~\ref{fig:premier_2p_epsilon_mass_ratings}) and payoff (Figure~\ref{fig:premier_2p_epsilon_payoff_ratings}) ratings when varying $\frac{\epsilon^{\min +}}{\epsilon^\text{uni}} = 0^+ \leq \frac{\epsilon}{\epsilon^\text{uni}} \leq 1$. We can observe that some clubs (including Leicester, Crystal Palace and Newcastle which are in a weak cycle with Man City) improve their rankings as the rankings as the joint distribution nears the $0^+$-MECCE solution.

\subsection{Three-Player Premier League Ratings}
\label{subsec:pl_3p}

Using the same data we introduce another player, the location player, which has two strategies: home or away. The location player gets a point if the club playing in the location it selects wins. The clubs get a point if they win irrespective of what the location player plays. This results in a three-player, general-sum game: location vs home club vs away club (Figure~\ref{fig:premier_3p_payoff}).

This time we consider ratings using an approximate equilibrium with $\frac{\epsilon}{\epsilon^\text{uni}}=0.1$. As expected, the location player's ratings (Figure~\ref{fig:premier_3p_location_ratings}) favour the home strategy (reflecting the well-known home advantage phenomenon). The performance of clubs at home (Figure~\ref{fig:premier_3p_home_ratings}) is high. The away performance (Figure~\ref{fig:premier_3p_away_ratings}) of Leicester and Crystal Palace earn them top payoff ratings even though they are in the middle of the table.

\subsection{Multiagent Learning Dynamics}
\label{subsec:multiagent_learning_dynamics}

It is well-known that the general multiagent reinforcement learning (MARL) problem is challenging due to nonstationarity~\citep{hernandezleal2017_marl_survey}, limited theoretical guarantees~\citep{zhang2021_marl_survey}, computational resource requirements and implementation challenges~\citep{hernandezleal2019_marl_survey}.
Often there is a lack of ``ground truth'' to quantify the behavior of the algorithms; hence, the field has developed tools to analyze their dynamics qualitatively~\citep{bloembergen2015_evolutionary}.
In this subsection, we demonstrate the use of ratings that change over time as an analysis tool for (MARL) dynamics. In particular, ratings allow game-theoretic relative performance to be assessed over time. 
For all experiments in this section, we use OpenSpiel~\citep{lanctot2019_openspiel} agents, with some additional custom agents and experimental setups.

\begin{figure}[t!]
    \centering
    \includegraphics[width=1.0\linewidth]{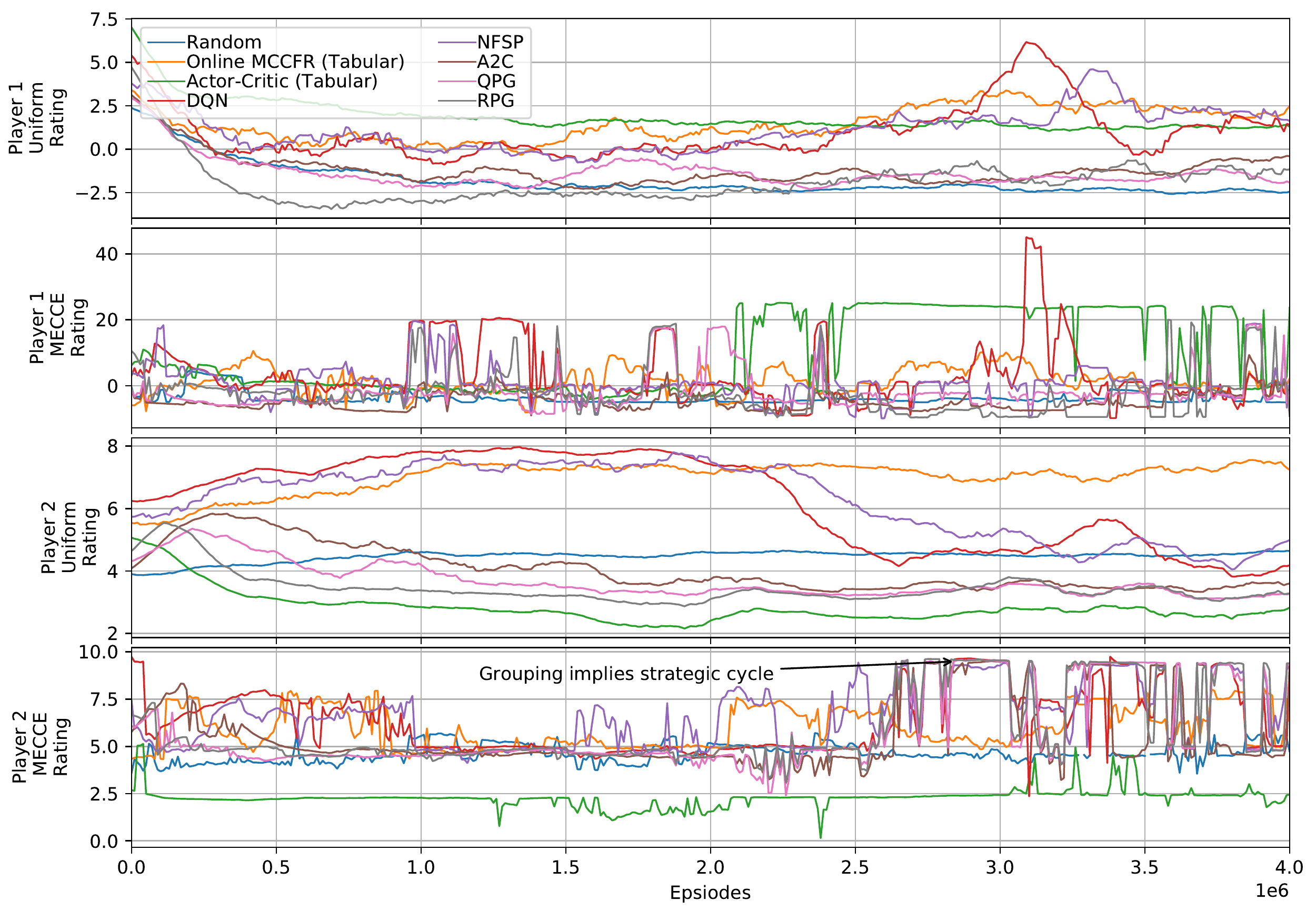}
    \vspace{-0.9cm}
    \caption{Two-Player General-Sum Sheriff.}
    \label{fig:sheriff_2p_ratings}
    \vspace{-0.4cm}
\end{figure}
\begin{figure}[t!]
    \centering
    \includegraphics[width=1.0\linewidth]{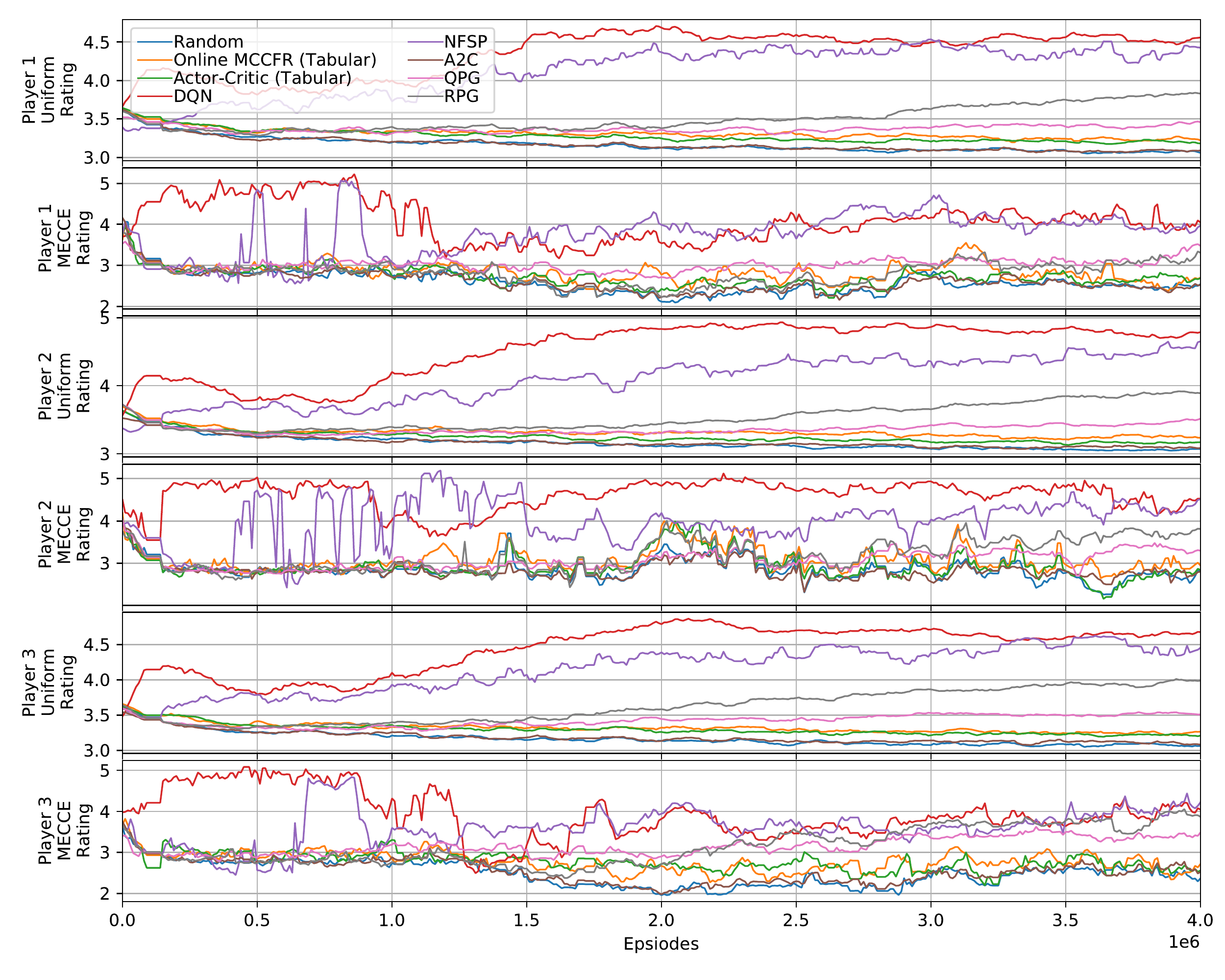}
    \vspace{-0.9cm}
    \caption{Three-Player General-Sum Goofpsiel.}
    \label{fig:goofspiel_3p_ratings}
    \vspace{-0.4cm}
\end{figure}

We run experiments where agents in a population play against each other. Players play an $n$-player game; the population has $n$ instantiations of each of 8 agent types (Random,  Deep Q-networks (DQN)~\citep{mnih2015_dqn_atari}, Neural Fictitious Self-Play (NFSP)~\citep{Heinrich16NFSP}, Advantage Actor-Critic (A2C)~\citep{mnih2016_asynchronous}, Online MCCFR~\citep{lanctot2009_mccfr}, Tabular Actor Critic, QPG, and RPG~\citep{Srinivasan18RPG}), for a total of $8n$ agents. At the start of each episode, an agent type is uniformly sampled for each player, and observations are extended to include each player's agent type.

We show two environments: the general-sum game of Sheriff~\citep{farina2019_sheriff} (Figure~\ref{fig:sheriff_2p_ratings}) and the three-player general-sum game of Goofspiel~\citep{farina2019_coarse} (Figure~\ref{fig:goofspiel_3p_ratings}). It is clear to observe the grouping properties of $\epsilon^{\min+}$-MECCE payoff ratings: when the agents have learned policies that are in strategic cycles with one another their ratings are grouped, allowing for a fairer description of the relative strengths of each policy. This information cannot be learned from studying uniform ratings alone. $\epsilon^{\min+}$-MECEE therefore provides a remarkable way of summarizing the complex interactions of strategies in N-player general-sum games into an interpretable scalar value for each strategy.

\section{Discussion}
\label{sec:discussion}
\sm{Could we add some discussion here about why we might prefer this method to $\alpha$-rank? This section doesn't flow that well, especially the last paragraph.}
Considering the payoff ratings when $\epsilon \to \epsilon^{\min}$ gives a mathematically sound way of ensuring all joint strategies have positive mass. Furthermore, using a normalised $\frac{\epsilon}{\epsilon^\text{uni}}$ allows a smooth parameterized transition from traditional uniform payoff rating to game theoretic payoff rating. Equilibria have a grouping property that ensures strategies that are in strategic cycles with one another have similar ratings. Maximum entropy is a principled way to select amongst equilibria and also gives consistent ratings across symmetric games and repeated strategies.

Formulating the environment as a player \citep{balduzzi2018_nashaverage} in an agent vs environment game is an interesting way of ensuring the distribution of tasks (strategies available to the environment player) does not bias the ratings of the agents training on those tasks. In Sections~\ref{subsec:pl_3p} and \ref{subsec:atp_3p} we examined environment vs agent vs agent games demonstrating that these ideas can be extended to multiagent learning. Indeed a multiagent inspired path to developing increasingly intelligent agents has been proposed \citep{bansal2018_emergent, leibo2019_autocurricula} based on the richness of such dynamics.

As well as evaluating agents (Section~\ref{subsec:multiagent_learning_dynamics}), payoff ratings could also be used as a fitness function to evaluate agents within a population to drive an evolutionary algorithm, for example like in population based training (PBT) \citep{jaderberg2019_ctf}. $\alpha$-PSRO \citep{muller2020_alpharankpsro} can be seen as optimizing agents for the $\alpha$-Rank mass rating. Similarly, JPSRO \citep{marris2021_jpsro_icml} can be seen as optimizing agents for the MGCE payoff rating.

An emerging line of research called \emph{gamification} \citep{gemp2021_eigengame_arxiv} seeks to reinterpret existing problems as games, and apply game theory to improve upon the solutions. Problems with multiple objectives, constraints, competitiveness are potentially amenable to gamification. The ranking problem is suitable for this approach because it is defined in terms of a partial ordering (inequality constraints), can have multiple players, and is inherently competitive.

\section{Conclusions}

In this work we have developed methods for generalising game theoretic rating techniques to N-player, general-sum settings, using the novel payoff rating definition. This builds upon fundamental rating techniques developed in two-player, zero-sum evaluation. We suggest some parameterizations of these algorithms and show ratings results on real world data to demonstrate its flexibility and ability to summarize complex strategic interactions. Finally, we demonstrate the power of this rating as a MARL evaluation technique.

\bibliography{bibtex}
\bibliographystyle{icml2022}

%%%%%%%%%%%%%%%%%%%%%%%%%%%%%%%%%%%%%%%%%%%%%%%%%%%%%%%%%%%%%%%%%%%%%%%%

\clearpage
\appendix

\section{Equilibria Properties}
\label{app:equilibria}

This section discusses two properties of approximate equilibria: when they have full-support solutions and the minimum approximation parameter that will permit the uniform distribution to be the solution.

\subsection{Standard Matrix Form}

Because the equilibria constraints are linear, it is possible to represent the constraints in standard matrix form.

\begin{lemma}
The deviation gain functions, $A^\text{CE}_p(a'_p, a_p, a_{-p})$ and $A^\text{CCE}_p(a'_p, a)$, can be expressed as sparse matrices, $A^\text{CE}_p$ and $A^\text{CCE}_p$, with shapes $[|\mathcal{A}_p|^2, |\mathcal{A}|]$ and $[|\mathcal{A}_p|, |\mathcal{A}|]$. These can be concatenated into a single matrices $A^\text{CE}$ and $A^\text{CCE}$, with shapes $[\sum_p|\mathcal{A}_p|^2, |\mathcal{A}|]$ and $[\sum_p|\mathcal{A}_p|, |\mathcal{A}|]$, such that when combined with a flattened joint distribution vector, $\sigma$, the equilibria constraints can be specified in standard form: $A^\text{CE} \sigma \leq \epsilon$ and $A^\text{CCE} \sigma \leq \epsilon$.
\end{lemma}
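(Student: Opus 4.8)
The plan is to construct the matrices explicitly by fixing enumerations of the rows and columns, specifying each matrix entry, and then verifying by direct expansion that the matrix--vector products reproduce Equations~\eqref{eq:ce_con} and \eqref{eq:cce_con}. First I would fix an ordering of the joint strategy set $\mathcal{A}$, so that a joint distribution $\sigma$ becomes a vector indexed by $a \in \mathcal{A}$, and for each player $p$ fix an ordering of the pairs $(a_p, a'_p) \in \mathcal{A}_p \times \mathcal{A}_p$, \emph{including} the diagonal pairs $a_p = a'_p$; these contribute identically zero rows (their deviation gain vanishes), and retaining them is exactly what makes the row count per player equal to $|\mathcal{A}_p|^2$ rather than $|\mathcal{A}_p|(|\mathcal{A}_p|-1)$.

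For the CE case, define $A^\text{CE}_p$ to be the matrix whose entry in row $(a_p, a'_p)$ and column $a = (b_p, b_{-p})$ is $A^\text{CE}_p(a'_p, a_p, b_{-p}) \, \mathbf{1}[b_p = a_p]$, i.e.\ the deviation gain masked by the indicator that the column's $p$-th coordinate matches $a_p$. Then $(A^\text{CE}_p \sigma)_{(a_p, a'_p)} = \sum_{a_{-p} \in \mathcal{A}_{-p}} \sigma(a_p, a_{-p}) A^\text{CE}_p(a'_p, a_p, a_{-p})$, which is precisely the left-hand side of \eqref{eq:ce_con}. Sparsity is then immediate: each row has at most $|\mathcal{A}_{-p}| = |\mathcal{A}| / |\mathcal{A}_p|$ nonzero entries, and the diagonal rows are entirely zero. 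For the CCE case, define $A^\text{CCE}_p$ to have entry $G_p(a'_p, a_{-p}) - G_p(a) = A^\text{CCE}_p(a'_p, a)$ in row $(a'_p)$ and column $a$; since the CCE constraint sums over all of $\mathcal{A}$, no masking indicator is required, and $(A^\text{CCE}_p \sigma)_{a'_p} = \sum_{a \in \mathcal{A}} \sigma(a) A^\text{CCE}_p(a'_p, a)$ reproduces \eqref{eq:cce_con}; each such row is a difference of two rows of a suitably permuted flattening of the payoff tensor $G_p$, hence expressible through sparse selection operators.

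Finally I would stack the per-player blocks vertically: concatenating $A^\text{CE}_1, \dots, A^\text{CE}_n$ gives $A^\text{CE}$ of shape $[\sum_p |\mathcal{A}_p|^2, |\mathcal{A}|]$, and concatenating $A^\text{CCE}_1, \dots, A^\text{CCE}_n$ gives $A^\text{CCE}$ of shape $[\sum_p |\mathcal{A}_p|, |\mathcal{A}|]$. Correspondingly I would stack the tolerances, repeating $\epsilon_p$ across the $|\mathcal{A}_p|^2$ (resp.\ $|\mathcal{A}_p|$) rows belonging to player $p$, to obtain the right-hand-side vector $\epsilon$. Reading $A^\text{CE} \sigma \leq \epsilon$ and $A^\text{CCE} \sigma \leq \epsilon$ row by row then recovers \eqref{eq:ce_con} and \eqref{eq:cce_con} for every player and every deviation, which is the claim. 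The only real care needed --- and the ``obstacle'', such as it is --- is bookkeeping: keeping the flattening of the multi-index $a$ consistent between $\sigma$, the payoff tensors $G_p$, and the column indexing of every block, and correctly accounting for the diagonal pairs so that the stated shapes come out exactly; there is no analytic content.
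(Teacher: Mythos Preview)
Your proposal is correct and follows essentially the same approach as the paper: define each matrix entry explicitly via the deviation gain masked by an indicator (the paper writes this as a cases expression $a^r_p = a_p$ versus $0$ on a tensor that is then reshaped, you write it as $\mathbf{1}[b_p = a_p]$), then verify the matrix--vector product reproduces the constraint sums. Your treatment is in fact more careful than the paper's on two points the paper glosses over: you explicitly justify why the diagonal pairs $a_p = a'_p$ are retained to obtain the stated row count $|\mathcal{A}_p|^2$, and you spell out the stacking of the per-player blocks and the corresponding replication of $\epsilon_p$ into the right-hand-side vector.
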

\begin{proof}
Let us temporarily consider $A^\text{CE}_p$ to be a tensor of shape $[|\mathcal{A}_p|, |\mathcal{A}_p|, |\mathcal{A}_1|, ..., |\mathcal{A}_n|]$. Let us define:
\begin{align}
    A^\text{CE}_p[a'_p, &a^r_p, a_1, ..., a_p, ...., a_n] =  \nonumber \\
    &\begin{cases}
        G(a'_p, a_{-p}) - G(a_p, a_{-p}) & a^r_p = a_p  \\
        0 & \text{otherwise}
    \end{cases}
\end{align}
By reshaping into $[|\mathcal{A}_p|^2, |\mathcal{A}|]$ we can see by inspection that this recovers the matrix that fits the criteria of the lemma.

Also, let us temporarily consider $A^\text{CCE}_p$ to be a tensor of shape $[|\mathcal{A}_p|, |\mathcal{A}_1|, ..., |\mathcal{A}_n|]$. Let us define:
\begin{align}
    A^\text{CCE}_p[a'_p, a_1, ..., a_p, ...., a_n] = G(a'_p, a_{-p}) - G(a)
\end{align}
By reshaping into $[|\mathcal{A}_p|, |\mathcal{A}|]$ we can see by inspection that this recovers the matrix that fits the criteria of the lemma.
\end{proof}

\subsection{Full Support Conditions}

Approximate equilibria are useful because they permit full-support solutions. Full-support solutions produce well-defined payoff ratings for all strategies.

\begin{theorem}[Approximate Full-Support Existence] \label{the:approx_full_existance}
Using a positive $\epsilon$, $\epsilon$-NE, $\epsilon$-CE and $\epsilon$-CCE will always contain a full-support solution in the feasible space for any finite game with finite payoffs.
\end{theorem}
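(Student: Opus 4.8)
The plan is to perturb an exact equilibrium towards the uniform distribution and show that a sufficiently small perturbation keeps every (approximate) equilibrium constraint satisfied while making the support full. Existence of an exact equilibrium to start from is not an issue: an exact NE exists for any finite game by Nash's theorem, and since $\text{NE}\subseteq\text{CE}\subseteq\text{CCE}$, exact CE and CCE distributions exist as well.

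For CCE, let $\sigma^\star$ be an exact CCE and let $\sigma^\text{uni}$ be the full-support uniform joint distribution, and set $\sigma_\lambda=(1-\lambda)\sigma^\star+\lambda\sigma^\text{uni}$ for $\lambda\in(0,1]$; this is a valid joint distribution with full support. Because the CCE constraints in \eqref{eq:cce_con} are linear in $\sigma$, the left-hand side of the constraint for a deviation $a_p\to a'_p$ evaluated at $\sigma_\lambda$ splits as $(1-\lambda)\sum_{a}\sigma^\star(a)A^\text{CCE}_p(a'_p,a)+\lambda\sum_{a}\sigma^\text{uni}(a)A^\text{CCE}_p(a'_p,a)$. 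The first summand is $\leq 0$ since $\sigma^\star$ is an exact CCE and $1-\lambda\geq 0$; the second is at most $\lambda M$, where $M$ is the maximum over $p$ and $a'_p$ of $\sum_{a}\sigma^\text{uni}(a)A^\text{CCE}_p(a'_p,a)$, which is finite because the game is finite and payoffs are bounded. Hence any $0<\lambda\leq\min\{1,\min_p\epsilon_p/M\}$ (or any $\lambda\in(0,1]$ if $M\leq 0$, in which case $\sigma^\text{uni}$ already works) yields a full-support $\epsilon$-CCE. The argument for $\epsilon$-CE is identical, using that the constraints \eqref{eq:ce_con} are also linear in $\sigma$ and that an exact CE exists.

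The NE case requires more care, because the NE constraints \eqref{eq:ne_con} are nonlinear (the joint must factorize), so two distributions cannot simply be convex-combined. Instead, starting from an exact NE $\sigma^\star=\otimes_p\sigma^\star_p$, I would perturb each marginal separately, $\sigma^\lambda_p=(1-\lambda)\sigma^\star_p+\lambda\,\text{uni}_p$, and take the product $\sigma^\lambda=\otimes_p\sigma^\lambda_p$, which is a product distribution with full support for every $\lambda\in(0,1)$. The key estimate is that the left-hand side of the constraint for $a_p\to a'_p$, namely $\sigma^\lambda(a_p)\sum_{a_{-p}}\bigl(\prod_{q\neq p}\sigma^\lambda(a_q)\bigr)\bigl(G_p(a'_p,a_{-p})-G_p(a_p,a_{-p})\bigr)$, is $O(\lambda)$ uniformly over the finitely many constraints. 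At $\lambda=0$ this quantity is $\sigma^\star(a_p)$ times the deviation gain against $\sigma^\star_{-p}$, which is $\leq 0$: if $\sigma^\star(a_p)>0$ then $a_p$ is a best response to $\sigma^\star_{-p}$, and if $\sigma^\star(a_p)=0$ the factor kills the term. When $\sigma^\star(a_p)=0$ the perturbed value is bounded by $\lambda\,\text{uni}_p(a_p)$ times a payoff-dependent constant, hence $O(\lambda)$; when $\sigma^\star(a_p)>0$, the marginal factor $\sigma^\lambda(a_p)$ stays bounded while the bracketed deviation gain, being multilinear in the $\sigma^\lambda_q$ and nonpositive at $\lambda=0$, is at most $O(\lambda)$ by continuity, so again the product is $O(\lambda)$. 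Choosing $\lambda>0$ small enough that this $O(\lambda)$ bound falls below $\min_p\epsilon_p$ gives a full-support $\epsilon$-NE.

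The main obstacle is precisely this last step: controlling the NE constraint under perturbation, where the bookkeeping must separate the ``$a_p$ outside the support'' case (handled by the vanishing mixed-strategy weight $\sigma^\lambda(a_p)=O(\lambda)$) from the ``$a_p$ in the support but $a'_p$ also a best response'' case (handled by continuity of the deviation gain in the opponents' marginals), and then verify that the implied constant in the $O(\lambda)$ bound is finite — which it is, since the game is finite and payoffs are bounded, so only finitely many constraints and bounded gain functions are involved. The (C)CE parts, by contrast, are essentially immediate from convexity once one notes that the uniform distribution contributes a regret that is merely bounded, not necessarily small, and therefore must be down-weighted by $\lambda$.
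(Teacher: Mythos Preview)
Your proof is correct and takes essentially the same approach as the paper: start from an exact equilibrium, perturb it to spread mass onto every strategy, and bound the resulting constraint violation linearly in the perturbation size. Your convex combination with the uniform distribution is a cleaner execution of the paper's mass-moving bound $\delta(\max(A)-\min(A))$, and your case split for the NE part is considerably more careful than the paper's one-line remark that one moves mass in the marginals rather than the joint.
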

\begin{proof}
% Let us consider moving a small amount of probability mass $\delta$ from $a^>$ to $a^0$. This results in a change in the constraints:
% \begin{align*}
%     \Delta(a'_p) \delta \bigl(&\quad G_p(a'_p, a^0_{-p}) - G_p(a_p, a^0_{-p}) \\
%     &- G_p(a'_p, a^>_{-p}) + G_p(a_p, a^>_{-p}) \bigr)
% \end{align*}
Using the standard form matrix notation we can desribe the convex polytope of feasible solutions is defined by the space defined by the halfspaces $A\sigma \leq \epsilon$. It is known that for (C)CEs there always exists a solution when $\epsilon=0$, however this solution is not necessarily full-support. Starting from  a solution when $\epsilon=0$, if we move probability mass $\delta$ from one joint strategy to another with zero mass, this produces a violation bounded by $\delta(\max(A) - \min(A))$. In the worst case we have to do this for every joint strategy, $a \in \mathcal{A}$. Therefore an approximate solution with $\epsilon$ will permit solutions with:
\begin{equation}
    \delta \geq \frac{\epsilon}{|\mathcal{A}|(\max(A) - \min(A))}
\end{equation}
Therefore there will exist a solution with probability mass $\delta > 0$ if the game is finite, $|\mathcal{A}| < \infty$, and has finite payoffs, $|\max(A) - \min(A)| < \infty$. The proof for $\epsilon$-NE is similar but we move mass from marginal strategies rather than joint strategies.
\end{proof}

We sometimes therefore use the notation $\epsilon=0^+$ when we want to find a full-support solution close to the equilibrium. Some games have feasible solutions even for negative values of approximation parameter, $\epsilon$. It is therefore possible to have a full-support solution when $\epsilon < 0$ for some games.

\begin{remark}[Negative Approximate Full-Support Existence] \label{the:negative_approx_full_existance}
Using approximation parameters $\epsilon_p > \epsilon_p^{\min} ~ \forall p$ in $\epsilon$-CE and $\epsilon$-CCE will always contain a full-support solution in the feasible space in any finite game with finite payoffs, where $\epsilon_p^{\min}$ is any set of minimal $\epsilon_p$ which solves $\arg\min_{\epsilon_p} A_p \sigma \leq \epsilon_p ~ \forall p$.
\end{remark}

Furthermore, the maximum Shannon entropy is guaranteed to select such a full-support solution if one exists.

\begin{theorem}[$\epsilon$-ME Full-Support Solution]
Using an $\epsilon > \epsilon^{\min}$, $\epsilon$-MECE and $\epsilon$-MECCE will select full-support approximate equilibria \citep{ortix2007_mece}.
\end{theorem}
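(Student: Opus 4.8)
The plan is to exploit the unbounded slope of the Shannon entropy near the boundary of the probability simplex, so that the entropy maximizer cannot have any zero coordinate as long as the feasible set contains a full-support point. First I would set up the feasible set: write $P_\epsilon = \{\sigma \in \Delta(\mathcal{A}) : A\sigma \leq \epsilon\}$, where $A$ is the stacked deviation-gain matrix ($A^\text{CE}$ or $A^\text{CCE}$) from the preceding Lemma. This is the intersection of the compact convex simplex with finitely many closed halfspaces, hence compact and convex. By \cref{the:negative_approx_full_existance}, the hypothesis $\epsilon > \epsilon^{\min}$ guarantees $P_\epsilon$ contains a full-support point $\sigma^\circ$ (with $\sigma^\circ(a) > 0$ for all $a \in \mathcal{A}$). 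Since $x \mapsto -x\ln x$ extends continuously to $x = 0$, the entropy $H(\sigma) = -\sum_{a} \sigma(a)\ln\sigma(a)$ is continuous on $\Delta(\mathcal{A})$ and strictly concave, so it attains a unique maximum over $P_\epsilon$ at some $\sigma^\star$; this is precisely the point selected by $\epsilon$-MECE / $\epsilon$-MECCE.

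The core step is a contradiction argument. Suppose $\sigma^\star$ is not full-support, and let $Z = \{a : \sigma^\star(a) = 0\} \neq \emptyset$. Consider the segment $\sigma_\lambda = (1-\lambda)\sigma^\star + \lambda \sigma^\circ$ for $\lambda \in [0,1]$, which stays in $P_\epsilon$ by convexity. I would split $H(\sigma_\lambda)$ into the contribution of $Z$ and of its complement. On $Z$ we have $\sigma_\lambda(a) = \lambda \sigma^\circ(a)$, so that block contributes $-\sum_{a\in Z}\lambda\sigma^\circ(a)\big(\ln\lambda + \ln\sigma^\circ(a)\big) = \big(\sum_{a\in Z}\sigma^\circ(a)\big)\,\lambda\ln(1/\lambda) + O(\lambda)$. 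On the complement every coordinate of $\sigma^\star$ is strictly positive, so $H$ restricted there is smooth near $\lambda = 0$ and contributes $H_{Z^c}(\sigma^\star) + O(\lambda)$. Hence $H(\sigma_\lambda) - H(\sigma^\star) = c\,\lambda\ln(1/\lambda) - O(\lambda)$ with $c = \sum_{a\in Z}\sigma^\circ(a) > 0$, which is strictly positive for all sufficiently small $\lambda > 0$. This contradicts the maximality of $\sigma^\star$, so $Z = \emptyset$ and $\sigma^\star$ is full-support.

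Finally I would observe that nothing in the argument distinguishes CE from CCE: the only properties used are that the feasible set is a nonempty compact convex polytope and that it contains a full-support point, both of which hold for $\epsilon$-CE and $\epsilon$-CCE when $\epsilon > \epsilon^{\min}$ by the earlier Lemma and \cref{the:negative_approx_full_existance}. The main obstacle is making the ``infinite slope'' heuristic rigorous: $H$ is not differentiable at the boundary of $\Delta(\mathcal{A})$, so one cannot simply invoke a first-order optimality condition — the estimate must be carried out directly, showing that the $\lambda\ln(1/\lambda)$ term strictly dominates the $O(\lambda)$ remainder for small $\lambda$. (This is essentially the argument underlying the full-support property of $\epsilon$-MECE in \citet{ortix2007_mece}.)
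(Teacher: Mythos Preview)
Your proposal is correct and takes essentially the same approach as the paper: both argue that a full-support feasible point exists (from the preceding existence result) and that the Shannon entropy's unbounded gradient at the boundary of the simplex forces the maximizer into the interior. Your version is more rigorous than the paper's --- where the paper simply notes $\partial L/\partial\sigma(a)\big|_{\sigma(a)=0}=\infty$ and concludes by ``considering the objective landscape,'' you carry out the explicit convex-combination estimate and isolate the dominating $\lambda\ln(1/\lambda)$ term --- but the underlying mechanism is identical.
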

\begin{proof}
Noting the existence of a feasible full-support solution from Theorem~\ref{the:approx_full_existance}, we only need to prove that the maximum entropy objective will select such a full-support solution. The objective for the Shannon entropy is $L=\sum_{a \in \mathcal{A}} - \sigma(a) \ln(\sigma(a))$, and its derivative is $\frac{\partial L}{\partial \sigma(a)}=-(\ln(\sigma(a)) + 1)$, which is infinite when the mass is zero, $\frac{\partial L}{\partial \sigma(a)}|_{\sigma(a) = 0} = \infty$ and finite when $\sigma(a) > 0$. Therefore, by considering the objective landscape, solutions that are not on the surface of the probability simplex as selected.
\end{proof}

Other selection methods, like linear objectives and maximum Gini do not have this property because they have finite gradient when $\sigma(a) = 0$, and therefore may not leave the boundary of the probability simplex.

\subsection{Minimum Uniform Approximation}

Another property of $\epsilon$-(C)CEs is that it is possible to determine the minimum $\epsilon$ that will include the uniform distribution in the feasible set just by examining the payoff tensor.

\begin{theorem}[Minimum Uniform $\epsilon$-CE and $\epsilon$-NE] \label{the:min_uni_epsilon_ce}
The uniform distribution is feasible for $\epsilon$-CE  and $\epsilon$-NE when using approximation parameter:
\begin{align}
    \epsilon_p &\geq \epsilon_p^\text{uni} \nonumber \\
    &= \max_{a'_p, a_p} \smashoperator{\sum_{a_{-p} \in \mathcal{A}_{-p}}} \frac{1}{|\mathcal{A}|} A^\text{CE}_p(a'_p, a_p, a_{-p}) \\
    &= \frac{1}{|\mathcal{A}|} \left ( \max_{a'_p} \smashoperator{\sum_{a_{-p} \in \mathcal{A}_{-p}}} G_p(a'_p, a_{-p}) - \min_{a_p} \smashoperator{\sum_{a_{-p} \in \mathcal{A}_{-p}}} G_p(a_p, a_{-p}) \right ) \nonumber
\end{align}
\end{theorem}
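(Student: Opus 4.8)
The plan is to substitute the uniform joint distribution directly into the $\epsilon$-CE constraint \eqref{eq:ce_con} and read off the smallest $\epsilon_p$ for which every constraint is satisfied. Under the uniform distribution we have $\sigma(a) = 1/|\mathcal{A}|$ for all $a$, so in particular $\sigma(a_p, a_{-p}) = 1/|\mathcal{A}|$ is constant in $a_{-p}$ and can be pulled out of the sum over $a_{-p}$. The constraint associated with the deviation $a_p \to a'_p$ then reads $\frac{1}{|\mathcal{A}|}\sum_{a_{-p} \in \mathcal{A}_{-p}} A^\text{CE}_p(a'_p, a_p, a_{-p}) \leq \epsilon_p$. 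Since the uniform distribution lies in the $\epsilon$-CE feasible set precisely when all of these inequalities hold simultaneously (over all players $p$, all $a_p \in \mathcal{A}_p$, and all $a'_p \neq a_p$), feasibility is equivalent to $\epsilon_p$ being at least the maximum of these left-hand sides over $(a'_p, a_p)$, which is exactly the first displayed expression for $\epsilon_p^\text{uni}$.

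Next I would derive the closed form in terms of $G_p$. Expanding $A^\text{CE}_p(a'_p, a_p, a_{-p}) = G_p(a'_p, a_{-p}) - G_p(a_p, a_{-p})$, the sum over $a_{-p}$ splits into a term depending only on $a'_p$ and a term depending only on $a_p$, so $\max_{a'_p \neq a_p}\bigl(\sum_{a_{-p}} G_p(a'_p, a_{-p}) - \sum_{a_{-p}} G_p(a_p, a_{-p})\bigr) = \max_{a'_p}\sum_{a_{-p}} G_p(a'_p, a_{-p}) - \min_{a_p}\sum_{a_{-p}} G_p(a_p, a_{-p})$. Dividing by $|\mathcal{A}|$ then gives the second displayed expression.

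Finally, for $\epsilon$-NE I would observe that the uniform joint distribution factorizes, $\sigma(a) = \otimes_p \sigma(a_p)$ with each marginal $\sigma(a_p) = 1/|\mathcal{A}_p|$, so the nonlinear constraint \eqref{eq:ne_con} collapses onto the linear CE constraint evaluated at the uniform point: $\otimes_q \sigma(a_q) = \prod_q 1/|\mathcal{A}_q| = 1/|\mathcal{A}|$, identical to the factor $\sigma(a_p,a_{-p})$ appearing above. Hence the $\epsilon$-NE feasibility condition for the uniform distribution is the very same inequality, and the same bound $\epsilon_p^\text{uni}$ applies.

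I do not expect a serious obstacle; the argument is essentially a direct substitution followed by a maximization. The one point requiring a word of care is the decomposition of the maximum of a difference when the diagonal restriction $a'_p \neq a_p$ is in force: if the maximizing $a'_p$ and minimizing $a_p$ coincide, then all the row-sums $\sum_{a_{-p}} G_p(\cdot, a_{-p})$ are equal, in which case both sides of the identity are zero and it still holds (with $\epsilon_p^\text{uni} = 0$), so dropping the restriction is harmless. The other implicit assumption, $|\mathcal{A}| = \prod_p |\mathcal{A}_p|$, is what makes the NE factorization line up and is used only there.
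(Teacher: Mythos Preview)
Your proposal is correct and follows essentially the same route as the paper: substitute the uniform distribution into the $\epsilon$-CE constraints, pass to the tightest constraint by taking the maximum over $(a'_p,a_p)$, expand the deviation gain to separate the $\max$ and $\min$ terms, and handle $\epsilon$-NE by observing that the uniform distribution factorizes so that \eqref{eq:ne_con} coincides with \eqref{eq:ce_con} at this point. Your treatment is in fact slightly more careful than the paper's, since you explicitly address the $a'_p \neq a_p$ restriction and note that dropping it is harmless when the maximizing and minimizing indices coincide.
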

\begin{proof}
Consider the definition of the constraints of the $\epsilon$-CE (Equation~\eqref{eq:ce_con}) and the $\epsilon$-NE (Equation~\eqref{eq:ne_con}):
\begin{align*}
    \smashoperator{\sum_{a_{-p} \in \mathcal{A}_{-p}}} \sigma(a_{p}, a_{-p}) A^\text{CE}_p(a'_p, a_p, a_{-p}) &\leq \epsilon_p ~ \forall a'_p \neq a_p \in \mathcal{A}_p, \forall p
\end{align*}

We can ignore the additional condition that NE must also factorize because we will substitute the joint distribution with the uniform distribution later, which factorizes. Instead of considering all the constraints $a'_p \neq a_p \in \mathcal{A}_p$, equivalently, we only need to consider the constraint with the maximum violation:
\begin{align*}
    \max_{a'_p, a_p} \smashoperator{\sum_{a_{-p} \in \mathcal{A}_{-p}}} \sigma(a_{p}, a_{-p}) A^\text{CE}_p(a'_p, a_p, a_{-p}) &\leq \epsilon_p ~ \forall p
\end{align*}

We wish to find the minimum $\epsilon_p$ such that the uniform distribution is a feasible solution. We can do this by substituting $\sigma(a)=\frac{1}{|\mathcal{A}|}$:
\begin{align*}
    \max_{a'_p, a_p} \smashoperator{\sum_{a_{-p} \in \mathcal{A}_{-p}}} \frac{1}{|\mathcal{A}|} A^\text{CE}_p(a'_p, a_p, a_{-p}) &\leq \epsilon_p ~ \forall p
\end{align*}

Finally, $\epsilon_p^\text{uni}$ can be defined in terms of only the payoffs:
\begin{align*}
    \epsilon_p^\text{uni} &= \max_{a'_p, a_p} \smashoperator{\sum_{a_{-p} \in \mathcal{A}_{-p}}} \frac{1}{|\mathcal{A}|} A^\text{CE}_p(a'_p, a_p, a_{-p}) \\
    &= \frac{1}{|\mathcal{A}|} \max_{a'_p, a_p} \smashoperator{\sum_{a_{-p} \in \mathcal{A}_{-p}}} \left( G_p(a'_p, a_{-p}) - G_p(a_p, a_{-p}) \right) \\
    &= \frac{1}{|\mathcal{A}|} \left( \max_{a'_p} \smashoperator{\sum_{a_{-p} \in \mathcal{A}_{-p}}} G_p(a'_p, a_{-p}) - \min_{a_p} \smashoperator{\sum_{a_{-p} \in \mathcal{A}_{-p}}} G_p(a_p, a_{-p}) \right)
\end{align*}
\end{proof}

\begin{theorem}[Minimum Uniform $\epsilon$-CCE] \label{the:min_uni_epsilon_cce}
The uniform distribution is feasible for $\epsilon$-CCE when using approximation parameter:
\begin{align}
    \epsilon_p &\geq \epsilon_p^\text{uni} \nonumber \\
    &= \max_{a'_p} \smashoperator{\sum_{a \in \mathcal{A}}} \frac{1}{|\mathcal{A}|} A^\text{CCE}_p(a'_p, a) \\
    &= \frac{1}{|\mathcal{A}|} \left ( |\mathcal{A}_p| \max_{a'_p} \smashoperator[r]{\sum_{a_{-p} \in \mathcal{A}_{-p}}} G_p(a'_p, a_{-p}) - \smashoperator{\sum_{a \in \mathcal{A}}} G_p(a) \right )  \nonumber
\end{align}
\end{theorem}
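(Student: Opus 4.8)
The plan is to mirror the proof of Theorem~\ref{the:min_uni_epsilon_ce} essentially step for step, with the CCE constraint~\eqref{eq:cce_con} playing the role of the CE constraint~\eqref{eq:ce_con}. First I would start from the full family of $\epsilon$-CCE constraints, $\sum_{a \in \mathcal{A}} \sigma(a_p, a_{-p}) A^\text{CCE}_p(a'_p, a) \leq \epsilon_p$ for every $a'_p \in \mathcal{A}_p$ and every player $p$, and observe that, for a \emph{fixed} distribution $\sigma$, feasibility is equivalent to the single tightest constraint per player, $\max_{a'_p} \sum_{a \in \mathcal{A}} \sigma(a) A^\text{CCE}_p(a'_p, a) \leq \epsilon_p$.

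Next I would substitute the candidate distribution, the uniform joint $\sigma(a) = 1/|\mathcal{A}|$, so that the minimal admissible approximation parameter is $\epsilon_p^\text{uni} = \max_{a'_p} \sum_{a \in \mathcal{A}} \tfrac{1}{|\mathcal{A}|} A^\text{CCE}_p(a'_p, a)$. I would then expand $A^\text{CCE}_p(a'_p, a) = G_p(a'_p, a_{-p}) - G_p(a)$ and split the sum over $\mathcal{A}$ into its two summands. The only genuine bookkeeping step is noticing that $G_p(a'_p, a_{-p})$ does not depend on $a_p$, so summing it over all of $\mathcal{A}$ yields $|\mathcal{A}_p| \sum_{a_{-p} \in \mathcal{A}_{-p}} G_p(a'_p, a_{-p})$, whereas the second summand contributes $-\sum_{a \in \mathcal{A}} G_p(a)$, which is a constant in $a'_p$ and hence comes out of the maximisation. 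Collecting terms gives the claimed closed form.

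The main obstacle — and it is a mild one — is the structural asymmetry with the CE/NE case: there a deviation is an ordered pair $(a'_p, a_p)$, and the joint maximisation over $a'_p$ together with the minimisation over $a_p$ separated cleanly into the $\max_{a'_p}(\cdot) - \min_{a_p}(\cdot)$ form; here a CCE deviation is a single action $a'_p$ while the baseline term is the unconditional sum $\sum_{a \in \mathcal{A}} G_p(a)$, so one must be careful to see that this baseline is truly independent of $a'_p$ and that no spurious inner min/max appears, which is why the final expression has the $|\mathcal{A}_p|$ prefactor on the first term but not the second. Everything else is identical in spirit to the preceding proof, and no factorisation subtlety arises since the uniform distribution already factorises.
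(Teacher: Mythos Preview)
Your proposal is correct and follows essentially the same route as the paper: the paper's proof simply says ``Similar to Theorem~\ref{the:min_uni_epsilon_ce}'' and then writes out exactly the chain you describe --- take the max over $a'_p$, substitute the uniform joint, expand $A^\text{CCE}_p$, pull the $a'_p$-independent baseline $\sum_{a\in\mathcal{A}} G_p(a)$ out of the $\max$, and collapse $\sum_{a\in\mathcal{A}} G_p(a'_p,a_{-p})$ to $|\mathcal{A}_p|\sum_{a_{-p}} G_p(a'_p,a_{-p})$. Your remark about the structural asymmetry with the CE/NE case (single deviation action versus pair) is a nice clarifying observation but is not needed for, and does not appear in, the paper's version.
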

\begin{proof}
Similar to Theorem~\ref{the:min_uni_epsilon_ce}. We arrive at:
\begin{align*}
    \epsilon_p^\text{uni} &= \max_{a'_p} \smashoperator{\sum_{a_{-p} \in \mathcal{A}_{-p}}} \frac{1}{|\mathcal{A}|} A^\text{CCE}_p(a'_p, a) \\
    &= \frac{1}{|\mathcal{A}|} \max_{a'_p} \smashoperator{\sum_{a \in \mathcal{A}}} \left( G_p(a'_p, a_{-p}) - G_p(a) \right) \\
    &= \frac{1}{|\mathcal{A}|} \left( \max_{a'_p} \smashoperator{\sum_{a \in \mathcal{A}}}  G_p(a'_p, a_{-p}) - \smashoperator{\sum_{a \in \mathcal{A}}} G_p(a) \right) \\
    &= \frac{1}{|\mathcal{A}|} \left ( |\mathcal{A}_p| \max_{a'_p} \smashoperator[r]{\sum_{a_{-p} \in \mathcal{A}_{-p}}} G_p(a'_p, a_{-p}) - \smashoperator{\sum_{a \in \mathcal{A}}} G_p(a) \right )
\end{align*}
\end{proof}

\section{Justification and Intuition}
\label{app:justification}

A payoff can be arbitrarily mapped to a scalar for each strategy in a game to achieve a rating. For a rating definition to be compelling one must motivate why it is more interesting than other mappings.

The main text makes some intuitive arguments about why game theoretic equilibrium methods are appropriate rating algorithms. Namely, that ratings are defined under joint distributions in equilibrium, so no player has incentive to unilaterally deviate from them. This is in contrast to the uniform distribution which is rarely an equilibrium.

This section makes mathematical arguments to justify and build intuition behind the equilibrium concepts and the ratings they define. To do this we explore a number of properties each equilibrium concept possesses. The first such property is \emph{grouping}, a game theoretic property that enforces strategies that are strategic cycle with one another should get equal or similar ratings. The second property that we consider is dominance of strategies in terms of their payoffs translates to dominance of payoff ratings too. Thirdly, we examine whether ratings are consistent when we repeat strategies or are part of a symmetric game.

\subsection{NE}
\label{subapp:ne_justification}

{\bf Grouping.} A curious property of the NE payoff rating (Nash Average \citep{balduzzi2018_nashaverage}) is that all strategies with positive support have equal rating\footnote{See Figure~\ref{fig:premier_2p_ratings} for an example.}. We call this property the \emph{grouping property}, where strategies in strategic cycles together are grouped with similar ratings despite perhaps having very different payoffs.

\begin{theorem}[$\epsilon$-NE Grouping]\label{the:ne_grouping}
Strategies for $0$-NE payoff ratings with positive support have equal payoff rating. Strategies for $\epsilon$-NE payoff ratings with positive support have ratings bounded by:
\begin{equation}
    \left|r^\sigma_p(a'_p) - r^\sigma_p(a_p)\right| \leq \max\left[\frac{\epsilon_p}{\sigma(a_p)}, \frac{\epsilon_p}{\sigma(a'_p)}\right]
\end{equation}
\end{theorem}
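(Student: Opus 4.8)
The plan is to exploit the defining feature of a Nash equilibrium: the joint distribution factorizes, $\sigma(a) = \otimes_q \sigma(a_q)$, so that the conditional appearing in the payoff rating collapses to a product of opponent marginals, $\sigma(a_{-p}\mid a_p) = \prod_{q\neq p}\sigma(a_q)$, whenever $\sigma(a_p) > 0$. Consequently $r^\sigma_p(a_p) = \sum_{a_{-p}\in\mathcal{A}_{-p}} G_p(a_p,a_{-p})\prod_{q\neq p}\sigma(a_q)$ is a \emph{linear} functional of the opponent marginals that depends on $a_p$ only through the payoff slice $G_p(a_p,\cdot)$, not through $\sigma$. This is exactly what makes the NE rating amenable to a clean bound.

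Next I would substitute this expression into the $\epsilon$-NE constraint \eqref{eq:ne_con}. Writing $\otimes_q \sigma(a_q) = \sigma(a_p)\prod_{q\neq p}\sigma(a_q)$ and pulling the $a_p$-marginal out of the sum over $a_{-p}$, the left-hand side of the constraint for the ordered deviation pair $(a_p \to a'_p)$ becomes $\sigma(a_p)$ times $\sum_{a_{-p}}\bigl(G_p(a'_p,a_{-p}) - G_p(a_p,a_{-p})\bigr)\prod_{q\neq p}\sigma(a_q)$, and the remaining sum is exactly $r^\sigma_p(a'_p) - r^\sigma_p(a_p)$ by \eqref{eq:payoff_rating} together with the NE factorization. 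Hence the constraint reads $\sigma(a_p)\bigl(r^\sigma_p(a'_p) - r^\sigma_p(a_p)\bigr) \le \epsilon_p$. Since we restrict to strategies with positive support, $\sigma(a_p) > 0$, so dividing gives $r^\sigma_p(a'_p) - r^\sigma_p(a_p) \le \epsilon_p/\sigma(a_p)$.

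Finally I would apply the same constraint to the swapped pair $(a'_p \to a_p)$, which yields $\sigma(a'_p)\bigl(r^\sigma_p(a_p) - r^\sigma_p(a'_p)\bigr) \le \epsilon_p$ and hence $r^\sigma_p(a_p) - r^\sigma_p(a'_p) \le \epsilon_p/\sigma(a'_p)$ (again using $\sigma(a'_p) > 0$). Combining the two one-sided bounds gives $\bigl|r^\sigma_p(a'_p) - r^\sigma_p(a_p)\bigr| \le \max\bigl[\epsilon_p/\sigma(a_p),\,\epsilon_p/\sigma(a'_p)\bigr]$, and setting $\epsilon_p = 0$ forces all positively-supported strategies to share a common rating, establishing the grouping claim.

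There is no serious obstacle here; once the factorization is invoked the argument is essentially a one-line manipulation. The only points requiring care are (i) retaining the positive-support hypothesis throughout, since it is needed both for the conditional (hence the rating) to be defined and for the divisions to be legal, and (ii) checking the orientation convention in $A^\text{CE}_p$ so that the two deviation directions genuinely produce the upper and the lower bound. For CE or CCE, where $\sigma$ need not factorize, the conditional $\sigma(a_{-p}\mid a_p)$ really does vary with $a_p$ and this trick fails, so the proof is specific to the NE case.
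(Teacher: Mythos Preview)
Your proposal is correct and follows essentially the same approach as the paper: use the NE factorization so that $\sigma(a_{-p}\mid a_p)$ is independent of $a_p$, substitute into the $\epsilon$-NE constraint to obtain $\sigma(a_p)\bigl(r^\sigma_p(a'_p)-r^\sigma_p(a_p)\bigr)\le\epsilon_p$, divide by the positive marginal, swap the roles of $a_p$ and $a'_p$, and combine the two one-sided inequalities. The paper's derivation is marginally less explicit about the factorization step (it silently replaces $\sigma(a_{-p})$ by $\sigma(a_{-p}\mid a'_p)$ and $\sigma(a_{-p}\mid a_p)$ on the two sides), but the logic is identical.
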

\begin{proof}
Consider the $\epsilon$-NE constraints (Equation~\ref{eq:ne_con}) between strategies $a_p$ and $a'_p$. First expand the definition of the deviation gain and observe that the definitions of the NE payoff ratings appear directly in the constraints.
\begin{align}
    &\quad \smashoperator{\sum_{a_{-p} \in \mathcal{A}_{-p}}} \sigma(a_p) \sigma(a_{-p}) G_p(a'_p, a_{-p}) \nonumber \\
    &\leq \smashoperator{\sum_{a_{-p} \in \mathcal{A}_{-p}}} \sigma(a_p) \sigma(a_{-p}) G_p(a_p, a_{-p}) + \epsilon_p \nonumber \\
    &\quad \sigma(a_p) \smashoperator{\sum_{a_{-p} \in \mathcal{A}_{-p}}} \sigma(a_{-p}|a'_p) G_p(a'_p, a_{-p}) \nonumber \\
    &\leq \sigma(a_p) \smashoperator{\sum_{a_{-p} \in \mathcal{A}_{-p}}} \sigma(a_{-p}|a_p) G_p(a_p, a_{-p}) + \epsilon_p \nonumber \\
    &\quad \sigma(a_p) r^\sigma_p(a'_p) \leq \sigma(a_p) r^\sigma_p(a_p) + \epsilon_p  \label{eq:ne_rating_order_left}
\end{align}

Therefore a strategy, $a_p$, has an $\epsilon$-approximate better rating than another, $a'_p$, if there is negative incentive to deviate from strategy $a_p$ to $a'_p$ under the NE distribution (assuming $\sigma(a_p) > 0$).
\begin{align}\label{eq:ne_grouping_left}
    r^\sigma_p(a'_p) &\leq r^\sigma_p(a_p) + \frac{\epsilon_p}{\sigma(a_p)}
\end{align}

The opposite equation also applies.
\begin{align}
    &\quad \smashoperator{\sum_{a_{-p} \in \mathcal{A}_{-p}}} \sigma(a'_p)\sigma(a_{-p}) G_p(a_p, a_{-p}) \nonumber \\
    &\leq \nonumber \smashoperator{\sum_{a_{-p} \in \mathcal{A}_{-p}}} \sigma(a'_p)\sigma(a_{-p}) G_p(a'_p, a_{-p}) + \epsilon  \nonumber \\
    &\quad \sigma(a'_p) \smashoperator{\sum_{a_{-p} \in \mathcal{A}_{-p}}} \sigma(a_{-p}|a_p) G_p(a_p, a_{-p}) \nonumber \\
    &\leq \sigma(a'_p) \smashoperator{\sum_{a_{-p} \in \mathcal{A}_{-p}}} \sigma(a_{-p}|a'_p) G_p(a'_p, a_{-p}) + \epsilon  \nonumber \\
    &\quad \sigma(a'_p) r^\sigma_p(a_p) \leq \sigma(a'_p) r^\sigma_p(a'_p) + \epsilon_p \label{eq:ne_rating_order_right}
\end{align}

And, if there is support $\sigma(a'_p) > 0$.
\begin{align}\label{eq:ne_grouping_right}
    r^\sigma_p(a_p) &\leq r^\sigma_p(a'_p) + \frac{\epsilon_p}{\sigma(a'_p)}
\end{align}

Therefore, if both $a_p$ and $a'_p$ have support, then Equation~\eqref{eq:ne_grouping_left} and Equation~\eqref{eq:ne_grouping_right} can be combined:
\begin{align*}
    r^\sigma_p(a_p) - \frac{\epsilon_p}{\sigma(a'_p)} &\leq r^\sigma_p(a'_p) \leq r^\sigma_p(a_p) + \frac{\epsilon_p}{\sigma(a_p)} \\
    - \frac{\epsilon_p}{\sigma(a'_p)} &\leq r^\sigma_p(a'_p) - r^\sigma_p(a_p) \leq \frac{\epsilon_p}{\sigma(a_p)} \\
\end{align*}

Therefore we have bounds:
\begin{align*}
    \left|r^\sigma_p(a'_p) - r^\sigma_p(a_p)\right| \leq \max\left[\frac{\epsilon_p}{\sigma(a_p)}, \frac{\epsilon_p}{\sigma(a'_p)}\right]
\end{align*}

Therefore we can see that when $\epsilon = 0$, $\sigma(a_p) > 0$, $\sigma(a'_p) > 0$, the payoff ratings will be equal: $r^\sigma_p(a_p) = r^\sigma_p(a'_p)$.
\end{proof}

This result is unsurprising for those familiar with NE: if there is a benefit to deviating strategies the opponents will adjust their distribution to compensate.

{\bf Dominance.} It is also possible to make arguments around strategy dominance and their resulting payoff rating.
% \sm{Should this be a new subsection?}

\begin{theorem}[NE Weak Dominance]
If strategy $a_p$ weakly dominates $a'_p$; $G_p(a_p, a_{-p}) \geq G_p(a'_p, a_{-p}) ~ \forall a_{-p} \in \mathcal{A}_{-p}$, then $r^\sigma_p(a_p) \geq r^\sigma_p(a'_p)$.  $\epsilon$-NEs are weakly dominated up to a constant $\frac{\epsilon_p}{\sigma(a_p)}$.
\end{theorem}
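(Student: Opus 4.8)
The plan is to exploit the factorization of Nash equilibria, which collapses the conditional in \eqref{eq:payoff_rating} to the opponent marginal, turning the payoff rating into a simple $\sigma$-weighted average of a strategy's payoff row; weak dominance then propagates through this average termwise. Concretely, for a (possibly approximate) NE the independence $\sigma(a_{-p}\mid b_p)=\sigma(a_{-p})$ gives $r^\sigma_p(b_p)=\sum_{a_{-p}\in\mathcal{A}_{-p}} G_p(b_p,a_{-p})\,\sigma(a_{-p})$ for every strategy $b_p$ with positive support, and the right-hand side is precisely the value the convention of Section~\ref{subsec:payoff_rating} assigns when $b_p$ has zero support. I would record this first.

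For the exact case ($\epsilon=0$) I would then write
\[
r^\sigma_p(a_p)-r^\sigma_p(a'_p)=\sum_{a_{-p}\in\mathcal{A}_{-p}}\big(G_p(a_p,a_{-p})-G_p(a'_p,a_{-p})\big)\,\sigma(a_{-p}),
\]
and observe that each factor $G_p(a_p,a_{-p})-G_p(a'_p,a_{-p})$ is nonnegative by weak dominance while each $\sigma(a_{-p})\ge 0$, so the whole sum is nonnegative and hence $r^\sigma_p(a_p)\ge r^\sigma_p(a'_p)$. This is the main half of the statement and it is a one-line termwise comparison.

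For the approximate case I would reuse the single $\epsilon$-NE constraint in \eqref{eq:ne_con} associated with player $p$ deviating from the recommended strategy $a_p$ to $a'_p$; expanding the product and factoring out $\sigma(a_p)$ reproduces exactly Equation~\eqref{eq:ne_rating_order_left}, namely $\sigma(a_p)\,r^\sigma_p(a'_p)\le\sigma(a_p)\,r^\sigma_p(a_p)+\epsilon_p$. Dividing by $\sigma(a_p)>0$ and rearranging gives $r^\sigma_p(a_p)\ge r^\sigma_p(a'_p)-\epsilon_p/\sigma(a_p)$, i.e.\ the claimed weak dominance up to the additive slack $\epsilon_p/\sigma(a_p)$; sending $\epsilon_p\to 0$ also recovers the exact statement. (Note this last inequality — like \eqref{eq:ne_grouping_left} — holds for any pair of strategies regardless of dominance; it is the exact-NE argument that genuinely uses the dominance hypothesis to upgrade $\ge -\epsilon_p/\sigma(a_p)$ to $\ge 0$.)

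There is no deep obstacle here; the only point needing care is strategies with zero marginal probability, for which $r^\sigma_p(a'_p)$ is formally undefined. I would handle this exactly as elsewhere in the paper, reading $r^\sigma_p(a'_p)$ as $\sum_{a_{-p}\in\mathcal{A}_{-p}} G_p(a'_p,a_{-p})\,\sigma(a_{-p})$ — the quantity that appears verbatim in the expanded constraint after factoring out $\sigma(a_p)$ — so that the argument goes through without qualification once the NE factorization has been invoked.
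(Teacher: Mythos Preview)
Your proposal is correct. For the approximate part you are doing exactly what the paper does: the paper's entire proof is the one-line ``We can see from Equation~\eqref{eq:ne_grouping_left} that this property immediately follows from the definition of NE,'' i.e.\ it invokes the constraint-derived bound $r^\sigma_p(a'_p)\le r^\sigma_p(a_p)+\epsilon_p/\sigma(a_p)$ and reads off the $\epsilon=0$ case from it.

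Where you differ is the exact-NE half: the paper simply sets $\epsilon_p=0$ in \eqref{eq:ne_grouping_left}, which needs $\sigma(a_p)>0$ and, as you correctly observe, does not actually use the weak-dominance hypothesis at all. Your termwise argument via the factorization $r^\sigma_p(b_p)=\sum_{a_{-p}}G_p(b_p,a_{-p})\sigma(a_{-p})$ is more elementary, genuinely exploits $G_p(a_p,a_{-p})\ge G_p(a'_p,a_{-p})$, and does not require positive support on $a_p$. The trade-off is that your argument is specific to factorizable $\sigma$ (it leans on $\sigma(a_{-p}\mid b_p)=\sigma(a_{-p})$), whereas the paper's route reuses machinery already built for the grouping theorem. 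One caveat: the convention you invoke for zero-support strategies --- reading $r^\sigma_p(a'_p)$ as $\sum_{a_{-p}}G_p(a'_p,a_{-p})\sigma(a_{-p})$ --- is natural for NEs but is not the convention the paper states in Section~\ref{subsec:payoff_rating} (which assigns $\min_a G_p(a)$); your termwise inequality still goes through under the paper's convention, but you should say so rather than introduce a different one.
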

\begin{proof}
    We can see from Equation~\eqref{eq:ne_grouping_left} that this property immediately follows from the definition of NE.
\end{proof}

It is also possible to prove that strategies that have zero support have payoff rating no better than those with support.

\begin{theorem}[NE Zero Support Bound]
If strategy $a'_p$ has zero support, it has a NE payoff rating no greater than a strategy $a_p$ with support. This is true for $\epsilon$-NEs up to a constant $\frac{\epsilon_p}{\sigma(a_p)}$.
\end{theorem}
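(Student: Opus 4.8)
The key preliminary observation is that, although a strategy with zero marginal probability usually has undefined payoff rating, for a Nash equilibrium the joint factorizes, so $\sigma(a_{-p}\mid a'_p)$ can be taken to be the marginal $\sigma(a_{-p})$ independently of whether $\sigma(a'_p)>0$. Hence $r^\sigma_p(a'_p)=\sum_{a_{-p}\in\mathcal{A}_{-p}} G_p(a'_p,a_{-p})\,\sigma(a_{-p})$ is well-defined even when $a'_p$ is outside the support, and the statement is meaningful.

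The plan is then to reuse exactly one half of the argument in the proof of Theorem~\ref{the:ne_grouping}. First I would write the $\epsilon$-NE constraint \eqref{eq:ne_con} for the ordered pair ``$a_p$ deviating to $a'_p$'' (this is a legitimate constraint since $a'_p\neq a_p$, regardless of supports), expand the deviation gain $A^\text{CE}_p$, and factor out $\sigma(a_p)$ using the product form of the joint. This yields precisely $\sigma(a_p)\,r^\sigma_p(a'_p)\leq \sigma(a_p)\,r^\sigma_p(a_p)+\epsilon_p$, i.e. Equation~\eqref{eq:ne_rating_order_left}. Dividing by $\sigma(a_p)>0$ (which holds by hypothesis that $a_p$ has support) gives
\[
    r^\sigma_p(a'_p)\;\leq\; r^\sigma_p(a_p)+\frac{\epsilon_p}{\sigma(a_p)},
\]
which is the claimed $\epsilon$-NE bound; specializing to $\epsilon_p=0$ gives $r^\sigma_p(a'_p)\leq r^\sigma_p(a_p)$.

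The point worth emphasizing — and the only place any care is needed — is that this direction of the grouping inequality requires only $\sigma(a_p)>0$ and never uses $\sigma(a'_p)>0$; the ``opposite'' inequality \eqref{eq:ne_grouping_right}, which is the one that does need $\sigma(a'_p)>0$, is simply not invoked here. So there is no genuine obstacle: the result is essentially a one-sided corollary of Theorem~\ref{the:ne_grouping}, with the mild extra remark (made at the start of the proof) that factorization rescues the definition of $r^\sigma_p(a'_p)$ for an unsupported strategy. I would keep the write-up to two or three lines, citing Equation~\eqref{eq:ne_grouping_left}.
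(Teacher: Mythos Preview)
Your proposal is correct and, in fact, cleaner than the paper's own argument. You invoke the ``forward'' constraint \eqref{eq:ne_rating_order_left}/\eqref{eq:ne_grouping_left}, derived from the deviation $a_p\to a'_p$, which only needs $\sigma(a_p)>0$ to divide through and immediately yields $r^\sigma_p(a'_p)\le r^\sigma_p(a_p)+\epsilon_p/\sigma(a_p)$. The paper instead points at the ``backward'' constraint \eqref{eq:ne_rating_order_right} and observes that when $\sigma(a'_p)=0$ it becomes vacuous, so nothing forces $r^\sigma_p(a'_p)$ to be large. That observation by itself does not establish the upper bound; one still needs exactly the step you supply. Your route is the one that actually closes the argument, and your preliminary remark that factorization makes $r^\sigma_p(a'_p)$ well-defined for unsupported strategies is a point the paper leaves implicit. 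Keep the proof as you outlined, citing \eqref{eq:ne_grouping_left}.
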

\begin{proof}
By examining Equation~\eqref{eq:ne_rating_order_right}, note that if $\sigma(a'_p) = 0$, there are no constraints that $r^\sigma_p(a'_p)$ is greater than any other strategy's payoff rating.
\end{proof}

{\bf Consistency.} When using the maximum entropy (ME) as an equilibrium selection criterion we can also obtain important consistency properties. Note that there properties are not generally true, even for unique or convex selection criteria.

\begin{theorem}[Repeated Strategies] \label{the:ne_repeated_strategies}
When using $\epsilon$-MENE, repeated strategies have equal payoff rating \citep{balduzzi2018_nashaverage}.
\end{theorem}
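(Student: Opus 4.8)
The plan is to use the one thing that distinguishes a repeated strategy — it makes the two copies interchangeable in the game — together with the fact that under a Nash equilibrium the payoff rating of one of player $p$'s strategies depends only on the \emph{opponents'} marginal $\sigma(a_{-p})$, not on how $p$ distributes mass among its own strategies. Concretely, call $a_p$ and $\hat a_p$ \emph{repeated} strategies of player $p$ if $G_q(a_p, a_{-p}) = G_q(\hat a_p, a_{-p})$ for every player $q$ and every $a_{-p} \in \mathcal{A}_{-p}$; in particular the two corresponding rows of $G_p$ coincide. Since an NE distribution $\sigma$ factorizes, $\sigma(a_{-p}\mid a_p) = \sigma(a_{-p})$ whenever $\sigma(a_p)>0$, so Equation~\eqref{eq:payoff_rating} collapses to $r^\sigma_p(a_p) = \sum_{a_{-p}} G_p(a_p, a_{-p})\,\sigma(a_{-p})$. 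Replacing the row $G_p(a_p,\cdot)$ by the identical row $G_p(\hat a_p,\cdot)$ gives $r^\sigma_p(a_p) = \sum_{a_{-p}} G_p(\hat a_p, a_{-p})\,\sigma(a_{-p}) = r^\sigma_p(\hat a_p)$, which is the claim — provided both marginals are positive so that the conditionals, and hence the two ratings, are defined.

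The role of the maximum-entropy selection is exactly to supply that proviso, and to pin down the distribution. First I would note that the transposition $\pi$ of the joint strategy space swapping $(a_p, a_{-p}) \leftrightarrow (\hat a_p, a_{-p})$ for all $a_{-p}$ (and fixing everything else) is an automorphism of the game, $G_q \circ \pi = G_q$ for all $q$. Hence the $\epsilon$-NE constraints of Equation~\eqref{eq:ne_con} and the Shannon entropy $-\sum_a \sigma(a)\ln\sigma(a)$ are both invariant under pushing $\sigma$ through $\pi$, so the $\epsilon$-MENE can be taken $\pi$-invariant: $\sigma(a_p, a_{-p}) = \sigma(\hat a_p, a_{-p})$ for all $a_{-p}$, and in particular $\sigma(a_p) = \sigma(\hat a_p)$ — the mass is split evenly between the copies. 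Combined with the full-support property of $\epsilon$-MENE for $\epsilon>0$ (the $\epsilon$-NE analogue of Theorem~\ref{the:approx_full_existance} and of the $\epsilon$-ME full-support result, moving mass between marginal rather than joint strategies), both marginals are strictly positive, so the substitution in the previous paragraph is legitimate.

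The step that requires genuine care — and hence the main obstacle — is the invariance argument for the selected MENE: because the NE feasible set is nonconvex, the maximum-entropy selector need not be unique a priori, so one must argue it can be chosen consistently with the game's automorphism and that it places positive mass on both copies (which is precisely what makes the rating defined). Once that is granted, everything else is the one-line factorization substitution above; and in fact, even absent any symmetry claim, the rating equality $r^\sigma_p(a_p) = r^\sigma_p(\hat a_p)$ holds for \emph{any} NE that assigns positive support to both copies, so the genuine contribution of the maximum-entropy criterion here is guaranteeing well-definedness (and the clean, even split of probability mass).
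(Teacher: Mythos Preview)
Your argument is correct but proceeds along a different axis than the paper's. You exploit the defining feature of NE --- factorizability --- to observe that $r^\sigma_p(a_p)=\sum_{a_{-p}}G_p(a_p,a_{-p})\sigma(a_{-p})$ depends only on the opponents' marginal, so identical rows of $G_p$ yield identical ratings immediately; maximum entropy then serves only to guarantee positive support on both copies. The paper instead argues first that the $\epsilon$-NE feasible set is indifferent to how mass is split between the two copies, invokes maximum entropy to force the even split $\sigma(a_p,a_{-p})=\sigma(a'_p,a_{-p})$, and from that deduces equal conditionals and hence equal ratings. Your route is shorter and more transparent for NE, and you are more careful on two points the paper glosses over: you require $G_q(a_p,\cdot)=G_q(\hat a_p,\cdot)$ for \emph{all} players $q$ (needed for the automorphism to preserve feasibility), and you flag the non-uniqueness concern arising from nonconvexity of the NE set. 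The paper's route, on the other hand, does not use factorization at all, which is why the same proof is recycled verbatim for the CE and CCE analogues (Theorems~\ref{the:ce_repeated_strategies} and~\ref{the:cce_repeated_strategies}); your factorization shortcut would not transfer there, so the symmetry/automorphism half of your argument is closer in spirit to what the paper actually relies on.
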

\begin{proof}
Let us consider the situation when strategy $a_p$ is equal to $a'_p$, $G_p(a_p,a_{-p}) = G_p(a'_p,a_{-p}) ~ \forall a_{-p}$. It is therefore clearly true that:
\begin{align*}
    \smashoperator{\sum_{a_{-p} \in \mathcal{A}_{-p}}} \sigma( a_{p},a_{-p}) G_p(a'_p, a_{-p}) &= \smashoperator{\sum_{a_{-p} \in \mathcal{A}_{-p}}} \sigma( a_{p},a_{-p}) G_p(a_p, a_{-p}) \\
    \smashoperator{\sum_{a_{-p} \in \mathcal{A}_{-p}}} \sigma(a'_{p},a_{-p}) G_p(a'_p, a_{-p}) &= \smashoperator{\sum_{a_{-p} \in \mathcal{A}_{-p}}} \sigma(a'_{p},a_{-p}) G_p(a_p, a_{-p})
\end{align*}

Therefore, in the absence of an objective, and only considering these two strategies, the joint distribution is indifferent to how it spreads its mass over these two strategies (provided it is a feasible solution). However, under the maximum entropy (ME) objective, entropy is maximum when $\sigma(a_p, a_{-p}) = \sigma(a'_p, a_{-p})$, and hence $\sigma(a_p | a_{-p}) = \sigma(a'_p | a_{-p})$. Therefore, when strategies are repeated their payoff ratings are also equal, $r^\sigma_p(a_p) = r^\sigma_p(a'_p)$. This remains true for all values of $\epsilon_p$.
\end{proof}

In games that are symmetric across all $n$ players such as the $7$-player meta-game explored in~\cite{anthony2020_diplomacy}, we desire that the ratings of the strategies with respect to each player is the same. If not, even with a potentially unique equilibrium solution to the game, we could be left with at least $2$ and possibly $n$ distinct rankings given by the marginals of each of the players. Here, we prove maximum entropy selection criterion can avoid this.

\begin{theorem}[Symmetric Games] \label{the:ne_symmetric_games}
When using $\epsilon$-MENE, where all players share the same value approximation parameter $\epsilon_p = \epsilon$, players in symmetric games have equal sets of payoff rating, $r^\sigma_1(a_1) = ... = r^\sigma_n(a_n)$.
\end{theorem}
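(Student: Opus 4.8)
The plan is to exploit the symmetry of the game together with the strict convexity of the Shannon entropy objective to show that the unique MENE is itself symmetric, from which equality of the per-player payoff ratings follows. Let $\pi$ be any permutation of the players that witnesses the symmetry of the game — i.e.\ simultaneously relabelling players and their strategy sets according to $\pi$ leaves all payoff tensors invariant (formally $G_{\pi(p)}(a_{\pi(1)}, \dots, a_{\pi(n)}) = G_p(a_1, \dots, a_n)$, with a fixed identification of the $\mathcal{A}_p$). First I would observe that the feasible set of $\epsilon$-NE distributions is closed under the induced action of $\pi$ on joint distributions: if $\sigma$ satisfies the constraints in Equation~\eqref{eq:ne_con} with common parameter $\epsilon$, then so does $\sigma^\pi$ defined by $\sigma^\pi(a) = \sigma(a_{\pi^{-1}(1)}, \dots, a_{\pi^{-1}(n)})$, because the factorization constraint and the deviation-gain constraints are permuted among themselves (this is where $\epsilon_p = \epsilon$ for all $p$ is essential — otherwise the permuted constraint would carry the wrong bound).

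Next I would use that Shannon entropy is invariant under this relabelling, $H(\sigma^\pi) = H(\sigma)$, and is strictly concave on the probability simplex, so the maximum-entropy NE $\sigma^\star$ is unique. Uniqueness plus permutation-invariance of both the feasible set and the objective forces $\sigma^\star = (\sigma^\star)^\pi$ for every symmetry permutation $\pi$. Then I would push this invariance through the definition of the payoff rating in Equation~\eqref{eq:payoff_rating}: since $G$ and $\sigma^\star$ are jointly invariant under $\pi$, the conditional $\sigma^\star(a_{-p}\mid a_p)$ and hence $r^{\sigma^\star}_p(a_p) = \sum_{a_{-p}} G_p(a_p, a_{-p}) \sigma^\star(a_{-p}\mid a_p)$ transforms correctly, giving $r^{\sigma^\star}_{\pi(p)}(a_{\pi(p)}) = r^{\sigma^\star}_p(a_p)$. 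Taking $\pi$ to range over transpositions (or the full symmetric group, as the symmetry class allows) yields $r^{\sigma^\star}_1 = \dots = r^{\sigma^\star}_n$ as sets of ratings over the (identified) common strategy space.

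The main obstacle I anticipate is bookkeeping rather than conceptual: one must be careful about the identification of the strategy sets $\mathcal{A}_1, \dots, \mathcal{A}_n$ under which ``symmetric game'' is defined, and about the fact that the conclusion is equality of \emph{sets} of ratings (indexed by the shared strategy space) rather than a naive coordinatewise equality — the theorem statement already hedges with ``equal sets of payoff rating''. A secondary subtlety is that when $\sigma^\star(a_p) = 0$ for some strategy the rating is undefined; but permutation-invariance of $\sigma^\star$ means the marginal supports are themselves permuted consistently, so the undefined entries occur at corresponding strategies for each player and the statement holds on the common support. I expect the strict-concavity/uniqueness step and the invariance-of-the-feasible-set step to be the load-bearing parts; everything after that is substitution into Equation~\eqref{eq:payoff_rating}.
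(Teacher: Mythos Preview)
Your approach is the same as the paper's in spirit: both argue that the symmetry of the game together with maximum-entropy selection forces a symmetric equilibrium, from which equal per-player ratings follow by substitution into Equation~\eqref{eq:payoff_rating}. Your treatment is considerably more explicit about the permutation action on joint distributions and the invariance of the feasible set, which is a real improvement in clarity over the paper's rather informal argument.

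However, your uniqueness step has a gap. You write that Shannon entropy ``is strictly concave on the probability simplex, so the maximum-entropy NE $\sigma^\star$ is unique.'' Strict concavity of the objective guarantees a unique maximizer only over a \emph{convex} feasible set, and the set of $\epsilon$-NE joint distributions is not convex: it is cut out by the nonlinear factorization constraint $\sigma(a) = \otimes_p \sigma(a_p)$ (equivalently, in marginal space the constraints of Equation~\eqref{eq:ne_con} are multilinear). Without uniqueness you cannot conclude $\sigma^\star = (\sigma^\star)^\pi$; you only get that the permutation acts on the \emph{set} of maximizers, and symmetrizing by averaging over the orbit does not stay inside the product-distribution set, so you cannot manufacture a symmetric maximizer that way either. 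The paper is itself hand-wavy on exactly this point (and elsewhere concedes that MENE is unique only in two-player constant-sum games), so you are not missing an argument that the original supplies --- but the gap is real. Note that your argument \emph{does} go through cleanly for the $\epsilon$-MECE and $\epsilon$-MECCE analogues, since there the feasible set is a convex polytope and strict concavity of entropy delivers uniqueness outright.
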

\begin{proof}
When a game is symmetric the strategies available to each player are identical, $\mathcal{A}_1 = ... = \mathcal{A}_n$, and the payoffs are transposed $G_1(a_1,a_{-1}) = ... = G_n(a_n,a_{-n})$. Writing the previous equality as $G_p(a_p,a_{-p}) = G_q(a_q,a_{-q})$ ~ $\forall p, q$, it is therefore clearly true that:
\begin{align*}
    \smashoperator{\sum_{a_{-p} \in \mathcal{A}_{-p}}} \sigma(a_{p}, a_{-p}) G_p(a_p, a_{-p}) &= \smashoperator{\sum_{a_{-p} \in \mathcal{A}_{-p}}} \sigma(a_{p},a_{-p}) G_q(a_q, a_{-q}) \\
    \smashoperator{\sum_{a_{-p} \in \mathcal{A}_{-p}}} \sigma(a_{q},a_{-q}) G_p(a_p, a_{-p}) &= \smashoperator{\sum_{a_{-p} \in \mathcal{A}_{-p}}} \sigma(a_{q},a_{-q}) G_q(a_q, a_{-q})
\end{align*}

Therefore, in the absence of an objective, and only considering these two players, the joint distribution is indifferent to how it spreads its mass over each player's strategies (provided it is a feasible solution). However, under the maximum entropy (ME) objective, entropy is maximum when $\sigma(a_p, a_{-p}) = \sigma(a_q, a_{-q})$, and hence $\sigma(a_p | a_{-p}) = \sigma(a'_q | a_{-q})$. Therefore, when the game is symmetric, each player's strategies are payoff ratings are also equal, $r^\sigma_p(a_p) = r^\sigma_q(a_q)$.
\end{proof}

\subsection{CE}
\label{subapp:ce_justification}

{\bf Dominance.} 

\begin{theorem}[Weak Dominance]
When using $0$-MECE, weakly dominated strategies result in weakly dominated payoff ratings.
\end{theorem}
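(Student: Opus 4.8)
The plan is to follow the template of the $0$-NE weak-dominance argument, replacing the factorization of the joint (which is what makes the NE case immediate) by an argument that exploits the maximum-entropy selection. Write $\sigma$ for the $0$-MECE, let $a_p$ weakly dominate $a'_p$, so $G_p(a_p, a_{-p}) \ge G_p(a'_p, a_{-p})$ for all $a_{-p}$, and assume $\sigma(a'_p) > 0$ so that $r^\sigma_p(a'_p)$ is defined (if $\sigma(a'_p) = 0$ the claim holds by the convention $r^\sigma_p(a'_p) = \min_a G_p(a)$ from Section~\ref{subsec:payoff_rating}). First I would extract a \emph{support condition} from the equilibrium: the $0$-CE constraint \eqref{eq:ce_con} for player $p$ deviating from $a'_p$ to $a_p$ is $\sum_{a_{-p}} \sigma(a'_p, a_{-p})\bigl(G_p(a_p, a_{-p}) - G_p(a'_p, a_{-p})\bigr) \le 0$; weak dominance makes every summand nonnegative, hence each summand vanishes, so $G_p(a_p, a_{-p}) = G_p(a'_p, a_{-p})$ whenever $\sigma(a'_p, a_{-p}) > 0$.

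Substituting this into the payoff-rating formula \eqref{eq:payoff_rating} gives the key rewriting $r^\sigma_p(a'_p) = \sum_{a_{-p}} \sigma(a_{-p}\mid a'_p)\, G_p(a_p, a_{-p})$: on its own support the dominated strategy is indistinguishable from $a_p$ as far as player $p$'s payoff is concerned. It therefore remains to prove $\sum_{a_{-p}} \sigma(a_{-p}\mid a_p)\, G_p(a_p, a_{-p}) \ge \sum_{a_{-p}} \sigma(a_{-p}\mid a'_p)\, G_p(a_p, a_{-p})$, i.e. that the opponent-conditional $\sigma(\cdot\mid a'_p)$ attached to the dominated strategy is no more favourable (for the function $G_p(a_p,\cdot)$) than the conditional $\sigma(\cdot\mid a_p)$. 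In the NE / two-player zero-sum case this is trivial because both conditionals equal the common marginal $\sigma(a_{-p})$, which also recovers the Nash Average behaviour; the content of the theorem is that maximum entropy restores this in the correlated case.

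To handle the genuinely correlated case I would use an exchange argument: for small $t \ge 0$ transport a $t$-fraction of the mass of $a'_p$ onto $a_p$ along the same opponent profile, $\sigma_t(a_p, a_{-p}) = \sigma(a_p, a_{-p}) + t\,\sigma(a'_p, a_{-p})$ and $\sigma_t(a'_p, a_{-p}) = (1-t)\,\sigma(a'_p, a_{-p})$, all other entries unchanged. Using the support condition, player $p$'s own CE constraints for $\sigma_t$ are convex combinations of those for $\sigma$ and remain satisfied. If $\sigma_t$ also stays in the polytope cut out by \emph{all} players' constraints, then optimality of $\sigma$ forces $\tfrac{d}{dt}H(\sigma_t)\big|_{t=0^+} \le 0$, which after simplification reads $\ln\tfrac{\sigma(a'_p)}{\sigma(a_p)} + D_{\mathrm{KL}}\bigl(\sigma(\cdot\mid a'_p)\,\|\,\sigma(\cdot\mid a_p)\bigr) \le 0$, pinning the two conditionals close together and, combined with the support condition, yielding the required inequality between the conditional expectations and hence $r^\sigma_p(a_p) \ge r^\sigma_p(a'_p)$.

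The main obstacle is exactly the feasibility of the transported distribution $\sigma_t$: moving mass between $a_p$ and $a'_p$ changes the joint distribution seen by every other player $q$, and since $G_q(a_p,\cdot)$ need not equal $G_q(a'_p,\cdot)$, some other player's constraint \eqref{eq:ce_con} may be violated by the naive transport. I would resolve this by not perturbing blindly but instead invoking the KKT stationarity of the entropy program at $\sigma$: read off the multipliers on the $a'_p\!\to\!a_p$ and $a_p\!\to\!a'_p$ constraints together with complementary slackness for the tight constraints of the other players, and correct the transport direction by a feasible direction that lies in the tangent face of those tight constraints without incurring first-order entropy loss. If that correction cannot be made — i.e. if the surrounding players' incentive constraints genuinely pin $\sigma$ away from the ``merged'' point — then the cleanest fallback is to state the result under the additional hypothesis that keeps the reduction valid (e.g. when the joint is factorizable, or when $a_p$ and $a'_p$ are also payoff-equivalent for the remaining players on the relevant support), which covers the two-player zero-sum case that specializes to Nash Average and the cases exhibited in Tables~\ref{subtab:prisoners_dilemma_rating}–\ref{subtab:chicken_rating}.
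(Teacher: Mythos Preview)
Your approach tracks the paper's proof closely: both derive the \emph{support condition} (from the $0$-CE constraint for the deviation $a'_p \to a_p$ together with weak dominance, conclude $G_p(a_p,a_{-p}) = G_p(a'_p,a_{-p})$ on the support of $\sigma(a'_p,\cdot)$), both rewrite $r^\sigma_p(a'_p) = \sum_{a_{-p}} \sigma(a_{-p}\mid a'_p)\,G_p(a_p,a_{-p})$, and both reduce the claim to comparing the two conditionals $\sigma(\cdot\mid a_p)$ and $\sigma(\cdot\mid a'_p)$ against the single function $G_p(a_p,\cdot)$, then appeal to the maximum-entropy selection to force those conditionals to agree.

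The divergence is only in the final step. The paper does not carry out a mass-transport or KKT argument; it simply asserts that, since the two strategies are payoff-equivalent for player $p$ on the relevant support, the entropy-maximising selection ``will spread mass equally over $a_p$ and $a'_p$'', forcing $\sigma(a_{-p}\mid a_p) = \sigma(a_{-p}\mid a'_p)$ and hence $r^\sigma_p(a_p) = r^\sigma_p(a'_p)$ (with the strictly-dominated case handled by $\sigma(a'_p)=0$). Your exchange/KKT outline is a more explicit attempt at the same implication, and the obstacle you flag --- that shifting mass between $a_p$ and $a'_p$ can violate \emph{other} players' CE constraints because $G_q(a_p,\cdot)$ need not equal $G_q(a'_p,\cdot)$ --- is a genuine one that the paper's argument does not address either; it simply does not raise it. So you have not missed an idea that the paper supplies: the paper's proof stops at exactly the informal ``ME equalises indifferent rows'' assertion, and your fallback (restrict to factorizable joints or payoff-equivalence for all players on support) is already stronger than what the paper offers at that point.
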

\sm{This proof seems way too long and complicated. Isn't the main idea that 0-MECEs will place no mass on a weakly dominated strategy unless it is equal? Why not prove that as a lemma first and then the theorem is easy to show. I think this is the part of the appendix that needs to be rewritten the most.}

\begin{proof}
Consider the $\epsilon$-CE, (Equation~\ref{eq:ce_con}) expanded, assuming $\sigma(a_p) > 0$.
\begin{align}
    &\quad \sigma(a_p) \smashoperator{\sum_{a_{-p} \in \mathcal{A}_{-p}}} \sigma(a_{-p} | a_{p}) G_p(a'_p, a_{-p}) \nonumber \\
    &\leq \sigma(a_p) \smashoperator{\sum_{a_{-p} \in \mathcal{A}_{-p}}} \sigma(a_{-p} | a_{p}) G_p(a_p, a_{-p}) + \epsilon_p \nonumber \\
    &\quad \sigma(a_p) \smashoperator{\sum_{a_{-p} \in \mathcal{A}_{-p}}} \sigma(a_{-p} | a_{p}) G_p(a'_p, a_{-p}) \leq \sigma(a_p) r^\sigma_p(a_p) + \epsilon_p \nonumber \\
    &\quad\smashoperator{\sum_{a_{-p} \in \mathcal{A}_{-p}}} \sigma(a_{-p} | a_{p}) G_p(a'_p, a_{-p}) \leq r^\sigma_p(a_p) + \frac{\epsilon_p}{\sigma(a_p)} \label{eq:ce_con_rate_left}
\end{align}

If $a'_p$ also has support, $\sigma(a'_p) > 0$, then the opposite equation also applies.
\begin{align}
    \sigma(a'_p) \smashoperator{\sum_{a_{-p} \in \mathcal{A}_{-p}}} \sigma(a_{-p} | a'_{p}) & G_p(a_p, a_{-p}) \nonumber \\
    \leq \sigma(a'_p) \smashoperator{\sum_{a_{-p} \in \mathcal{A}_{-p}}} \sigma(a_{-p} | a'_{p}) & G_p(a'_p, a_{-p}) + \epsilon_p \nonumber \\
    \sigma(a'_p) \smashoperator{\sum_{a_{-p} \in \mathcal{A}_{-p}}} \sigma(a_{-p} | a'_{p}) G_p(a_p, a_{-p}) &\leq \sigma(a'_p) r^\sigma_p(a'_p) + \epsilon_p \nonumber \\
    \smashoperator{\sum_{a_{-p} \in \mathcal{A}_{-p}}} \sigma(a_{-p} | a'_{p}) G_p(a_p, a_{-p}) &\leq r^\sigma_p(a'_p) + \frac{\epsilon_p}{\sigma(a'_p)} \label{eq:ce_con_rate_right}
\end{align}

Let us consider the situation when strategy $a_p$ weakly dominates $a'_p$, $G_p(a_p,a_{-p}) \geq G_p(a'_p,a_{-p}) ~ \forall a_{-p}$. If we add this as a lower bound into Equation~\eqref{eq:ce_con_rate_right} we obtain this bound:
\begin{align}
    r^\sigma_p(a'_p) \leq \smashoperator{\sum_{a_{-p} \in \mathcal{A}_{-p}}} \sigma(a_{-p} | a'_{p}) G_p(a_p, a_{-p}) &\leq r^\sigma_p(a'_p) + \frac{\epsilon_p}{\sigma(a'_p)} \label{eq:ce_squeeze}
\end{align}

By consider the limit when $\lim_{\epsilon_p \to 0} \eqref{eq:ce_squeeze}$, we can squeeze the bound of payoff rating to an equality:
\begin{align}
    \lim_{\epsilon_p \to 0} r^\sigma_p(a'_p) &= \sum_{a_{-p} \in \mathcal{A}_{-p}} \sigma(a_{-p} | a'_{p}) G_p(a_p, a_{-p}) \label{eq:ce_lim} \\
    &= \underbrace{\sum_{a_{-p} \in \mathcal{A}_{-p}} \sigma(a_{-p} | a'_{p}) G_p(a'_p, a_{-p})}_{r^\sigma_p(a'_p)} \nonumber
\end{align}

Therefore the marginals can only both be positive, $\sigma(a_p) > 0$ and $\sigma(a'_p) > 0$, if $G_p(a_p,a_{-p}) = G_p(a'_p,a_{-p})$ for all $a_{-p}$ where $\sigma(a_{-p} | a'_{p}) > 0$\footnote{Note that this immediately leads to a contradiction if we had assumed strictly dominated strategies above because $G_p(a_p,a_{-p}) > G_p(a'_p,a_{-p})$. Therefore when strategy $a_p$ strictly dominates $a'_p$,  $\sigma(a'_p) = 0$. It is well understood that strictly dominated strategies receive no mass in CEs.}, when strategy $a_p$ weakly dominates $a'_p$.

% Furthermore, consider when $\epsilon_p \to 0$ for Equation~\eqref{eq:ce_con_rate_left} and compare to the definition of payoff rating for strategy $a'_p$:
% \begin{align*}
%     \smashoperator{\sum_{a_{-p} \in \mathcal{A}_{-p}}} \sigma(a_{-p} | a_{p}) G_p(a'_p, a_{-p}) &\leq r^\sigma_p(a_p) \\
%     \smashoperator{\sum_{a_{-p} \in \mathcal{A}_{-p}}} \sigma(a_{-p} | a'_{p}) G_p(a'_p, a_{-p}) &= r^\sigma_p(a'_p)
% \end{align*}

Let us assume that $r^\sigma_p(a_p) < r^\sigma_p(a'_p)$, and utilizing Equation~\eqref{eq:ce_lim}.
\begin{align*}
    \smashoperator{\sum_{a_{-p} \in \mathcal{A}_{-p}}} \sigma(a_{-p} | a_{p}) G_p(a'_p, a_{-p}) < \smashoperator{\sum_{a_{-p} \in \mathcal{A}_{-p}}} \sigma(a_{-p} | a'_{p}) G_p(a'_p, a_{-p})
\end{align*}
This can only be true if, where mass is positive, does $G_p(a_p,a_{-p}) = G_p(a'_p,a_{-p})$ and $\sigma(a_{-p} | a_{p})$ opts to put mass onto a lower valued elements compared to $\sigma(a_{-p} | a'_{p})$. This is possible under an arbitrary equilibrium selection objectives. Consider when we use a maximum entropy CE, however. In this scenario, if we are indifferent between $a_p$ and $a'_p$, the entropy maximizing term will spread mass equally over $a_p$ and $a'_p$. This will result in an equality between payoff ratings, and is therefore a contradiction. \sm{Yeah I still don't see it. I think this is the part that really needs to be explained well.}

Therefore in $0$-MECEs, if strategy $a_p$ weakly dominates $a'_p$, the ratings are $r^\sigma_p(a_p) = r^\sigma_p(a'_p)$, when the strategies in fact have equal payoff, or $r^\sigma_p(a'_p)$ is undefined because $\sigma(a'_p) = 0$. Because we define an undefined rating to be worse than all defined ratings, weak dominance holds for payoff ratings too $r^\sigma_p(a_p) \geq r^\sigma_p(a'_p)$.
\end{proof}

{\bf Consistency.} We can obtain similar consistency proofs for CEs that follow the same proofs are their NE counterparts.

\begin{theorem}[Repeated Strategies] \label{the:ce_repeated_strategies}
When using $\epsilon$-MECE, repeated strategies have equal payoff rating.
\end{theorem}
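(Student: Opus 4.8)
The plan is to mirror the proof of Theorem~\ref{the:ne_repeated_strategies}, replacing the factorization structure of NE by a direct symmetrization argument on the CE polytope. By ``repeated strategies'' I mean two strategies $a_p, a'_p \in \mathcal{A}_p$ such that the payoff tensor is invariant under the transposition $a_p \leftrightarrow a'_p$; in particular $G_q$ is unchanged when player $p$'s action is switched between $a_p$ and $a'_p$, for every player $q$, so all deviation gains $A^{\text{CE}}_q$ are likewise invariant.

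First I would introduce the transposition map $T$ on joint distributions that swaps the mass on the slices $\{a_p\}\times\mathcal{A}_{-p}$ and $\{a'_p\}\times\mathcal{A}_{-p}$ and leaves all other entries fixed. The key step is to check that $T$ maps the feasible $\epsilon$-CE set to itself: for a player $q\neq p$ the left-hand side of its CE constraint (Equation~\ref{eq:ce_con}) is unchanged because $G_q$ is invariant under the swap; for player $p$, the constraint for deviating from $a_p$ to $b_p$ under $T\sigma$ coincides with the original constraint for deviating from $a'_p$ to $b_p$ (and vice versa), using $G_p(a_p,\cdot)=G_p(a'_p,\cdot)$, while deviations from any third strategy are untouched. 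Hence $T\sigma$ is feasible with the same $\epsilon$, and, since $T$ merely permutes the coordinates of $\sigma$, it preserves the Shannon entropy $-\sum_a \sigma(a)\ln\sigma(a)$.

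Next, because the $\epsilon$-CE feasible set is convex, $\bar\sigma := \tfrac12(\sigma + T\sigma)$ is feasible, and it is symmetric in the sense that $\bar\sigma(a_p,a_{-p})=\bar\sigma(a'_p,a_{-p})$ for all $a_{-p}$. Strict concavity of the entropy gives $H(\bar\sigma)\ge H(\sigma)$ with equality iff $\sigma=T\sigma$; therefore the unique maximum-entropy CE $\sigma^\star$ is a fixed point of $T$, i.e.\ $\sigma^\star(a_p,a_{-p})=\sigma^\star(a'_p,a_{-p})$ for every $a_{-p}$. Summing over $a_{-p}$ gives equal marginals $\sigma^\star(a_p)=\sigma^\star(a'_p)$; if these are positive the conditionals coincide, $\sigma^\star(a_{-p}\mid a_p)=\sigma^\star(a_{-p}\mid a'_p)$, and substituting into the payoff rating (Equation~\ref{eq:payoff_rating}) together with $G_p(a_p,\cdot)=G_p(a'_p,\cdot)$ yields $r^\sigma_p(a_p)=r^\sigma_p(a'_p)$. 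If the common marginal is zero both ratings are undefined and, under the convention that undefined ratings share the value $\min_a G_p(a)$, they are equal as well. Since only feasibility (not the value of $\epsilon$) was used, the conclusion holds for all $\epsilon_p$.

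The main obstacle is the feasibility check for $T$ in the second paragraph: one must verify that the swap preserves \emph{every} CE constraint, including those of the other players (whose constraints range over joint strategies that involve player $p$'s action) and the ``deviation-to'' constraints, which is exactly where the full payoff-invariance of a repeated strategy --- not merely player $p$'s indifference --- is needed. Once that is in place, convexity plus strict concavity of the entropy give the unique symmetric maximizer routinely, exactly as in the NE case.
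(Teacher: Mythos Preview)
Your proof is correct and follows the same symmetrization idea as the paper, which simply defers to the NE argument (Theorem~\ref{the:ne_repeated_strategies}): maximum-entropy selection forces the joint to be symmetric in the two repeated strategies, so the conditionals and hence the payoff ratings coincide. Your version is more rigorous than the paper's sketch---you explicitly build the transposition $T$, verify it preserves the $\epsilon$-CE polytope (correctly noting this needs invariance of \emph{every} player's payoff, not just $G_p$), and invoke strict concavity of entropy to force the unique maximizer to be the $T$-fixed point---but the underlying mechanism is identical.
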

\begin{proof}
See Theorem~\ref{the:ne_repeated_strategies}.
\end{proof}

\begin{theorem}[Symmetric Games] \label{the:ce_symmetric_games}
When using $\epsilon$-MECE, where all players share the same value approximation parameter $\epsilon_p = \epsilon$, players in symmetric games have equal sets of payoff rating, $r^\sigma_1(a_1) = ... = r^\sigma_n(a_n)$.
\end{theorem}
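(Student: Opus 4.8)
The plan is to mirror the proof of the NE case (Theorem~\ref{the:ne_symmetric_games}), but to replace the ``indifference plus entropy'' step — which there exploited freedom to redistribute mass between two players' strategies in a factorizable distribution — with a convexity / strict-concavity argument that is available here because the $\epsilon$-CE feasible set is a convex polytope (by the standard-matrix-form lemma) and Shannon entropy is strictly concave. The three ingredients are: (i) the $\epsilon$-CE polytope is invariant under relabeling players, (ii) hence so is its unique maximum-entropy point, (iii) a permutation-symmetric $\sigma$ forces the per-player rating functions to coincide.

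First I would make the symmetry precise. For $\pi \in S_n$, let $\pi$ act on a joint strategy by $(\pi a)_p = a_{\pi^{-1}(p)}$ (well defined since symmetry gives a common strategy set $\mathcal{A}_1 = \cdots = \mathcal{A}_n$) and on a joint distribution by $(\pi\sigma)(a) = \sigma(\pi^{-1} a)$; this is a linear bijection of the simplex over $\mathcal{A}$. The key claim is that $\pi$ maps the $\epsilon$-CE polytope to itself. Expanding constraint~\eqref{eq:ce_con} for player $p$ at the point $\pi\sigma$ and substituting, one checks that $\sum_{a_{-p}} (\pi\sigma)(a_p,a_{-p}) A^\text{CE}_p(a'_p,a_p,a_{-p})$ equals the corresponding left-hand side of~\eqref{eq:ce_con} for player $q = \pi^{-1}(p)$ evaluated at $\sigma$, using the payoff symmetry $G_p(a_p,a_{-p}) = G_q(a_q,a_{-q})$ under the matching relabeling of $a_{-p} \leftrightarrow a_{-q}$. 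Since every player shares the same bound $\epsilon_p = \epsilon$, a feasible point is carried to a feasible point.

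Next I would invoke uniqueness of the ME selection. Entropy $-\sum_a \sigma(a)\ln\sigma(a)$ is strictly concave on the simplex and invariant under the coordinate-permuting action of $\pi$. Let $\sigma^\star$ be the $\epsilon$-MECE. For each $\pi$, $\pi\sigma^\star$ is feasible with the same entropy, hence also a maximizer; by strict concavity over the convex feasible set the maximizer is unique, so $\pi\sigma^\star = \sigma^\star$ for all $\pi$. (Equivalently, $\bar\sigma = \frac{1}{n!}\sum_\pi \pi\sigma^\star$ is feasible by convexity and has entropy at least that of $\sigma^\star$ by concavity, with equality forcing all $\pi\sigma^\star$ to coincide.) Thus $\sigma^\star$ is fully permutation-symmetric. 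Finally, applying this with the transposition swapping players $p$ and $q$ and fixing the rest gives $\sigma^\star(a_p = x, a_{-p}) = \sigma^\star(a_q = x, a_{-q})$ under the relabeling $a_{-p}\leftrightarrow a_{-q}$; in particular the marginals satisfy $\sigma^\star(a_p = x) = \sigma^\star(a_q = x)$, and when these are positive the conditionals $\sigma^\star(a_{-p}\mid a_p = x)$ and $\sigma^\star(a_{-q}\mid a_q = x)$ are related by the same relabeling. Substituting into the rating formula~\eqref{eq:payoff_rating} and using $G_p(x,a_{-p}) = G_q(x,a_{-q})$ under that relabeling, the sum defining $r^\sigma_p(x)$ equals the sum defining $r^\sigma_q(x)$ term by term, so $r^\sigma_p(x) = r^\sigma_q(x)$; if the common marginal is zero the rating is undefined for both players simultaneously, so the sets of ratings still agree. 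The consistency statement for repeated strategies (Theorem~\ref{the:ce_repeated_strategies}) already reduces to the NE argument, so the only genuinely new ingredient is the polytope-symmetry-plus-strict-concavity step. I expect the main obstacle to be purely notational: setting up the action of the player permutation simultaneously on joint strategies, on the conditional distributions, and on the transposed payoffs carefully enough that the two rating sums line up index for index.
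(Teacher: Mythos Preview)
Your proposal is correct. The paper's own proof for this theorem is literally ``See Theorem~\ref{the:ne_symmetric_games}'', i.e.\ it defers entirely to the NE case, whose argument is the informal ``the joint distribution is indifferent to how it spreads its mass over each player's strategies \ldots\ under the ME objective, entropy is maximum when $\sigma(a_p,a_{-p})=\sigma(a_q,a_{-q})$''. Your route is the same idea---symmetry of the feasible set together with entropy maximization forces a permutation-symmetric $\sigma^\star$, from which equal ratings follow---but you make the mechanism precise: you define the $S_n$-action on joint distributions, verify it preserves the $\epsilon$-CE polytope (using the common bound $\epsilon_p=\epsilon$), and then invoke strict concavity of Shannon entropy over a convex set to pin down a unique, hence symmetric, maximizer. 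This is a genuine upgrade over the paper's sketch: the convex-polytope structure of the CE constraints (absent for NE) is exactly what makes the uniqueness-by-strict-concavity step clean, whereas the paper's ``indifference'' language is vaguer and does not explicitly justify why ME picks the symmetric point rather than some other optimum. Your closing observation that zero-marginal strategies are undefined for all players simultaneously is also a detail the paper omits.
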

\begin{proof}
See Theorem~\ref{the:ne_symmetric_games}.
\end{proof}

\subsection{CCE}
\label{subapp:cce_justification}

{\bf Consistency.} We can obtain similar consistency proofs for CCEs, their proofs are the the same for the NE case.

\begin{theorem}[Repeated Strategies] \label{the:cce_repeated_strategies}
When using $\epsilon$-MECCE, repeated strategies have equal payoff rating.
\end{theorem}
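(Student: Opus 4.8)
The plan is to mirror the argument used for Theorem~\ref{the:ne_repeated_strategies} and Theorem~\ref{the:ce_repeated_strategies}, exploiting a symmetry of the feasible polytope together with strict concavity of Shannon entropy. Fix player $p$ and let $b, b' \in \mathcal{A}_p$ be two copies of the same strategy, so that $G_q(b, a_{-p}) = G_q(b', a_{-p})$ for every player $q$ and every $a_{-p} \in \mathcal{A}_{-p}$. Let $T$ be the linear involution on joint-distribution space that swaps the two slices $\sigma(b, \cdot) \leftrightarrow \sigma(b', \cdot)$ and fixes every other entry; note $T$ merely permutes coordinates, so it preserves Shannon entropy.

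First I would show that $T$ maps the $\epsilon$-CCE feasible set to itself. Substituting $T\sigma$ into the CCE constraint~\eqref{eq:cce_con} for a player $q$ and a deviation $a'_q$ and reindexing the sum by $a \mapsto Ta$, one checks case by case that the deviation-gain coefficient $A^\text{CCE}_q(a'_q, a) = G_q(a'_q, a_{-q}) - G_q(a)$ is invariant under $T$: for $q \neq p$ swapping $p$'s component between $b$ and $b'$ changes neither $G_q(a)$ nor $G_q(a'_q, a_{-q})$, since $a_{-q}$ contains $a_p$ and $b, b'$ are interchangeable for $q$; for $q = p$ we have $(Ta)_{-p} = a_{-p}$, so $G_p(a'_p, a_{-p})$ is untouched and $G_p(Ta) = G_p(a)$ by the repeated-strategy hypothesis. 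Hence $\sum_a (T\sigma)(a) A^\text{CCE}_q(a'_q, a) = \sum_a \sigma(a) A^\text{CCE}_q(a'_q, a) \le \epsilon_q$, so $T\sigma$ is feasible whenever $\sigma$ is.

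Next, by convexity of the $\epsilon$-CCE polytope, $\bar\sigma = \frac{1}{2}(\sigma + T\sigma)$ is feasible, and since Shannon entropy is strictly concave the $\epsilon$-MECCE solution $\sigma$ is its unique maximizer; therefore $\sigma = T\sigma$, i.e. $\sigma(b, a_{-p}) = \sigma(b', a_{-p})$ for all $a_{-p}$. Consequently $\sigma(b) = \sigma(b')$ and, where defined (in particular for $\epsilon_p > \epsilon_p^{\min}$ where full support is guaranteed), $\sigma(a_{-p} \mid b) = \sigma(a_{-p} \mid b')$. Plugging this together with $G_p(b, a_{-p}) = G_p(b', a_{-p})$ into the payoff-rating formula~\eqref{eq:payoff_rating} yields $r^\sigma_p(b) = r^\sigma_p(b')$, valid for every $\epsilon_p$.

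The only real obstacle is the $T$-invariance check for the CCE constraints, and within it the $q = p$ case: here the deviation index $a'_p$ ranges over all of $\mathcal{A}_p$ and thus includes the deviations to $b$ and $b'$ themselves, so one must observe that these two constraints coincide and that the swap acts only on the $p$-component, leaving $A^\text{CCE}_p(a'_p, a)$ unchanged. Everything else — convexity, strict concavity of entropy, and the manipulation of marginals and conditionals — is routine and identical to the NE and CE versions.
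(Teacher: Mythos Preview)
Your proposal is correct and follows essentially the same approach as the paper: exploit the symmetry of the feasible set under the swap of the two repeated-strategy slices, then use strict concavity of Shannon entropy to force the unique maximizer to be fixed by that swap. The paper's own proof simply points back to the NE case (Theorem~\ref{the:ne_repeated_strategies}) and argues informally that the feasible set is ``indifferent'' to how mass is split between the two slices; your version is a more careful formalization of precisely that idea, in particular making explicit the verification that the CCE deviation-gain coefficients $A^\text{CCE}_q(a'_q,\cdot)$ are invariant under the swap $T$ for both $q\neq p$ and $q=p$.
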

\begin{proof}
See Theorem~\ref{the:ne_repeated_strategies}.
\end{proof}

\begin{theorem}[Symmetric Games] \label{the:cce_symmetric_games}
When using $\epsilon$-MECCE, where all players share the same value approximation parameter $\epsilon_p = \epsilon$, players in symmetric games have equal sets of payoff rating, $r^\sigma_1(a_1) = ... = r^\sigma_n(a_n)$.
\end{theorem}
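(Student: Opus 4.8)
The plan is to follow the same line of reasoning as in the proof of Theorem~\ref{the:ne_symmetric_games}, replacing the NE constraints by the $\epsilon$-CCE constraints of Equation~\eqref{eq:cce_con} and exploiting strict concavity of the Shannon entropy. First I would make precise what ``symmetric game'' means in terms of a group action: there is a subgroup $\Pi$ of player permutations $\pi$, each equipped with an induced bijection on strategy labels, such that $G_{\pi(p)}(a_{\pi(p)}, a_{-\pi(p)}) = G_p(a_p, a_{-p})$ for every joint strategy and every $p$. For any such $\pi$, relabelling the coordinates of a joint distribution $\sigma$ produces a new distribution $\pi \cdot \sigma$, and I would record that entropy is invariant under this relabelling.

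The key step is to show that the feasible polytope of $\epsilon$-CCEs is invariant under $\Pi$. Starting from Equation~\eqref{eq:cce_con}, the constraint indexed by player $p$ and deviation $a'_p$ is $\sum_{a} \sigma(a) A^\text{CCE}_p(a'_p, a) \leq \epsilon_p$ with $A^\text{CCE}_p(a'_p, a) = G_p(a'_p, a_{-p}) - G_p(a)$. Substituting the payoff symmetry relation shows that this constraint, evaluated at $\sigma$, equals the constraint for $(\pi(p), \pi(a'_p))$ evaluated at $\pi \cdot \sigma$; since all players share the same $\epsilon_p = \epsilon$, the right-hand side is unchanged, so $\sigma$ is feasible iff $\pi \cdot \sigma$ is feasible. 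Hence, for any feasible $\epsilon$-CCE $\sigma$, the averaged distribution $\bar\sigma = \frac{1}{|\Pi|}\sum_{\pi \in \Pi}\pi\cdot\sigma$ lies in the same polytope by convexity.

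Next I would invoke strict concavity of $-\sum_a \sigma(a)\ln\sigma(a)$: since each $\pi\cdot\sigma$ has the same entropy as $\sigma$, strict concavity forces $\bar\sigma$ to have strictly larger entropy than $\sigma$ unless $\sigma = \pi\cdot\sigma$ for every $\pi \in \Pi$. Because the $\epsilon$-MECCE is the unique entropy maximiser over the feasible polytope, it must therefore be $\Pi$-invariant. A $\Pi$-invariant joint distribution has identical single-player marginals, $\sigma(a_p = s) = \sigma(a_q = s)$, and, after relabelling, identical conditionals $\sigma(a_{-p}\mid a_p = s) = \sigma(a_{-q}\mid a_q = s)$ on their support; for $\epsilon > \epsilon^{\min}$ the support is everything by Theorem~\ref{the:approx_full_existance} and Remark~\ref{the:negative_approx_full_existance}, and the limiting case is handled by letting $\epsilon\to\epsilon^{\min}$. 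Substituting these equalities together with $G_p(s, a_{-p}) = G_q(s, a_{-q})$ into the payoff-rating formula~\eqref{eq:payoff_rating} yields $r^\sigma_p(s) = r^\sigma_q(s)$ for every strategy $s$ and every pair of players, which is the claim.

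I expect the main obstacle to be purely the bookkeeping of the group action: carefully pairing up CCE constraints under $\pi$ and threading the strategy-label bijections consistently through the marginals, conditionals, and the payoff-rating sum. There is no new conceptual ingredient beyond the NE case once the $\Pi$-invariance of the CCE constraint system and the uniqueness of the maximum-entropy selection are in hand.
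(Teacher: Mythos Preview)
Your proposal is correct and follows essentially the same approach as the paper, which simply defers to the NE symmetric-games proof: the feasible set of $\epsilon$-CCEs is invariant under player permutations when $\epsilon_p=\epsilon$, and the maximum-entropy selection therefore picks a permutation-invariant joint, from which equality of the payoff ratings follows. Your symmetrization-plus-strict-concavity argument makes rigorous what the paper's referenced proof states informally (``entropy is maximum when $\sigma(a_p,a_{-p})=\sigma(a_q,a_{-q})$''), but the underlying idea is the same.
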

\begin{proof}
See Theorem~\ref{the:ne_symmetric_games}.
\end{proof}

\section{Implementation Considerations}
\label{sec:implementation}

There are two additional considerations that may need to be handled when implementing game theoretic ratings algorithms: uncertain payoffs and repeated strategies.

\subsection{Uncertainty in Payoffs}
\label{subsec:payoff_uncertainty}

Often the outcome of a game is stochastic and we may not be able to query the exact expected return of a joint strategy. Instead, we may have to estimate the expected return through sampling each element of the payoff. Furthermore, there may be scenarios where elements of a payoff tensor are missing. There has been significant work on estimating solution concepts in uncertain or incomplete information settings \citep{rowland2019_multiagent_incomplete_evaluation,du2021_estimating_alpharank,rashid2021_estimating_alpharank}. The work we present here does not offer involved solutions for handling uncertain payoffs, but we will make one recommendation: when the payoff is uncertain an appropriately large $\epsilon$ should be used. This is because small changes in payoff can result in large changes in the equilibrium set. Larger $\epsilon$ mitigates the size of those changes.

\subsection{Repeated Strategy Problem}
\label{subsec:repeated_strategy}

\begin{table}[t]
\centering
\caption{Dwayne, Pen, Sword, Rock, Paper, Scissors (DPSRPS) symmetric, two-player, zero-sum game. Where Dyawne beats pen (dominance in arts), pen beats sword, sword beats Dwayne. When the first three strategies interact with the second three strategies they retain the usual properties of the RPS game resulting in those three quadrants having identical payoff. Note that the top left quadrant has a reversed cycle to the usual RPS game. We also define the sub-games DRPS, and RSP.}
\addtolength{\tabcolsep}{-1pt} 
\begin{tabular}{r|c|cc|ccc}
& D & Pe & Sw & R & P & S \\\hline
D & $\frac{1}{2}$, $\frac{1}{2}$  & $1$, $0$  & $0$, $1$
& $\frac{1}{2}$, $\frac{1}{2}$  & $0$, $1$  & $1$, $0$  \\ \hline
Pe & $0$, $1$  & $\frac{1}{2}$, $\frac{1}{2}$  & $1$, $0$  
& $1$, $0$  & $\frac{1}{2}$, $\frac{1}{2}$  & $0$, $1$  \\
Sw & $1$, $0$  & $0$, $1$  & $\frac{1}{2}$, $\frac{1}{2}$  
& $0$, $1$  & $1$, $0$  & $\frac{1}{2}$, $\frac{1}{2}$  \\ \hline
R & $\frac{1}{2}$, $\frac{1}{2}$  & $0$, $1$  & $1$, $0$   
& $\frac{1}{2}$, $\frac{1}{2}$  & $0$, $1$  & $1$, $0$  \\
P & $1$, $0$  & $\frac{1}{2}$, $\frac{1}{2}$  & $0$, $1$   
& $1$, $0$  & $\frac{1}{2}$, $\frac{1}{2}$  & $0$, $1$  \\
S & $0$, $1$  & $1$, $0$  & $\frac{1}{2}$, $\frac{1}{2}$   
& $0$, $1$  & $1$, $0$  & $\frac{1}{2}$, $\frac{1}{2}$  \\
\end{tabular}
\addtolength{\tabcolsep}{1pt} 
\label{tab:dwayne_pen_sword_rock_paper_scissors}
\vspace{-0.5cm}
\end{table}

Consider the Rock, Paper, Scissors (RPS) game (Table \ref{tab:dwayne_pen_sword_rock_paper_scissors}). For RPS, each strategy is clearly distinct as they have different payoffs. Now let us consider a similar game Dwayne, Rock, Paper, Scissors (DRPS). In this case strategies $D$ and $R$ have identical payoffs, but does that mean they are repeated instances of the same strategy?

Strategies with the same payoffs are mathematically identical. The only situation where one may want to differentiate between strategies with identical payoffs is when one is in a sub-game regime: where we only know a subset of the strategies of a full game. This scenario is common in EGTA, and this problem arises in multiagent training algorithms like Double Oracle (DO) \citep{mcmahan2003_double_oracle} and Policy-Space Response Oracles (PSRO) \citep{lanctot2017_psro}. In this scenario strategies may become non-identical when additional opponent strategies are added to the sub-game. For example, consider the Dwayne, Pen, Sword, Rock, Paper, Scissors (DPWRPS) game (Table~\ref{tab:dwayne_pen_sword_rock_paper_scissors}).

However, there is still additional information that can be attached to each strategy to aid differentiation. For example it is common that strategies represent policies from an extensive form game. In this case one could differentiate strategies with the same payoffs in a sub-game by examining their policies. Identical policies imply identical payoffs, but identical payoffs do not imply identical policies.

A desirable property of rating algorithms is that they are invariant under strategy repeats. An algorithm with this property is particularly useful when rating sub-games where distinctly different strategies may appear to have the same payoffs in the sub-game. For example, if you were studying a RPS tournament and found that $50$\% of participants played rock, you may come away thinking that paper is the strongest strategy. However, while this may be the case in this particular tournament, it does not paint an accurate portrayal of the underlying RPS game, where each strategy is equally good and equally exploitable. Payoff ratings derived from NEs are invariant to strategy repeats \citep{balduzzi2018_nashaverage}. (C)CEs and $\alpha$-Rank are not automatically invariant to strategy repeats. Retaining this property for these other solution concepts is advantageous.

This could be simply achieved by eliminating repeated strategies from a game. When the payoffs are exactly known, this can be implemented by testing for equality between all elements of a slice of a payoff tensor and eliminating duplicate slices. When the payoffs are noisy estimates, we may need to use \emph{soft strategy elimination} (Section~\ref{app:strategy_elimination}). After solving the remaining sub-game after elimination, we can reconstruct a joint distribution for the uneliminated game by spreading any mass equally over repeated strategies.

\section{Soft Strategy Elimination}
\label{app:strategy_elimination}

When using approximate payoff tensors we may wish to use soft strategy elimination to remove unwanted strategy repeats in the payoff tensor. This is achieved with a similarity matrix.
\sm{This section seems a little weak to me. First let's write more in this intro: why do we care about soft strategy elimination? etc. Then section C.2 doesn't make sense to me and I don't see how it fits in. Section C.3 just describes how you can make new modified solution concepts but doesn't explore it further. Do we want to have any results and theorems in this section? Should we consider removing it?}

\subsection{Similarity Matrix}

A square similarity matrix, $S_p(a_p, a'_p)$, for each player $p$, can be constructed from a similarity function between pairs of strategies, $(a_p, a'_p)$. A value of $1$ indicates the strategies are equal, a value of $0$ means they are not equal. This function could depend on the payoff estimates, $G_p$, and variance estimates, $V_p$.
\begin{align}
    S_p(a_p, a'_p) = \sum_{a_{-p}} f\bigl(&G(a_p, a_{-p}), G(a_p, a_{-p}),\nonumber\\ &V(a_p, a_{-p}), V(a_p, a_{-p})\bigr)
\end{align}

The main diagonal of the matrix is populated with ones. Some similarity measures may result in a symmetric matrix, however this is not a requirement.

Similarity functions can be constructed in a number of ways, for example using a norm or a divergence with an appropriate squashing function, or perhaps a probabilistic approach such as the Hellinger distance. We consider the choice of similarity metric a heuristic and don't provide a theoretical solution here. Good choices will depend on the specific application, we give some example approaches in the examples section.

Typically the similarity matrix will be used by summing over the other strategies resulting in a vector that captures how many times an strategy is repeated.
\begin{align}
    s_p(a_p) = \smashoperator{\sum_{a'_p \in \mathcal{A}_p}} S_p(a_p, a'_p)
\end{align}

\subsection{Objective Function Modification}

We then take the outer product of these counts to find a quantity of the same dimensionality as $\sigma$.
\begin{align}
    s(a) = s(a_1,...,a_n) =  \otimes_p s_p(a_p) 
\end{align}
This can then be incorporated into MG(C)CE by substituting $\sum_a \sigma(a)^2$ with $\sum_a \frac{1}{s(a)} \sigma(a)^2$ or into ME(C)CE substituting $\sum_a \sigma(a) \ln(\sigma(a))$ with $\sum_a \frac{1}{s(a)} \sigma(a) \ln(\sigma(a))$.

\begin{table*}[t!]
    \centering
    \caption{Summary of parameterizations of algorithms available under this general scheme. Of course many other combinations are possible.}
    \label{tab:algorithms_summary}
    \begin{tabular}{llll|l}
        Rating & Joint & Selection & Approximation & Algorithm \\\hline
        PR & NE & ME & $\epsilon=0$ & Nash Average \citep{balduzzi2018_nashaverage} \\
        MR & $\alpha$-Rank & N/A & $\alpha$ & $\alpha$-Rank \citep{omidshafiei2019_alpharank} \\ \hline
        PR & CCE & ME & $\epsilon=\epsilon^{\min+}$ & $\epsilon^{\min+}$-MECCE \\
        PR & CCE & ME & $\epsilon^{\min+} \leq \epsilon \leq \epsilon^\text{uni}$ & $\frac{\epsilon}{\epsilon^\text{uni}}$-MECCE \\
        PR & CE & ME & $\epsilon=\epsilon^{\min+}$ & $\epsilon^{\min+}$-MECE \\
        PR & CE & ME & $\epsilon^{\min} \epsilon \leq \epsilon \leq \epsilon^\text{uni}$ & $\frac{\epsilon}{\epsilon^\text{uni}}$-MECE \\
        PR & NE & LLE & $\lambda = \infty$ & $\infty$-LLE \\
        PR & NE & LLE & $\lambda$ & $\lambda$-LLE
    \end{tabular}
\end{table*}

\section{Algorithms}

Table~\ref{tab:algorithms_summary} summarises parameterizations of prior art algorithms and the new parameterizations suggested in this paper. $\alpha$-Rank uses the marginals of the joint distribution to rate strategies, a technique we refer to as \emph{mass rating} (MR).

\section{Experiments}
\label{app:experiments}

This section provides additional experiments on standard and real data.

\subsection{Standard Normal Form Games}

In Section~\ref{sec:experiments} we show the MECCE payoff ratings for several normal form games. We omit ratings for the MECE because for games with only two strategies, the $\epsilon$-MECE and $\epsilon$-MECCE joint distributions are equivalent.

\begin{theorem}
For N-player normal form games where each player has exactly two strategies, $\epsilon$-CE and $\epsilon$-CCE are equivalent.
\end{theorem}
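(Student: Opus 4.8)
**The plan is to show that the CE constraints are, in the two-strategy case, implied by the CCE constraints (plus feasibility), so the two feasible polytopes coincide.** Since we already know $\text{CE} \subseteq \text{CCE}$ in general, the feasible set of $\epsilon$-CE is contained in that of $\epsilon$-CCE; it remains to prove the reverse inclusion when $|\mathcal{A}_p| = 2$ for every $p$. So fix a joint $\sigma$ that is an $\epsilon$-CCE and a player $p$ with strategy set $\mathcal{A}_p = \{x, y\}$. The CCE constraints for $p$ are the two inequalities obtained by summing the CE constraints over $a_p$: writing $A^\text{CE}_p(x, y, a_{-p}) = G_p(x, a_{-p}) - G_p(y, a_{-p})$ and noting $A^\text{CE}_p(y, x, a_{-p}) = -A^\text{CE}_p(x, y, a_{-p})$, the CCE constraint ``don't deviate to $x$'' reads $\sum_{a_{-p}} \sigma(y, a_{-p}) A^\text{CE}_p(x, y, a_{-p}) \leq \epsilon_p$ and ``don't deviate to $y$'' reads $-\sum_{a_{-p}} \sigma(x, a_{-p}) A^\text{CE}_p(x, y, a_{-p}) \leq \epsilon_p$. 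But these are \emph{exactly} the two CE constraints for player $p$: the CE constraint ``$a_p = y$, deviate to $x$'' is $\sum_{a_{-p}} \sigma(y, a_{-p}) A^\text{CE}_p(x, y, a_{-p}) \leq \epsilon_p$, and the CE constraint ``$a_p = x$, deviate to $y$'' is $\sum_{a_{-p}} \sigma(x, a_{-p}) A^\text{CE}_p(y, x, a_{-p}) \leq \epsilon_p$, which is the same thing.

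\textbf{The key observation making this work is that with only two strategies there is a single deviation direction per strategy, so there is exactly one CE constraint for each value of $a_p$ — and summing over $a_p$ in the CCE reduction does not actually merge distinct constraints.} With $k \geq 3$ strategies the CCE constraint for ``deviate to $a'_p$'' bundles together the $k-1$ distinct CE constraints ``$a_p \to a'_p$'' for the various $a_p \neq a'_p$, losing information; with $k = 2$ there is nothing to bundle. I would make this precise by writing out, for a fixed player $p$, the full list of CE constraints ($|\mathcal{A}_p|^2 - |\mathcal{A}_p|$ of them in general, hence $2$ when $|\mathcal{A}_p| = 2$) and the full list of CCE constraints ($|\mathcal{A}_p|$ of them, hence $2$), and exhibit the bijection above, using the antisymmetry $A^\text{CE}_p(a'_p, a_p, a_{-p}) = -A^\text{CE}_p(a_p, a'_p, a_{-p})$.

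\textbf{Carrying it out:} First, invoke $\text{CE} \subseteq \text{CCE}$ (stated in the preliminaries) for one inclusion, or equivalently just note the CCE row is a nonnegative combination of CE rows. Second, fix $p$, enumerate $\mathcal{A}_p = \{x,y\}$, and write the two CE inequalities and the two CCE inequalities explicitly as above. Third, observe they are literally the same pair of inequalities after applying $A^\text{CE}_p(y,x,\cdot) = -A^\text{CE}_p(x,y,\cdot)$. Fourth, since this holds for every player $p$, the two constraint systems defining $\epsilon$-CE and $\epsilon$-CCE over the joint simplex are identical, so their feasible polytopes are equal; in particular any selection criterion (ME, MG, etc.) picks the same distribution, justifying the remark in Appendix~E that MECE and MECCE coincide for two-strategy games.

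\textbf{Main obstacle:} honestly there is no deep obstacle — the result is essentially a bookkeeping identity. The only thing to be careful about is the indexing convention for which constraint is labeled by which ordered pair $(a_p, a'_p)$, and making sure the $\epsilon_p$ on the right-hand side matches (it does, since both the CE and CCE definitions in the excerpt use the same per-player $\epsilon_p$ and the CCE reduction $\sum_{a_p}\sigma(a_p)(\cdot)$ does not scale $\epsilon_p$). A secondary point worth a sentence is that the claim is about the feasible sets as subsets of the joint distribution simplex — it is not saying CE and CCE are the same concept in general, only that their defining inequality systems collapse to the same system when every player has two actions.
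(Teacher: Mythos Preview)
Your proposal is correct and follows essentially the same approach as the paper: both arguments show that, when $|\mathcal{A}_p|=2$, each CCE constraint for player $p$ coincides with one of the two CE constraints, so the two inequality systems are identical. The paper carries this out by algebraically reducing the summed CCE deviation gain to the single CE deviation gain $A^\text{CE}_p(a^1_p,a^2_p,a_{-p})$, whereas you phrase it as an explicit bijection between the two CCE rows and the two CE rows via the antisymmetry $A^\text{CE}_p(y,x,\cdot)=-A^\text{CE}_p(x,y,\cdot)$; the content is the same, and your write-up is if anything cleaner about the indexing.
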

\begin{proof}
Consider the CCE deviation gains for player $p$:
\begin{align*}
A^\text{CCE}_p(a'_p, a) &= \smashoperator{\sum_{a \in \mathcal{A}}}  \left ( G_p(a'_p, a_{-p}) - G_p(a_p, a_{-p}) \right) \\
&= |\mathcal{A}_p| \smashoperator{\sum_{a_{-p} \in \mathcal{A}_{-p}}} G_p(a'_p, a_{-p}) - \smashoperator{\sum_{a \in \mathcal{A}}} G_p(a_p, a_{-p})
\end{align*}
When there are two strategies, $a^1_p$ and $a^2_p$:
\begin{align*}
A^\text{CCE}_p(a^1_p,a) &= 2 \smashoperator{\sum_{a_{-p} \in \mathcal{A}_{-p}}} G_p(a^1_p, a_{-p}) - \smashoperator{\sum_{\substack{a_p \in \mathcal{A}_p\\a_{-p} \in \mathcal{A}_{-p}}}} G_p(a_p, a_{-p}) \\
&= \smashoperator{\sum_{a_{-p} \in \mathcal{A}_{-p}}} G_p(a^1_p, a_{-p}) - \smashoperator{\sum_{a_{-p} \in \mathcal{A}_{-p}}} G_p(a^2_p, a_{-p}) \\
&= A^\text{CE}_p(a^1_p, a^2_p, a_{-p})
\end{align*}
The feasible space is fully determined by the constraints, thus concludes the proof.
\end{proof}

\subsection{Constant-Sum Two-Player Premier League Ratings}

Consider the zero-sum win probability game (Figure~\ref{fig:premier_2p}). We can study how these ratings change when we vary the normalised approximation parameter, $\frac{\epsilon}{\epsilon^\text{uni}}$, for $\frac{\epsilon}{\epsilon^\text{uni}}$-MECCE (Figure~\ref{fig:premier_2p_epsilon}). When using $\frac{\epsilon}{\epsilon^\text{uni}}=1$ the uniform distribution is selected, and the payoff ratings are simply the mean performances against other clubs. When $\frac{\epsilon}{\epsilon^\text{uni}}$ is reduced towards zero, the rating becomes more game theoretic and the ratings change to reflect this. In particular, Leicester, Crystal Palace, and Newcastle all climb in rankings because they are in a weak cycle with Man City. Furthermore, the marginal masses, $\sigma(a_p)$, of many of the strategies tend to zero, with only a handful of clubs maintaining positive mass.

\subsection{General-Sum Two-Player Premier League Ratings}

The general-sum points (each club plays each other twice and score $3$ points for each win, $1$ for each draw and $0$ for each loss) game (Figure~\ref{fig:premier_general_2p}). This game is considered because it is not purely competitive: coordination exists because players would prefer mixing over two win-loss joint strategies ($1.5$ points each) rather than a single draw-draw joint strategy ($1$ point each).

We consider a particular NE equilibrium, the limiting logit equilibrium (LLE), which has which has a factorizable joint distribution (Figure~\ref{fig:premier_general_2p_lle_dist}). A key drawback of factorizable distributions is that they cannot coordinate with other players, and therefore miss out on opportunities to the value of the game. We can see that LLE has the lowest value of the solutions concepts we tested (Figure~\ref{fig:premier_general_2p_value}), even lower than uniform, while CCE has the highest, as the theory predicts. Therefore CCEs have the property that they can handle both competitive and cooperative rating.

\begin{figure*}[t!]
\centering

\begin{subfigure}[t]{0.24\linewidth}
    \centering
    \includegraphics[width=1.0\linewidth]{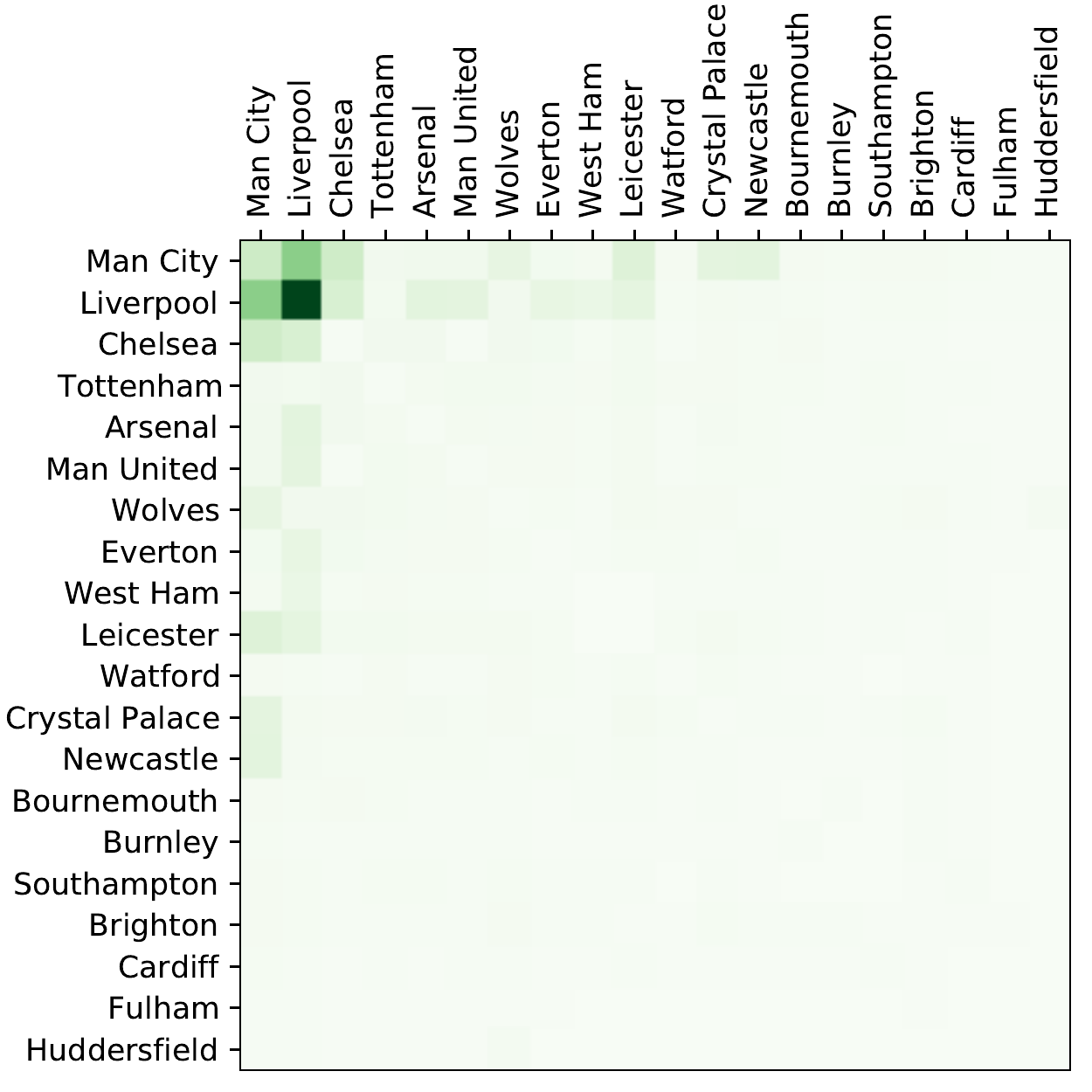}
    \vspace{-0.6cm}
    \caption{\centering $\alpha$-Rank $\sigma(a_1, a_2)$}
    \label{fig:premier_general_2p_alpharank_dist}
\end{subfigure}
\begin{subfigure}[t]{0.24\linewidth}
    \centering
    \includegraphics[width=1.0\linewidth]{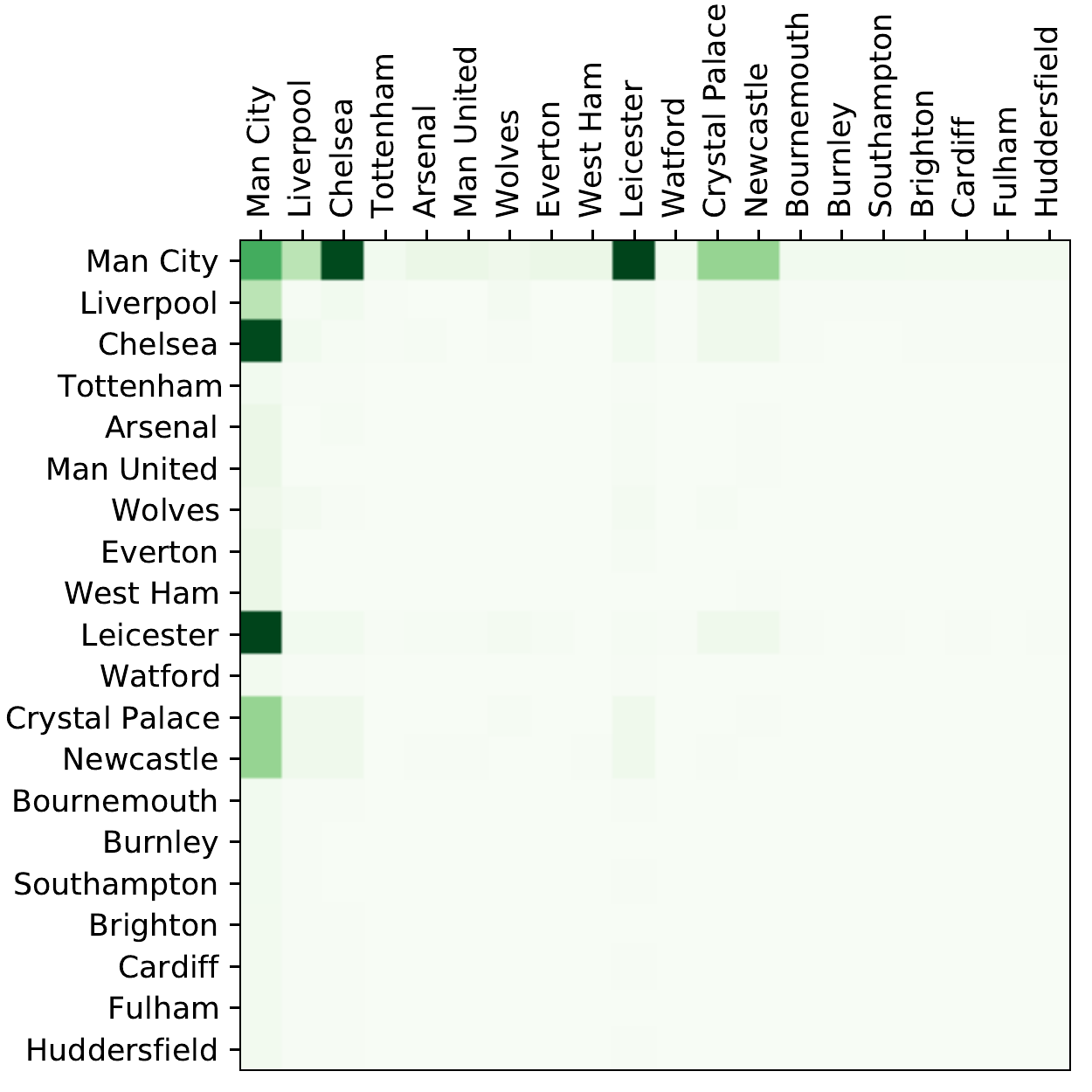}
    \vspace{-0.6cm}
    \caption{\centering $0.01$-MECCE $\sigma(a_1, a_2)$}
    \label{fig:premier_general_2p_cce_dist}
\end{subfigure}
\begin{subfigure}[t]{0.24\linewidth}
    \centering
    \includegraphics[width=1.0\linewidth]{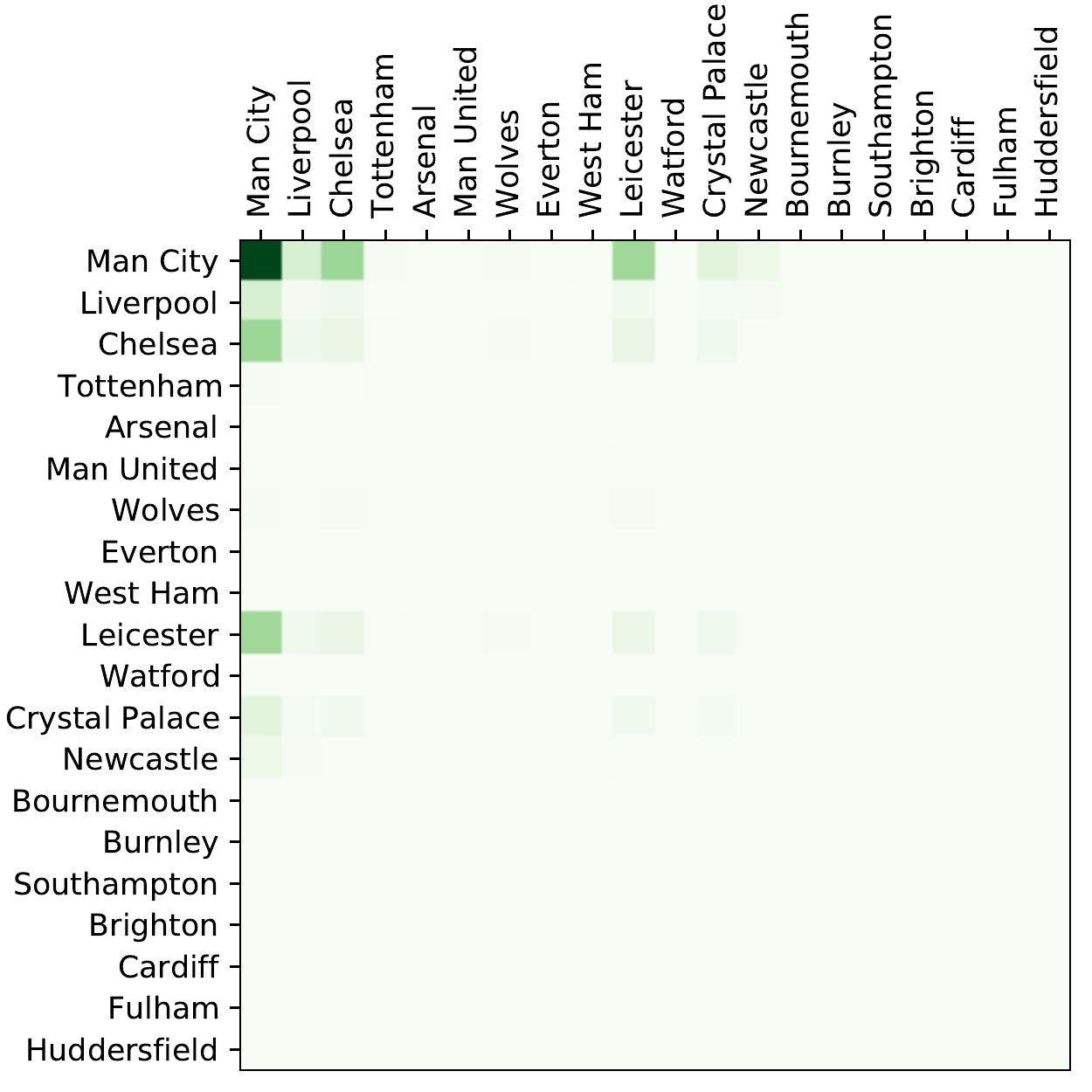}
    \vspace{-0.6cm}
    \caption{\centering $0.01$-MECE $\sigma(a_1, a_2)$}
    \label{fig:premier_general_2p_ce_dist}
\end{subfigure}
\begin{subfigure}[t]{0.24\linewidth}
    \centering
    \includegraphics[width=1.0\linewidth]{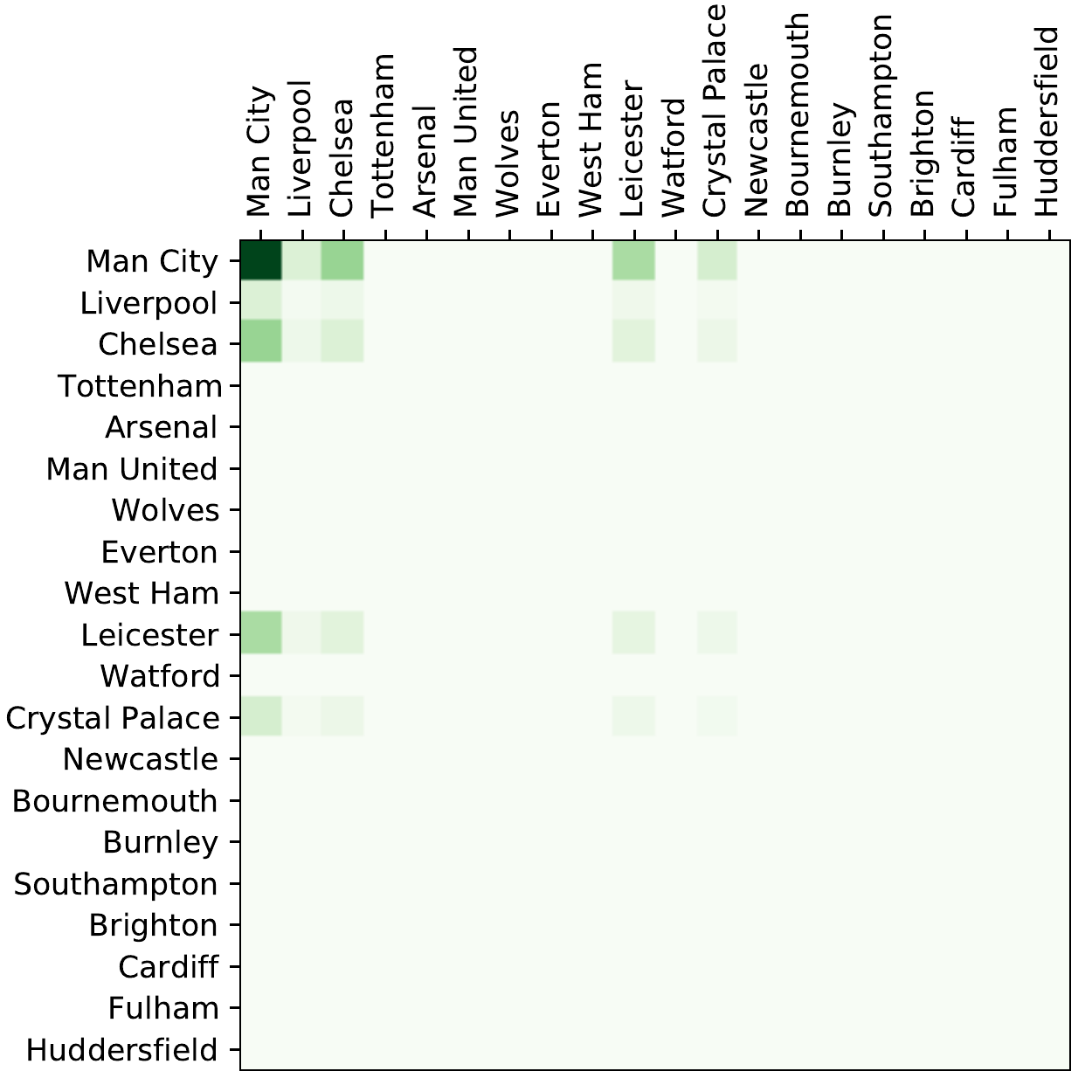}
    \vspace{-0.6cm}
    \caption{\centering LLE $\sigma(a_1, a_2)$}
    \label{fig:premier_general_2p_lle_dist}
\end{subfigure}

\begin{subfigure}[t]{0.24\linewidth}
    \centering
    \includegraphics[width=1.0\linewidth]{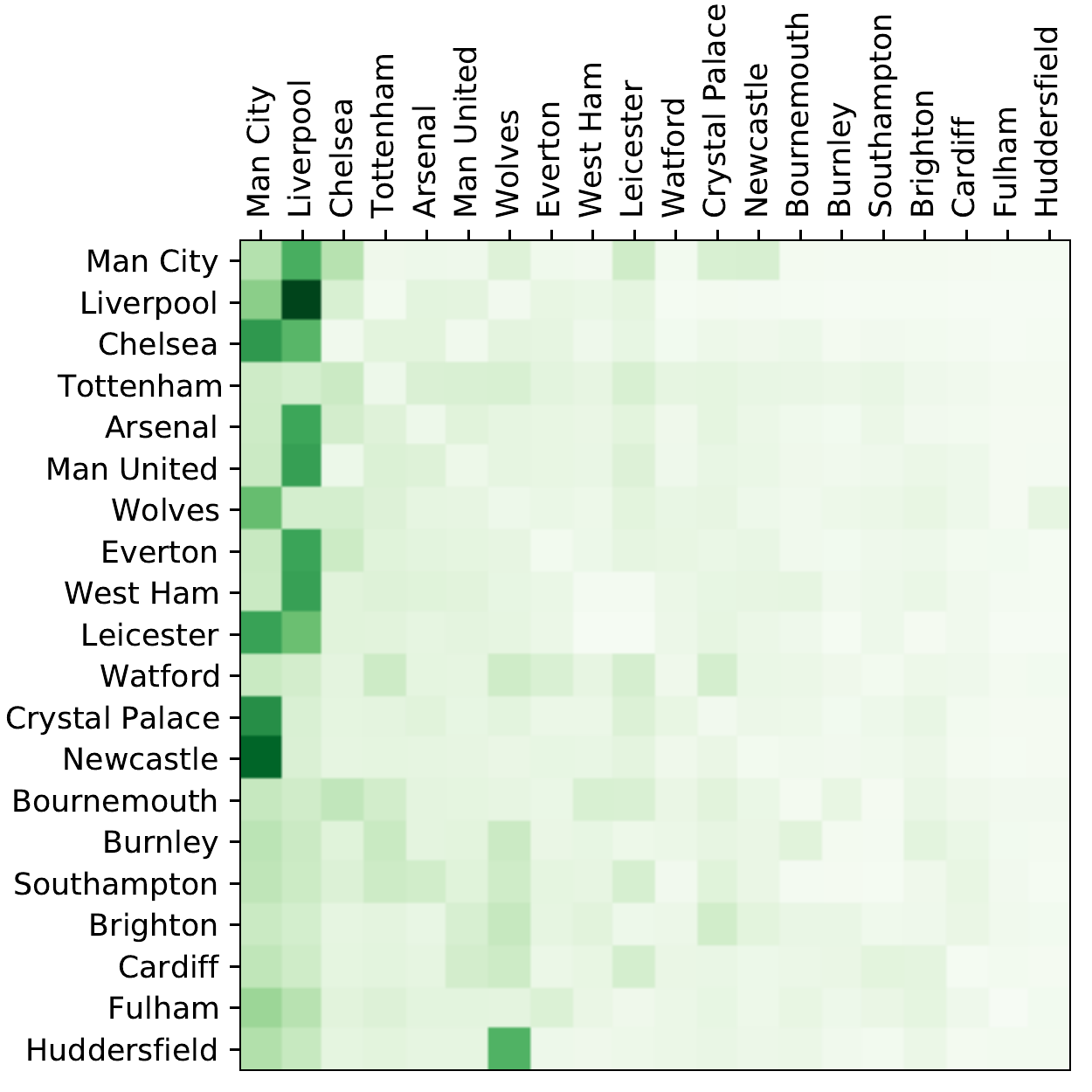}
    \vspace{-0.6cm}
    \caption{\centering $\alpha$-Rank $\sigma(a_2|a_1)$}
    \label{fig:premier_general_2p_alpharank_cond}
\end{subfigure}
\begin{subfigure}[t]{0.24\linewidth}
    \centering
    \includegraphics[width=1.0\linewidth]{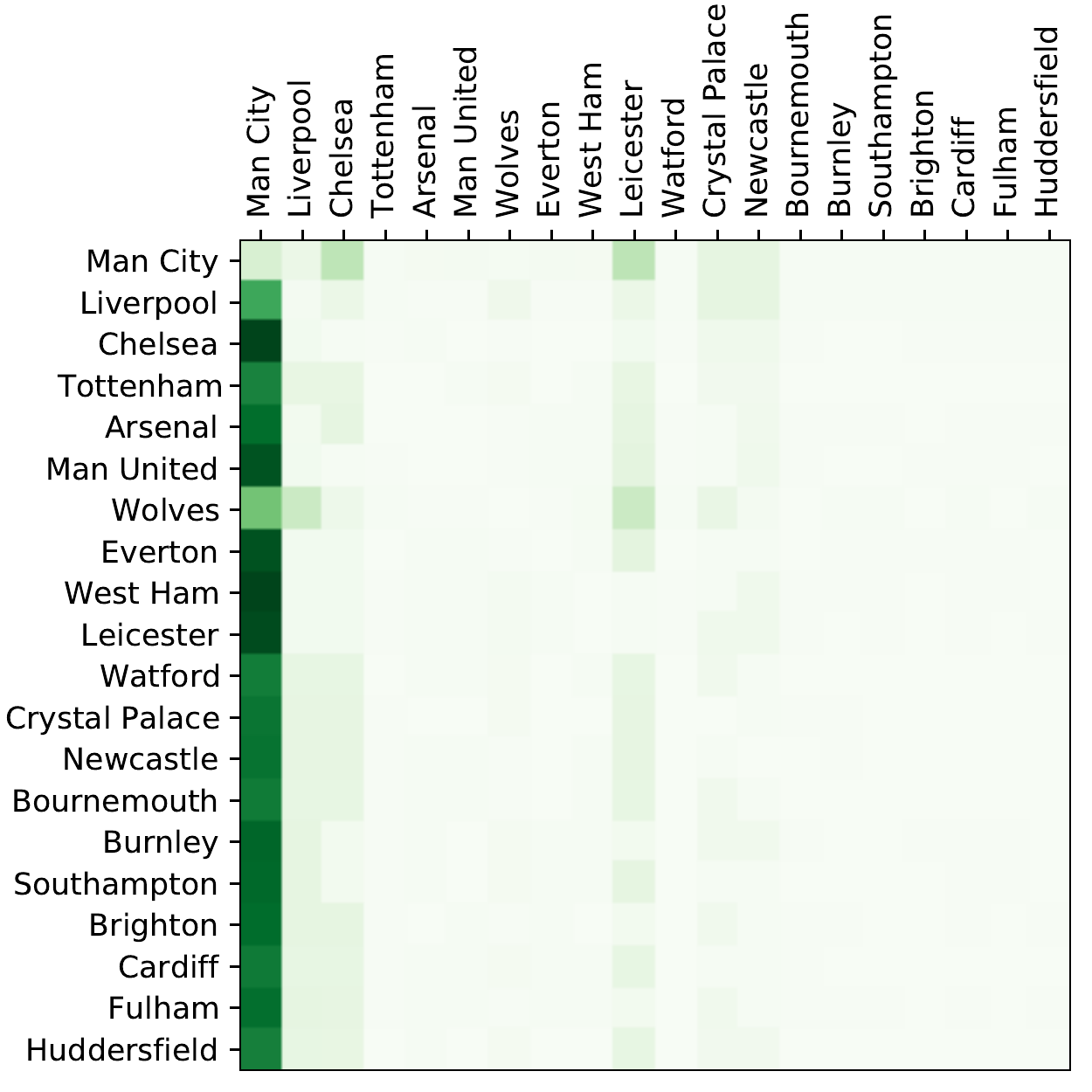}
    \vspace{-0.6cm}
    \caption{\centering $0.01$-MECCE $\sigma(a_2|a_1)$}
    \label{fig:premier_general_2p_cce_cond}
\end{subfigure}
\begin{subfigure}[t]{0.24\linewidth}
    \centering
    \includegraphics[width=1.0\linewidth]{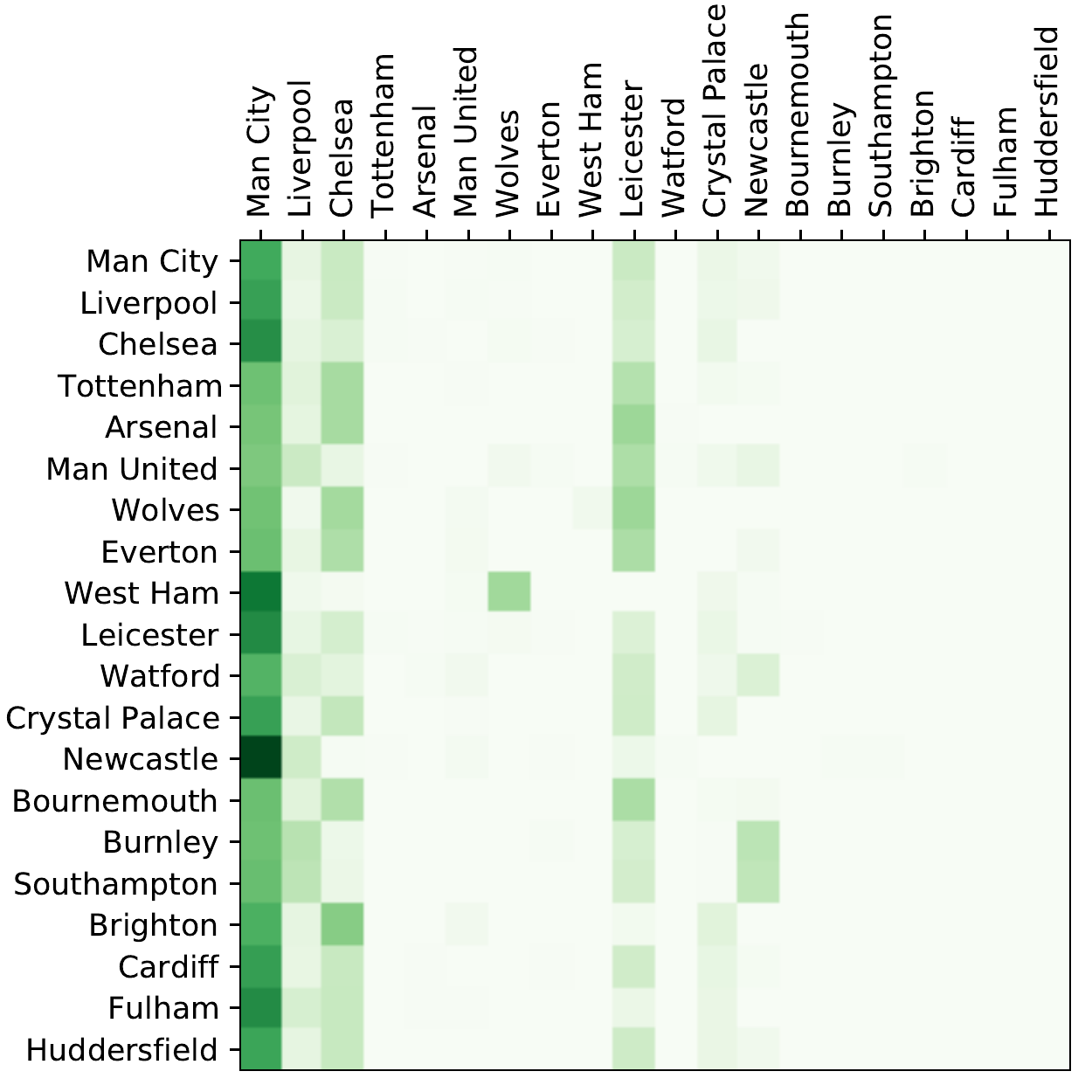}
    \vspace{-0.6cm}
    \caption{\centering $0.01$-MECE $\sigma(a_2|a_1)$}
    \label{fig:premier_general_2p_ce_cond}
\end{subfigure}
\begin{subfigure}[t]{0.24\linewidth}
    \centering
    \includegraphics[width=1.0\linewidth]{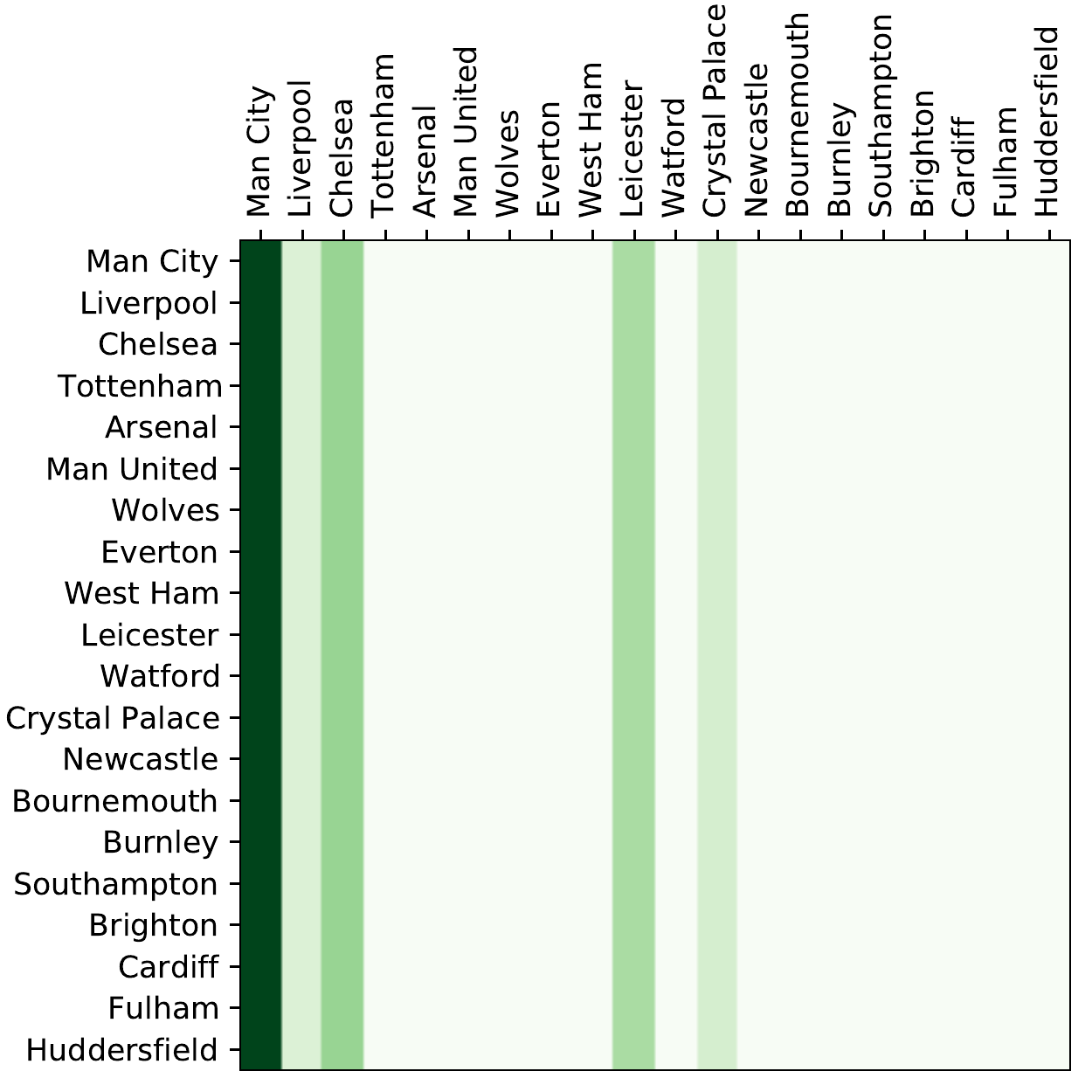}
    \vspace{-0.6cm}
    \caption{\centering LLE $\sigma(a_2|a_1)$}
    \label{fig:premier_general_2p_lle_cond}
\end{subfigure}

\begin{subfigure}[t]{0.24\linewidth}
    \centering
    \includegraphics[width=1.0\linewidth]{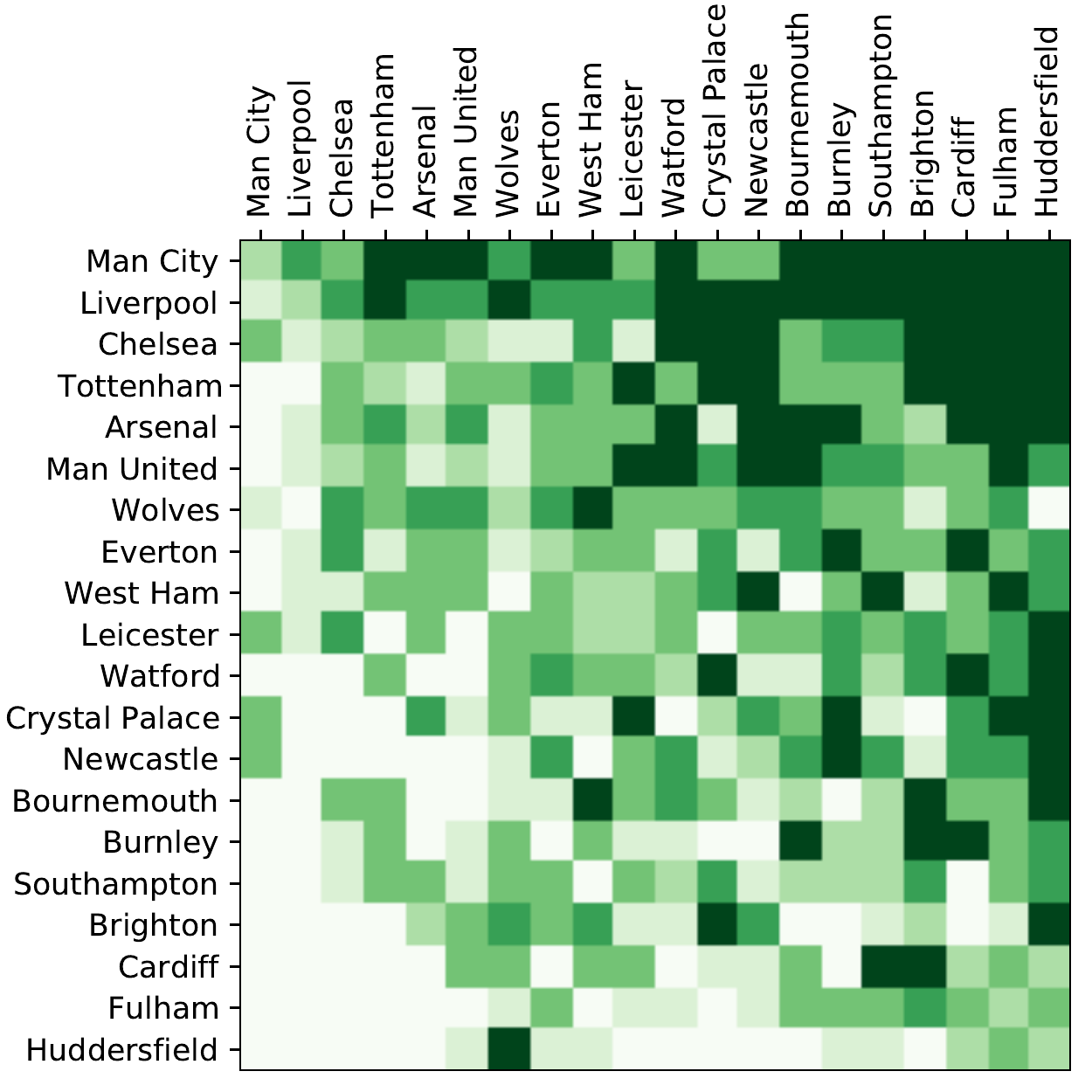}
    \vspace{-0.6cm}
    \caption{\centering P1 payoff, $G_1(a_1, a_2)$}
    \label{fig:premier_general_2p_payoff}
\end{subfigure}
\begin{subfigure}[t]{0.24\linewidth}
    \centering
    \includegraphics[width=1.0\linewidth]{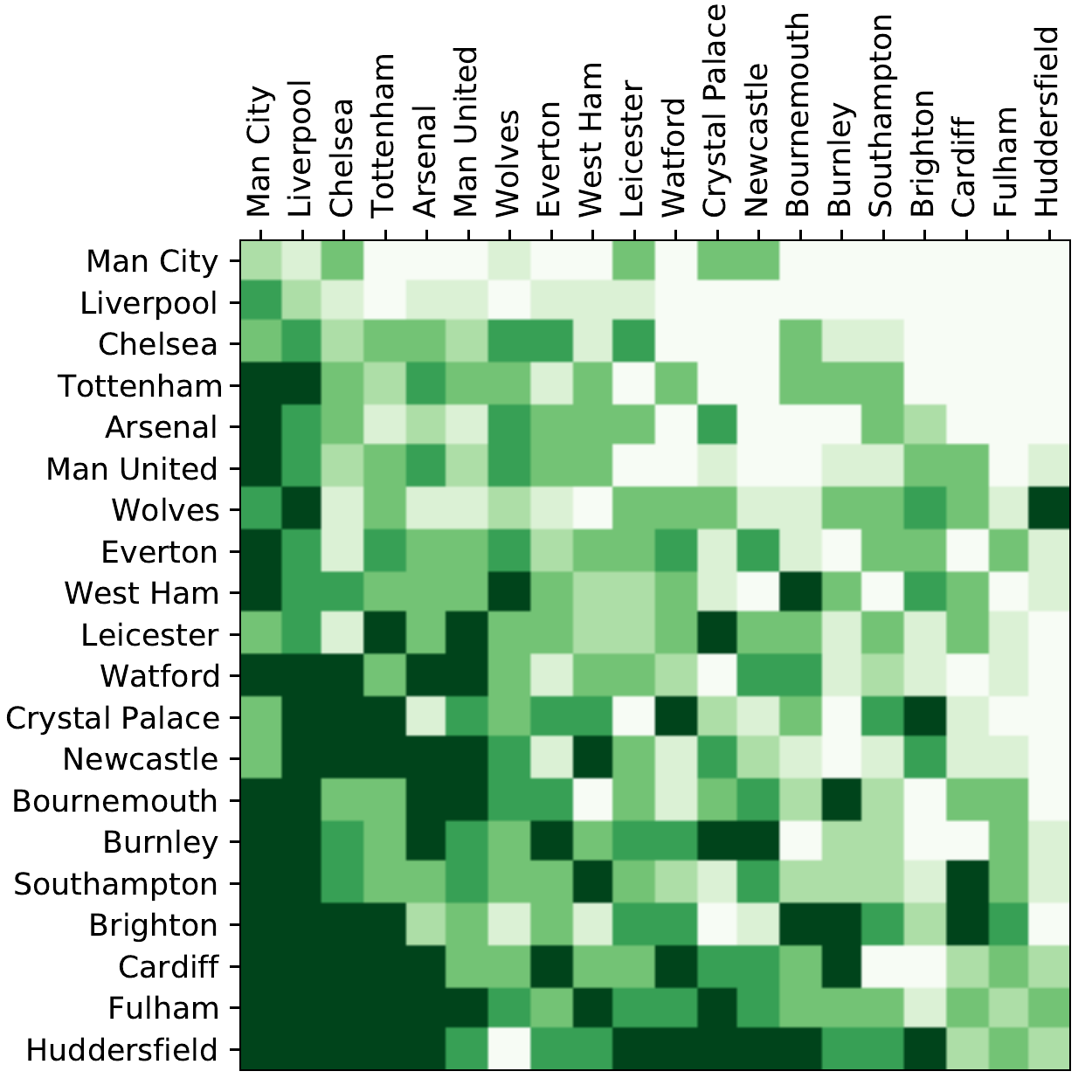}
    \vspace{-0.6cm}
    \caption{\centering P2 payoff, $G_1(a_1, a_2)$}
    \label{fig:premier_general_2p_payoff_2}
\end{subfigure}
\begin{subfigure}[t]{0.24\linewidth}
    \centering
    \includegraphics[width=1.0\linewidth]{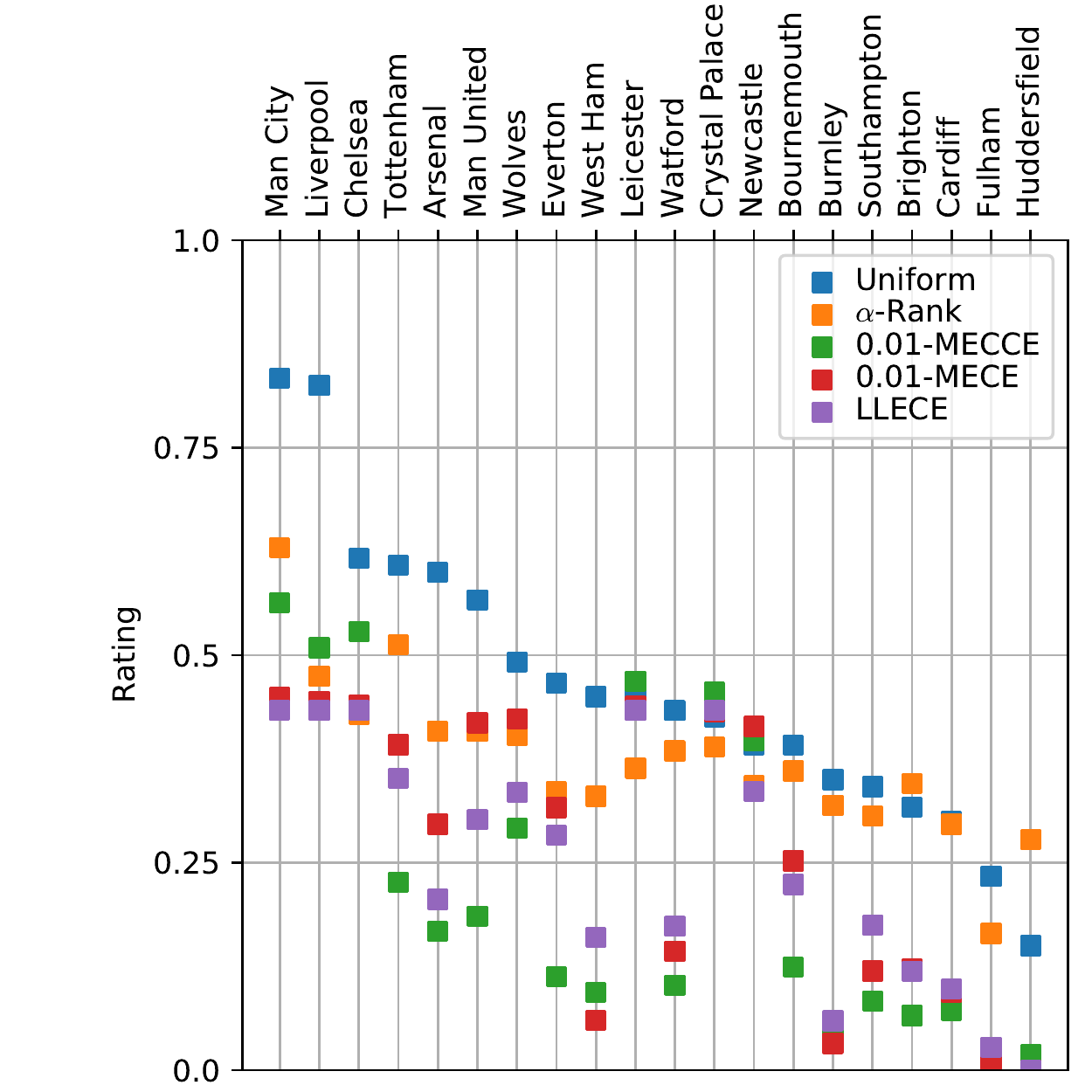}
    \vspace{-0.6cm}
    \caption{\centering Ratings, $r_1^\sigma(a_1)$}
    \label{fig:premier_general_2p_ratings}
\end{subfigure}
\begin{subfigure}[t]{0.24\linewidth}
    \centering
    \includegraphics[width=1.0\linewidth]{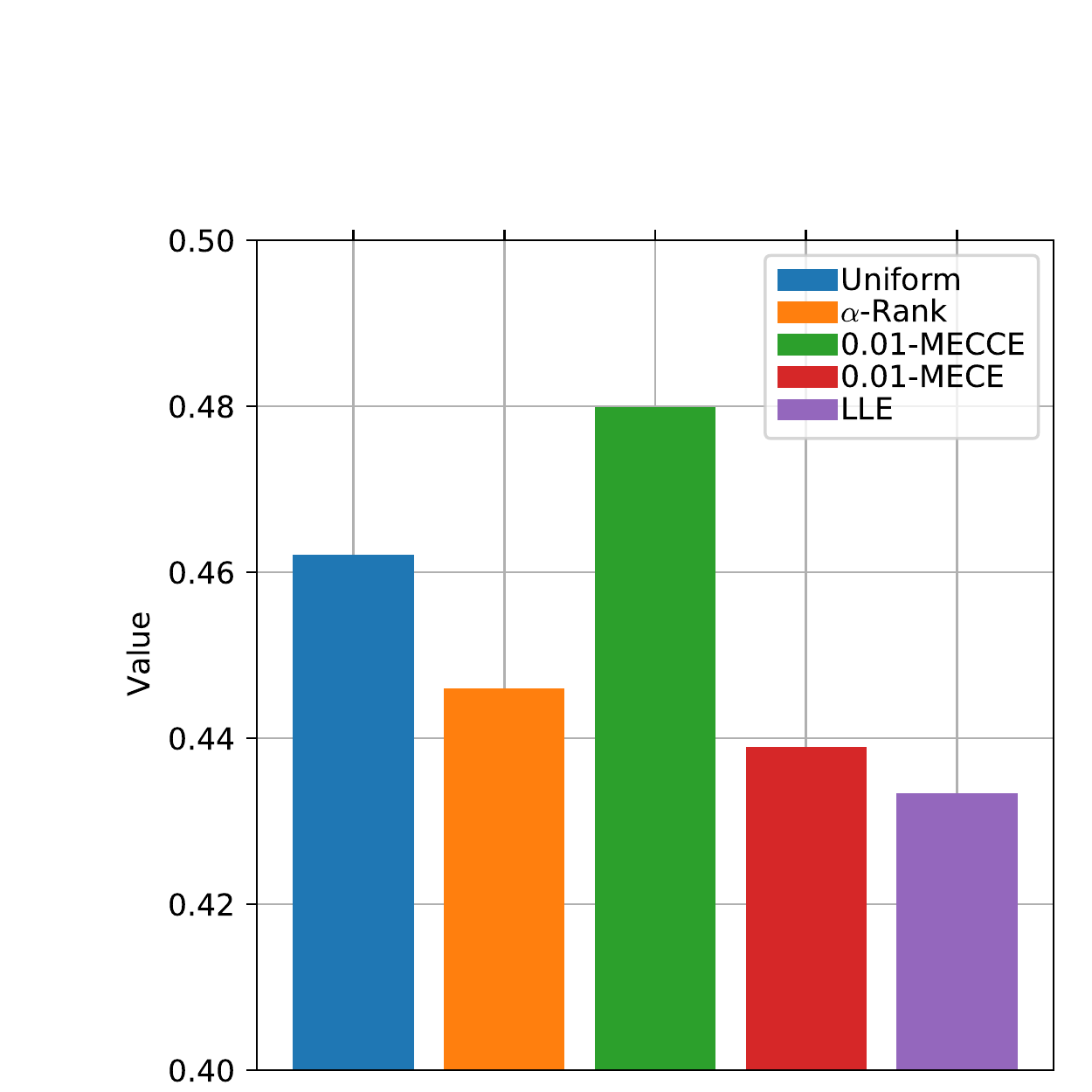}
    \vspace{-0.6cm}
    \caption{\centering Values, $\sum_{a \in \mathcal{A}} G_1(a) \sigma(a)$}
    \label{fig:premier_general_2p_value}
\end{subfigure}

\caption{Symmetric, two-player, general-sum Premier League game where players pick between clubs as strategies. The clubs are ordered according to their average points (3 points for a win, 1 for a draw, 0 for a loss). The payoff ratings have been scaled by a factor of $\frac{1}{6}$. The joint and conditional distributions for the different ratings are shown for comparison.}
\label{fig:premier_general_2p}
\end{figure*}

\begin{figure*}[t!]
\centering

\begin{subfigure}[t]{0.273\linewidth}
    \centering
    \includegraphics[width=1.0\linewidth]{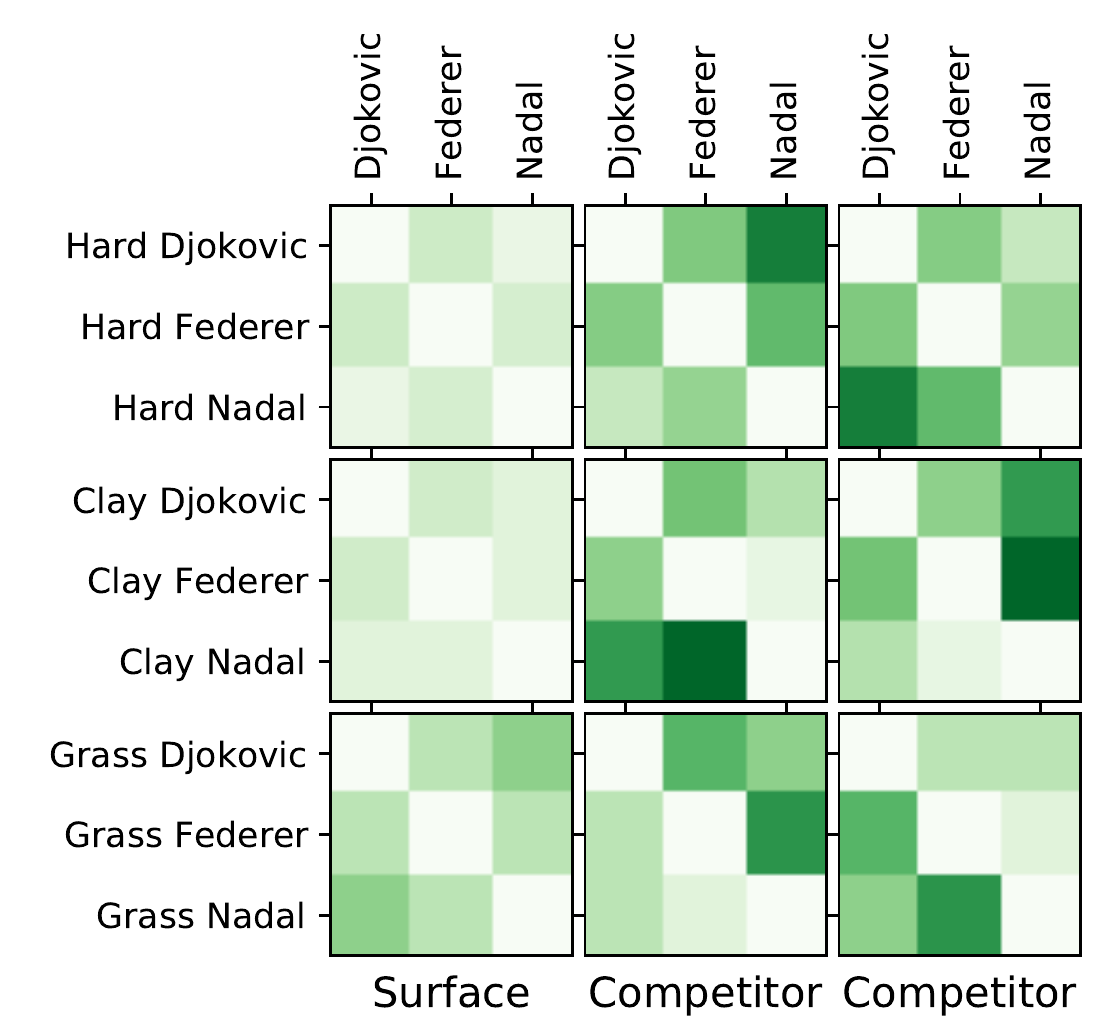}
    \caption{\centering Payoff, $G(a_1, a_2, a_3)$}
    \label{fig:atp_comp_3p_payoff}
\end{subfigure}
\begin{subfigure}[t]{0.143\linewidth}
    \centering
    \includegraphics[width=1.0\linewidth]{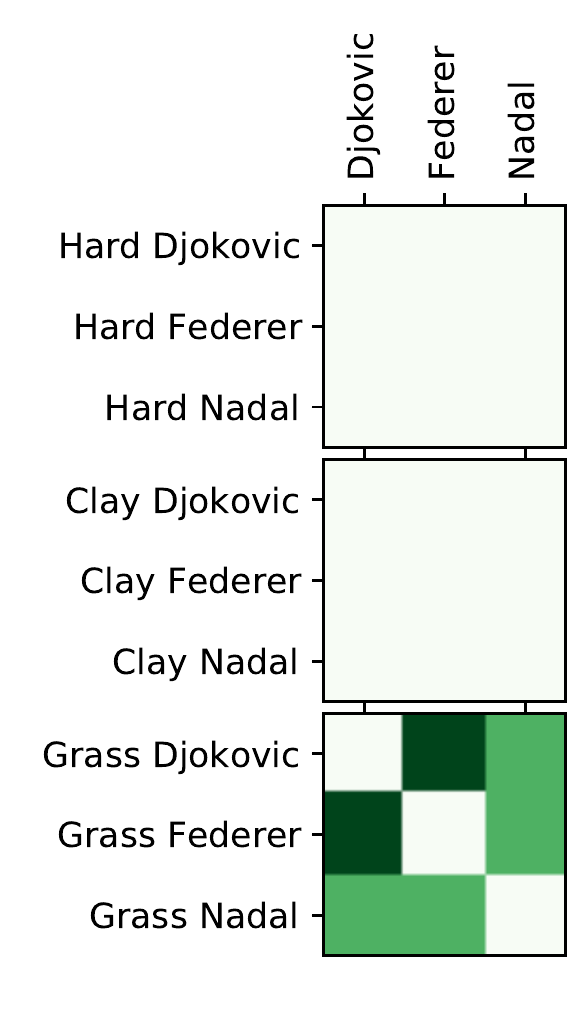}
    \caption{\centering $\alpha$-Rank}
    \label{fig:atp_comp_3p_alpharank_dist}
\end{subfigure}
\begin{subfigure}[t]{0.143\linewidth}
    \centering
    \includegraphics[width=1.0\linewidth]{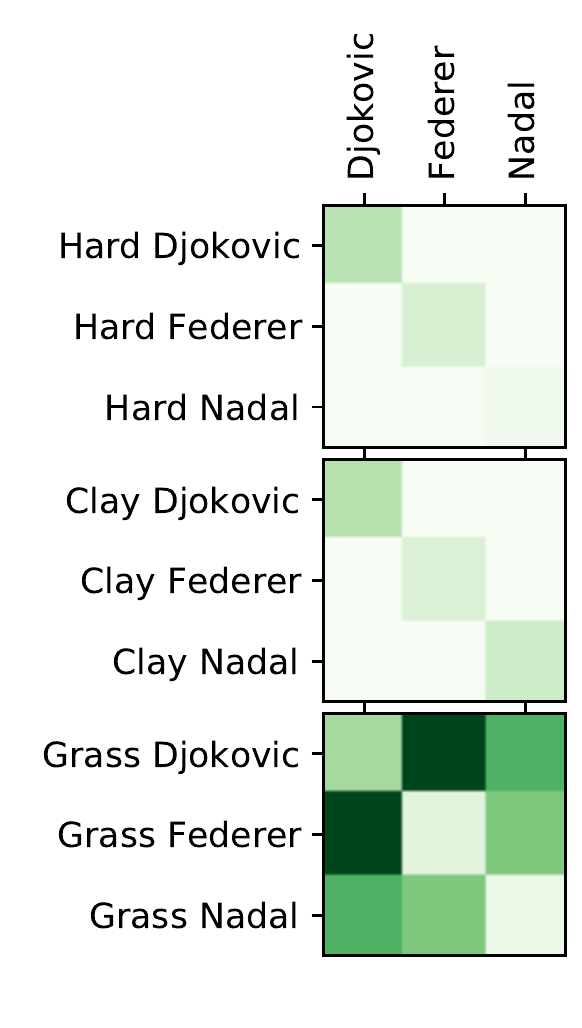}
    \caption{\centering MECCE}
    \label{fig:atp_comp_3p_mecce_dist}
\end{subfigure}
\begin{subfigure}[t]{0.143\linewidth}
    \centering
    \includegraphics[width=1.0\linewidth]{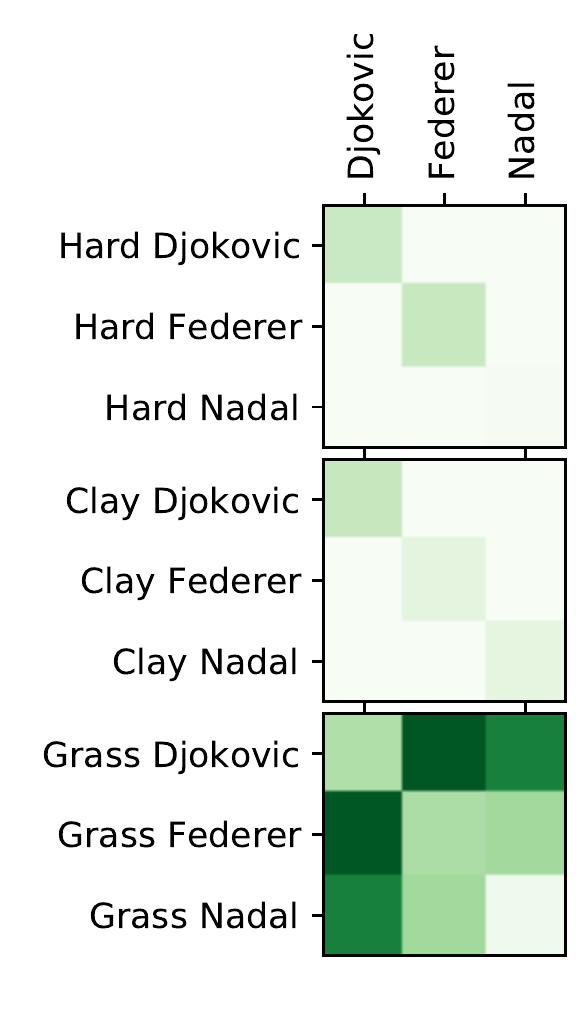}
    \caption{\centering MECE}
    \label{fig:atp_comp_3p_mece_dist}
\end{subfigure}
\begin{subfigure}[t]{0.273\linewidth}
    \centering
    \includegraphics[width=1.0\linewidth]{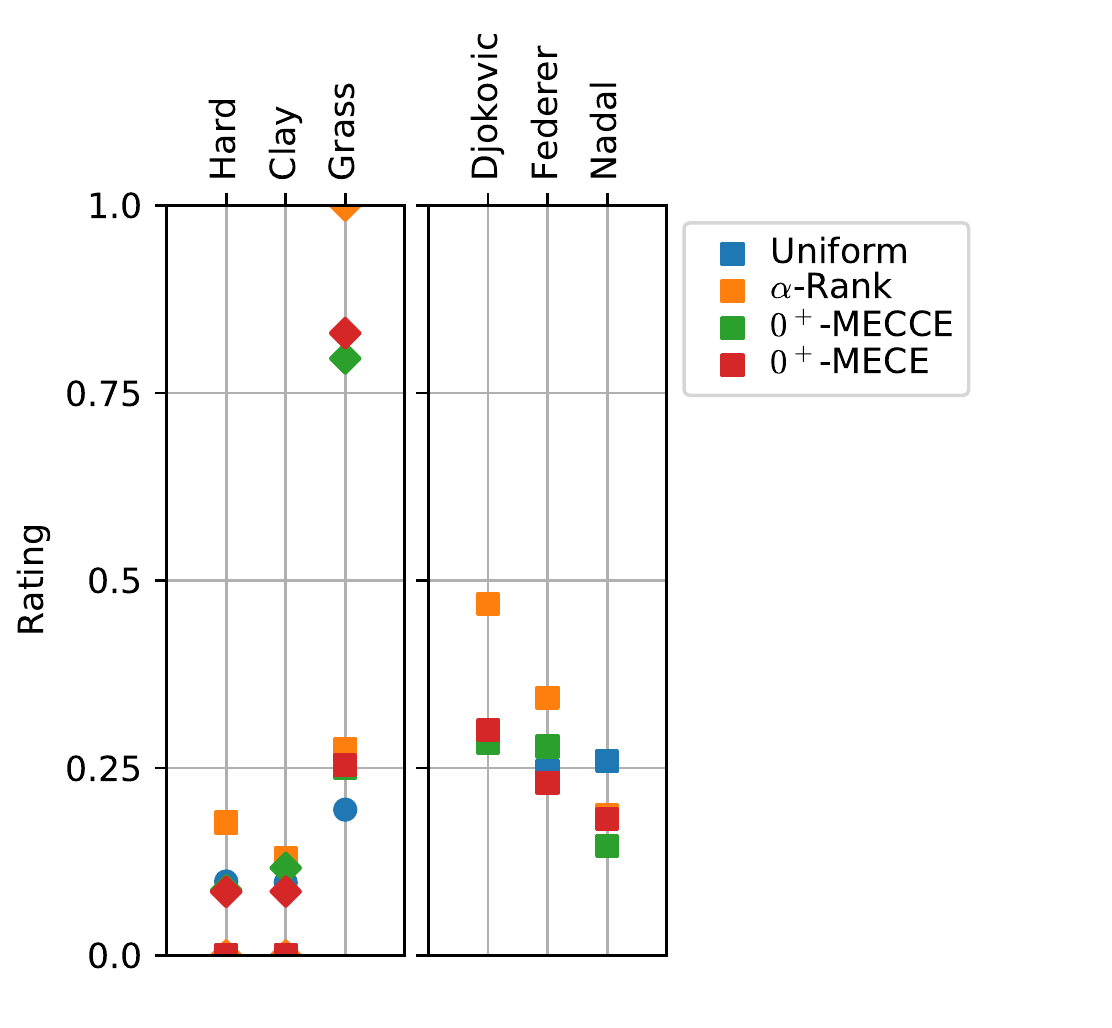}
    \caption{\centering Ratings}
    \label{fig:atp_comp_3p_ratings}
\end{subfigure}

\caption{Three-player tennis games where the surface player chooses the court surface the game is played on. Grass has the closest games so is favoured most by the surface player. The distributions of all three solution concepts are shown.}
\label{tab:atp_3p}
\end{figure*}

\subsection{Three-Player ATP Tennis Ratings}
\label{subsec:atp_3p}

Using data from 2000-2020 ATP Tennis tournaments we studied the ratings of three competitors (Djokovic, Federer and Nadal) and the surfaces they play on (Hard, Clay and Grass), resulting in a three-player game: surface vs competitor vs competitor.

The surface a player competes on is a large factor of the game, however Elo, the traditional method of rating players, ignores this dependency. Out of the 144 games between these competitors in the dataset we observed, 84 were on hard, 48 were on clay and, 12 were on grass surfaces. A transitive rating system, like Elo, is susceptible to this distribution and therefore favours players who have a strong hard surface game.

We studied an imagined game (Figure~\ref{fig:atp_comp_3p_payoff}) where the surface player gets the ``win'' if the match goes to tiebreak, shares half a point with the winning player if there is a single set difference between the winning and losing competitor, otherwise the winning competitor gets the win. Pairing between competitors and themselves are given zero points. Intuitively, if the match is sufficiently close, the surface ``wins'' because it is too difficult for the competitors to break.

The grass surface results in the closest matches and therefore provides the most points to the surface player, and therefore receives the majority of the distribution mass. This means that the players are mainly evaluated according to their performance on grass, of which Djokovic is the strongest (Figure~\ref{fig:atp_comp_3p_ratings}).

% \subsection{Extensive Form Games}

% \lm{Pull games from OpenSpiel}

% ...

\begin{figure*}[t]
\centering

\begin{subfigure}[t]{0.48\linewidth}
    \centering
    \includegraphics[width=1.0\linewidth]{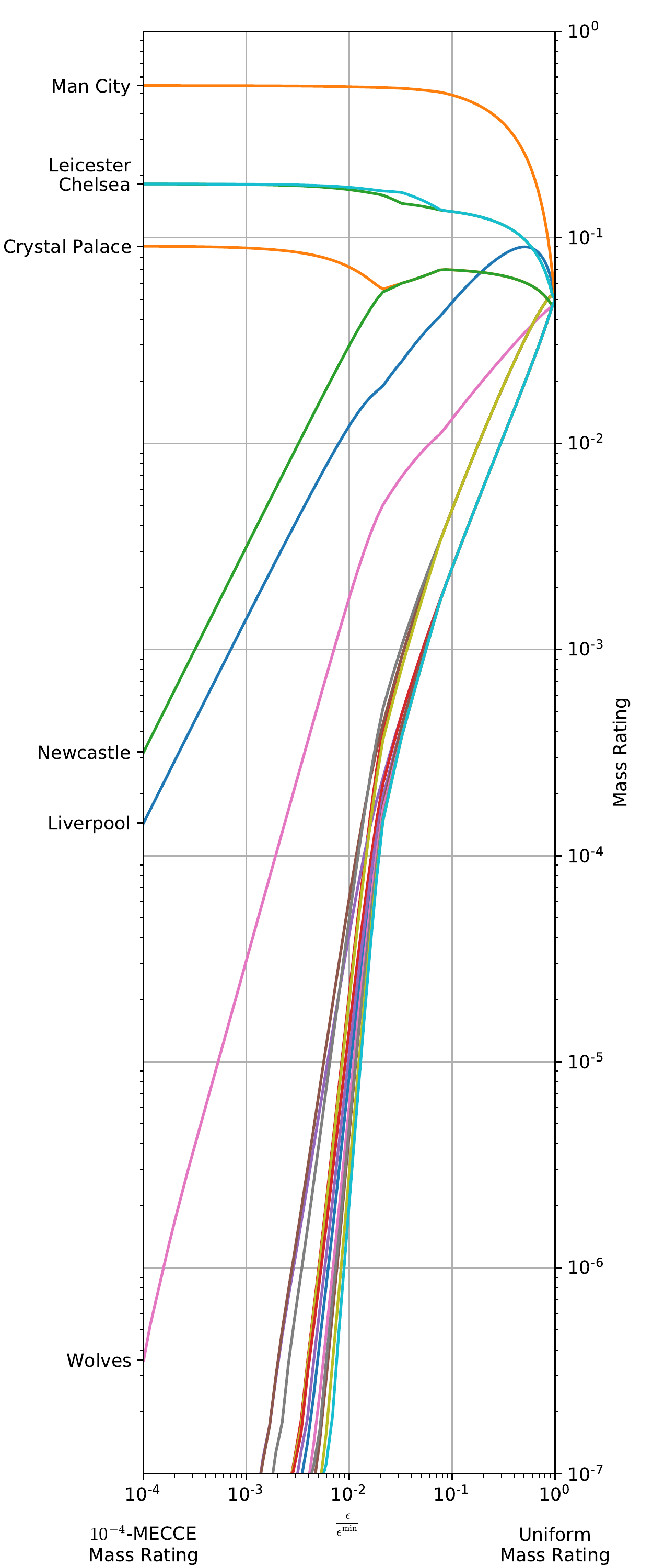}
    \caption{Mass ratings when varying $\frac{\epsilon}{\epsilon^\text{uni}}$.}
    \label{fig:premier_2p_epsilon_mass_ratings}
\end{subfigure}
\begin{subfigure}[t]{0.48\linewidth}
    \centering
    \includegraphics[width=1.0\linewidth]{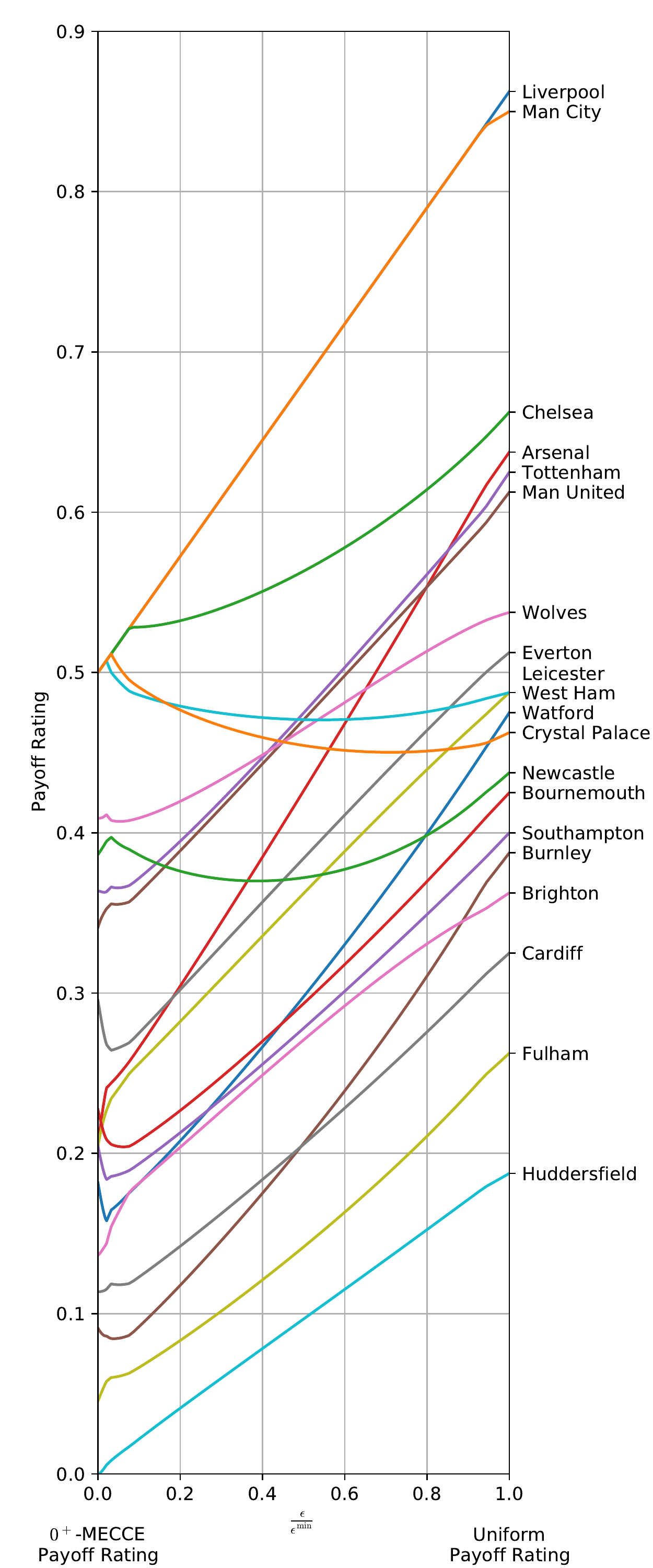}
    \caption{Payoff ratings when varying $\frac{\epsilon}{\epsilon^\text{uni}}$.}
    \label{fig:premier_2p_epsilon_payoff_ratings}
\end{subfigure}

\caption{Shows how $\frac{\epsilon}{\epsilon^\text{uni}}$-MECCE mass (marginals of the joint) and payoff rating varies over normalized $\epsilon$ for two-player, zero-sum Premier League ratings. When $\frac{\epsilon}{\epsilon^\text{uni}}=0^+$, $0^+$-MECCE payoff rating is recovered, when $\frac{\epsilon}{\epsilon^\text{uni}}=1$, uniform payoff rating is recovered. Because this game is two-player, zero-sum, the $0$-MECCE is equal to $0$-MENE, which is the definition of Nash Average. Lower values of $\frac{\epsilon}{\epsilon^\text{uni}}$ result in greater attention to cycles in the payoff table. Some clubs see their rankings improved as $\frac{\epsilon}{\epsilon^\text{uni}}$ is reduced; in particular Leicester, Crystal Palace and Newcastle which draw with Man City.}
\label{fig:premier_2p_epsilon}
\end{figure*}

\end{document}